\tikzset{unseulcoin fill/.style={append after command={
   \pgfextra
        \draw[sharp corners, fill=#1, color=#1, line width = 0mm]%
    (\tikzlastnode.west)%
    [rounded corners=0pt] |- (\tikzlastnode.north)%
    [rounded corners=0pt] -| (\tikzlastnode.east)%
    [rounded corners=0pt] |- (\tikzlastnode.south)%
    [rounded corners=5pt] -| (\tikzlastnode.west);
   \endpgfextra}}}
\tikzset{black node/.style = {draw=black,fill=black,circle, inner sep=0, minimum size=0.15cm}}
\newtheorem{theorem}{Theorem}[section]
\newtheorem{corollary}[theorem]{Corollary}
\newtheorem{definition}[theorem]{Definition}
\newtheorem{lemma}[theorem]{Lemma}
\newtheorem{observation}[theorem]{Observation}
\newtheorem{remark}[theorem]{Remark}
\newtheorem{claim}[theorem]{Claim}
\def\cqedsymbol{\ifmmode$\lrcorner$\else{\unskip\nobreak\hfil
\penalty50\hskip1em\null\nobreak\hfil$\lrcorner$
\parfillskip=0pt\finalhyphendemerits=0\endgraf}\fi} 
\newcommand{\cqed}{\renewcommand{\qed}{\cqedsymbol}}
\newcommand{\N}{\mathbb{N}}
\newcommand{\intv}[2]{\left \{ #1,\dots, #2\right \}}
\newcommand{\Prop}{{\rm Prop}}
\newcommand{\cN}{\mathcal{N}}
\DeclareMathOperator{\polylog}{polylog}
\DeclareMathOperator{\rank}{{\sf rank}}
\DeclareMathOperator{\ribs}{{\sf ribs}}
\DeclareMathOperator{\depth}{{\sf depth}}
\newcommand{\pow}{*}
\newcommand{\floor}[1]{\left \lfloor #1 \right \rfloor}
\newcommand{\cst}{8}
\let\le\leqslant
\let\ge\geqslant
\let\leq\leqslant
\let\geq\geqslant
\title{Long induced paths and forbidden patterns: Polylogarithmic bounds}
\author[J.~Duron]{Julien Duron}
\address[J.~Duron]{Univ.\ Lyon, CNRS, ENS de Lyon, Université Claude Bernard Lyon 1, LIP UMR5668,
  Lyon, France}
\email{julien.duron@ens-lyon.fr}
\author[L.~Esperet]{Louis Esperet}
\address[L.~Esperet]{Univ.\ Grenoble Alpes, CNRS, Laboratoire G-SCOP,
  Grenoble, France}
\email{louis.esperet@grenoble-inp.fr}
\author[J.-F.~Raymond]{Jean-Florent Raymond}
\address[J.-F.~Raymond]{Univ.\ Lyon, CNRS, ENS de Lyon, Université Claude Bernard Lyon 1, LIP UMR5668,
  Lyon, France}
\email{jean-florent.raymond@cnrs.fr}
\date{\today}
\thanks{The authors are partially supported by the French ANR Projects TWIN-WIDTH
  (ANR-21-CE48-0014-01) and GRALMECO (ANR-21-CE48-0004), and by LabEx
  PERSYVAL-lab (ANR-11-LABX-0025).}
\begin{document}

\maketitle
\begin{abstract}
Consider a graph $G$ with a long path $P$. When is it the case that $G$ also contains a long induced path? This question has been investigated in general as well as within a number of different graph classes since the 80s. We have recently observed in a companion paper (\emph{Long induced paths in sparse graphs and graphs with forbidden patterns}, \texttt{arXiv:2411.08685}, 2024) that most existing results can be recovered in a simple way by considering forbidden ordered patterns of edges along the path $P$. In particular, we proved that if we forbid some fixed ordered matching along a path of order $n$ in a graph $G$, then $G$ must contain an induced path of order $(\log n)^{\Omega(1)}$. Moreover, we completely characterized the forbidden ordered patterns forcing the existence of an induced path of polynomial size. 

The purpose of the present paper is to completely characterize the ordered patterns $H$ such that forbidding $H$ along a path $P$ of order $n$ implies the existence of an induced path of order $(\log n)^{\Omega(1)}$.  These patterns are star forests with some specific ordering, which we call \emph{constellations}.  

As a direct consequence of our result, we show that if a graph $G$ has a path of length $n$ and  does not contain $K_t$ as a topological minor, then $G$ contains an induced path of order $(\log n)^{\Omega(1/t \log^2 t)}$. The previously best known bound was $(\log n)^{f(t)}$ for some unspecified function $f$ depending on the  Topological Minor Structure Theorem of Grohe and Marx (2015). 
\end{abstract}

\section{Introduction}

Consider a graph $G$ with a long path $P$. When is it the case that $G$ also contains a long induced path? Complete bipartite graphs must be forbidden as subgraphs, since these graphs have long paths but no induced paths of order 3. A classical result of Galvin, Rival, and Sands \cite{galvin1982ramsey} states that if $G$ contains an $n$-vertex path and is $K_{t,t}$-subgraph free, then it contains an induced path of order $\Omega((\log \log \log n)^{1/3})$. In the companion paper \cite{I}, we recently improved this bound to $\Omega\big( (\frac{\log \log n}{\log \log \log n})^{1/5} \big)$. This was further improved to $\Omega\big( \frac{\log \log n}{\log \log \log n}\big)$ by Hunter, Milojevi\'c, Sudakov, and Tomon in \cite{HMST}.

\smallskip

This question has also been investigated extensively when $G$ belongs to a specific graph class, such as outerplanar graphs,   planar graphs and graphs of bounded genus \cite{arocha2000long,esperet2017long,GLM16}, graphs of bounded pathwidth or treewidth \cite{esperet2017long,hilaire2023long}, degenerate graphs \cite{defrain2024sparse,nevsetvril2012sparsity}, and graphs excluding a minor or topological minor \cite{hilaire2023long}. 

In \cite{I}, we recently observed that most of the known results on this question can be recovered in a simple and unified way by considering the following variant of the problem. Consider that the vertices of the $n$-vertex path $P$ in $G$ are ordered following their occurrence in $P$. What forbidden ordered subgraphs in $G-E(P)$ force the existence of a long induced path in $G$? In \cite{I}, we showed that forbidding ordered matchings yields an induced path of order $n^{\Omega(1)}$ or $(\log n)^{\Omega(1)}$, depending on the structure of the matching. This was enough to imply all previously known results in the area, except a result of \cite{hilaire2023long} on graphs excluding a topological minor. 

As an example, Figure \ref{fig:planar}, left, depicts an ordered matching which is forbidden in any planar graph, in the sense above.  Here and in the remainder of the paper, the vertices are ordered from left to right on the figure. To see why such an ordered matching is forbidden, observe that adding the edges of the path $P$, we obtain a $K_{3,3}$-minor (see Figure \ref{fig:planar}, right), which contradicts planarity. A number of other natural graph classes, including graphs of bounded genus, treewidth, or pathwidth, similarly avoid some ordered matching, in the sense above. Therefore, any such graph with a path of order $n$ contains an induced path of order $(\log n)^{\Omega(1)}$. 

\medskip

 \begin{figure}[htb]
  \centering
    \includegraphics[scale=1.5]{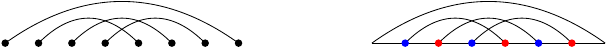}
  \caption{A forbidden ordered matching in planar graphs.\label{fig:planar}}
\end{figure}

In this paper, we completely characterize the forbidden ordered subgraphs yielding induced paths of order  $(\log n)^{\Omega(1)}$. These graphs are star forests with a specific vertex ordering, which we call \emph{constellations}. Our proof has two parts: we first show that forbidding a constellation yields induced paths of order  $\polylog(n)$, and we then construct a graph without any of these constellations, which has no induced path of order $\Omega((\log \log n)^2)$. The construction is inspired by the recent construction of \cite{defrain2024sparse} of a 2-degenerate Hamiltonian $n$-vertex graph without induced path of order $\Omega((\log \log n)^2)$.

As a direct consequence of our main result, we obtain that graphs which do not contain $K_t$ as a topological minor, and which contain a path on $n$ vertices, also contain an induced path of order $(\log n)^{\Omega(1/t\log^2 t)}$. This simplifies and improves upon an earlier result of \cite{hilaire2023long}, in which the exponent was an unspecified function of $t$ (relying on structure theorems of Robertson and Seymour, and Grohe and Marx). In the particular case of forbidden $K_t$-minors, this also improves upon the bound of order $(\log n)^{\Omega(1/t^2)}$ obtained in \cite{I} using forbidden ordered matchings (we note that the proof of the weaker bound in \cite{I} is significantly simpler than the proof of the stronger bound obtained in the present paper).

\subsection*{Organization of the paper}
In \Cref{sec:prelim} we introduce the necessary tools and definitions. 
Our main result, a proof that graphs with long paths and without constellations have long induced paths (Theorem \ref{th:peel_omega}), is proved in Section  \ref{sec:stars}. The construction showing that constellations are the only ordered subgraphs whose avoidance yields induced paths of polylogarithmic size is given in Section \ref{sec:loglogsqr}.
We conclude with some additional remarks in \Cref{sec:open}.

\section{Preliminaries}
\label{sec:prelim}

In all the paper, to avoid any ambiguity we always consider the \emph{order} of a path $P$ (its number of vertices), denoted by $|P|$. We never refer to the length (number of edges) of a path. 

\medskip

Logarithms are in base 2. As we will often be using several levels of exponentiations, it will sometimes be more convenient to write exponentiation in-line: we will then write $a\pow b$ instead of $a^b$. We omit parentheses for the sake of readability but it should be implicit that $\pow$ is not associative and $n_1 \pow n_2\pow \cdots  \pow n_k=n_1 \pow (n_2 \pow (\cdots \pow n_k )\cdots ) $. Similarly, we use $\log^{(i)}$ to denote the logarithm iterated $i$ times. That is, $\log^{(0)}$ is the identity function and for every integer $i\geq 1$ and every $x$ s.t.\ $\log^{(i-1)}(x)>0$, $\log^{(i)} x = \log (\log^{(i-1)} x)$.

\subsection*{Forbidden patterns}

An \emph{ordered graph} is a graph with a total order on its
vertex set. Consider an ordered graph $G$ with order
$v_1,\ldots,v_n$ and an ordered graph $H$ with order
$u_1,\ldots,u_k$. We say that $G$ \emph{contains $H$ as an ordered subgraph}
if
there exist $1\le a_1<a_2<\ldots <a_k\le n$ such that for all $1\le
i,j\le k$, if $u_{i}$ is adjacent to $u_{j}$ in $H$, then $v_{a_i}$ is
adjacent to $v_{a_j}$ in $G$. In othe words, $H$ appears as a subgraph in
$G$ in such a way that  the ordering of the copy of $H$ in  $G$ is consistent with the
ordering of $G$.

 \medskip

Let $G$ be a graph and $P=v_1,v_2,\ldots,v_n$ be a Hamiltonian path in
$G$. Note that $P$ allows us to consider $G$ and $G-E(P)$ (the spanning
subgraph of $G$ obtained by removing the edges of $P$) as ordered
graphs, that is with $v_i\prec v_j$ if and only if $i<j$. Given an ordered graph $H$, we say that \emph{$(G,P)$ contains $H$ as a
pattern} if the ordered graph $G-E(P)$ contains $H$ as an ordered
subgraph. If $(G,P)$ does not contain $H$ as a pattern, we say that \emph{$(G,P)$
  avoids the pattern $H$}. When $P$ is clear from the context we simply say that $G$ contains or avoids the pattern $H$ (but in all such instances we really mean that $H$ is a pattern with respect to some Hamiltonian path $P$, so the edges of $P$ are not part of the pattern). 
  
  \smallskip
  
  In \cite{I} we studied the function $g_H(n)$ defined as the maximum integer $k$ such that for every graph $G$ with a path $P$ of order $n$ that avoids $H$ as a pattern, $G$ has an induced path of order at least $k$.
Observe that we can assume  that $P$ is a Hamiltonian path in $G$ (by considering the subgraph of $G$ induced by $P$ instead of $G$).
If $H=K_2$ then $G$ is precisely an induced path on $n$ vertices, so $g_{K_2}(n)=n$. 

\medskip

Let $A$ and $B$ be two ordered graphs. The \emph{concatenation} of $A$
and $B$, denoted by $A\cdot B$, is the ordered graph whose graph is the
disjoint union of $A$ and $B$, and whose order consists of the ordered
vertices of $A$, followed by the ordered vertices of $B$. We will need the following simple result, proved in \cite{I}. 

\begin{lemma}[\cite{I}]\label{lem:concat}
Let $A$ and $B$ be two ordered graphs. Then, for any integer $n\ge 0$,
\[g_{A\cdot B}(n)\ge \min\{g_A(\lfloor n/2\rfloor),g_B(\lceil n/2\rceil)\}.\]
\end{lemma}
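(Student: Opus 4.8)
The plan is to take an arbitrary graph $G$ with a Hamiltonian path $P=v_1,\dots,v_n$ that avoids $A\cdot B$ as a pattern, cut $P$ at its midpoint, and show that (at least) one of the two halves already avoids one of $A$, $B$; the desired induced path then falls out of the definition of $g_A$ or $g_B$ applied to that half. As noted in the excerpt, passing to $G[V(P)]$ lets us assume $P$ is Hamiltonian, and an induced path of an induced subgraph of $G$ is automatically an induced path of $G$, so these reductions are harmless. (For $n\le 1$ the inequality is trivial, so the point is the generic case.)

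First I would set $P_1=v_1,\dots,v_{\lfloor n/2\rfloor}$ and $P_2=v_{\lfloor n/2\rfloor+1},\dots,v_n$, and let $G_1=G[V(P_1)]$, $G_2=G[V(P_2)]$, so that $P_1$ is a Hamiltonian path of $G_1$ of order $\lfloor n/2\rfloor$ and $P_2$ is a Hamiltonian path of $G_2$ of order $\lceil n/2\rceil$. The key bookkeeping observation is the identity of ordered graphs $(G-E(P))[V(P_1)]=G_1-E(P_1)$ and $(G-E(P))[V(P_2)]=G_2-E(P_2)$: indeed, the only edges of $P$ with both endpoints in $V(P_1)$ (resp.\ in $V(P_2)$) are the edges of $P_1$ (resp.\ of $P_2$), because the unique edge of $P$ joining the two halves is $v_{\lfloor n/2\rfloor}v_{\lfloor n/2\rfloor+1}$, which has exactly one endpoint on each side.

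Next I would show that $(G_1,P_1)$ avoids $A$ or $(G_2,P_2)$ avoids $B$. Suppose neither holds: then $G_1-E(P_1)$ contains $A$ as an ordered subgraph, and $G_2-E(P_2)$ contains $B$ as an ordered subgraph. By the identity above, these two copies live inside $G-E(P)$, with vertex sets contained in $V(P_1)$ and $V(P_2)$ respectively. Since every vertex of $V(P_1)$ precedes every vertex of $V(P_2)$ in the order induced by $P$, and since a copy of $A\cdot B$ requires no edges between the $A$-part and the $B$-part, concatenating the two copies yields a copy of $A\cdot B$ as an ordered subgraph of $G-E(P)$, contradicting the assumption that $(G,P)$ avoids $A\cdot B$.

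Finally, by symmetry assume $(G_1,P_1)$ avoids $A$. Then $G_1$ carries a Hamiltonian path of order $\lfloor n/2\rfloor$ avoiding $A$, so by definition of $g_A$ it contains an induced path of order at least $g_A(\lfloor n/2\rfloor)$; as $G_1$ is an induced subgraph of $G$, this is an induced path of $G$ of order at least $g_A(\lfloor n/2\rfloor)\ge \min\{g_A(\lfloor n/2\rfloor),g_B(\lceil n/2\rceil)\}$. Symmetrically, if $(G_2,P_2)$ avoids $B$ we get an induced path of order at least $g_B(\lceil n/2\rceil)$, using $|P_2|=\lceil n/2\rceil$. Either way the bound follows. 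The only delicate point — and the step I would state most carefully — is the bookkeeping identity relating $(G-E(P))[V(P_i)]$ to $G_i-E(P_i)$, i.e.\ checking that the single crossing edge of $P$ plays no role; the rest of the argument is essentially definitional.
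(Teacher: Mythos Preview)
Your argument is correct and is exactly the natural proof of this lemma; the paper does not include its own proof here (it cites the companion paper~\cite{I}), but the approach you give---split $P$ at its midpoint, observe that one half must avoid its respective pattern, and invoke the definition of $g_A$ or $g_B$---is the standard one and matches what one expects. Your care with the bookkeeping identity $(G-E(P))[V(P_i)]=G_i-E(P_i)$ is appropriate and the argument goes through cleanly.
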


We proved in \cite{I} that if $g_H(n)=\omega(\log n)$, then $H$ must be a matching. Hence, better-than-logarithmic bounds on the size of induced paths can only be obtained by considering very simple patterns, namely ordered matchings. It is thus natural to investigate  $g_H(n)$ when $H$ is an ordered matching. In this case, we proved the following in \cite{I}: \begin{itemize}
    \item either $H$ is \emph{non-crossing} (that is, it does not contain vertices $a<b<c<d$ with edges $ac,bd$) and then $g_H(n) = n^{\Theta(1)}$, or
    \item $H$ contains a pair of crossing edges and then $g_H(n) = (\log n)^{\Theta(1)}$.
\end{itemize} 

\begin{figure}[htb]
  \centering
    \includegraphics[scale=1.5]{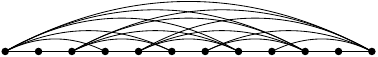}
  \caption{A graph with no long induced path.\label{fig:construction}}
\end{figure}

We also studied the construction illustrated in Figure \ref{fig:construction}. In this graph, the $i$-th vertex on the path $P$ is adjacent to the $j$-th vertex whenever $i$ is odd, $j$ is even, and $i<j$. It can be checked that every induced path has order at most 4, independently of the order of the path $P$. Note that every vertex has all its neighbors preceding it or all its neighbors succeeding it, except its immediate predecessor and successor on the path $P$. This implies the following.

\begin{observation}[\cite{I}]\label{obs:d2a}
Let $H$ be an ordered graph such that $\{g_H(n): n\in \mathbb{N}\}$ is unbounded. Then, for each vertex $v\in V(H)$, all neighbors of $v$ are 
predecessors of  $v$, or all neighbors of $v$ are  successors of $v$. In particular, $H$ is bipartite. 
\end{observation}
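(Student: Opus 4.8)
The plan is to prove the contrapositive: assuming that $H$ has a vertex $v$ with a neighbour $a<v$ and a neighbour $b>v$ in the order of $H$, I will show that $g_H(n)\le 4$ for every $n$, so that $\{g_H(n):n\in\mathbb N\}$ is bounded. Let $S$ be the ordered graph on three vertices $a<v<b$ with edge set $\{av,vb\}$, that is, the ordered $3$-vertex path whose degree-$2$ vertex is also the median of its order. Since $v$ has a smaller neighbour $a$ and a larger neighbour $b$ in $H$, the map sending the three vertices of $S$ to $a$, $v$, $b$ shows that $S$ is an ordered subgraph of $H$; consequently, if $(G,P)$ avoids $S$ as a pattern then it also avoids $H$ as a pattern. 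It therefore suffices to construct, for every $n$, a graph $G$ with a Hamiltonian path $P$ of order $n$ such that $(G,P)$ avoids $S$ and $G$ has no induced path on $5$ vertices: such a $G$ avoids $H$ and has induced-path number at most $4$, which gives $g_H(n)\le 4$. The ``in particular'' statement then follows by colouring each non-isolated vertex of $H$ with $1$ if all its neighbours succeed it and with $2$ if all of them precede it (the two cases being exhaustive and, by hypothesis, mutually exclusive), and colouring isolated vertices arbitrarily: an edge $uv$ with $u<v$ has $u$ of colour $1$ and $v$ of colour $2$, so this is a proper $2$-colouring and $H$ is bipartite.

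For the construction I would take, for even $n=2m$, the graph $G$ on vertex set $\{1,\dots,2m\}$ whose edges are the path edges $\{i,i+1\}$ together with all pairs $\{i,j\}$ with $i<j$, $i$ odd and $j$ even, the Hamiltonian path being the one visiting the vertices in the order $1,2,\dots,2m$; for $n$ not even one restricts to an initial segment of this path. Equivalently, for $i<j$ one has $i\sim j$ if and only if $j=i+1$, or $i$ is odd and $j$ is even. In particular the odd vertices form an independent set, the even vertices form an independent set, and an odd vertex $o$ is adjacent to an even vertex $e$ if and only if $o\le e+1$. To see that $(G,P)$ avoids $S$, observe that in $G-E(P)$ every odd vertex has all its (even) neighbours to its right, while every even vertex has all its (odd) neighbours to its left; hence no vertex of $G-E(P)$ has both a smaller and a larger neighbour, which is exactly the condition for avoiding $S$.

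It remains to check that $G$ has no induced path on $5$ vertices. Suppose $u_1u_2u_3u_4u_5$ were one. Two vertices of the same parity are never adjacent in $G$ (each parity class is independent), so consecutive vertices of the path have opposite parities, whence $u_1,u_3,u_5$ share one parity and $u_2,u_4$ the other. In either case, using the adjacency rule that an odd value $o$ and an even value $e$ are adjacent if and only if $o\le e+1$, one rewrites the four required edges $\{u_i,u_{i+1}\}$ together with the two required non-edges that are not automatic, namely $\{u_1,u_4\}$ and $\{u_2,u_5\}$, as inequalities among the five values; combining these inequalities produces two values $x$ and $y$ satisfying both $x<y$ and $y<x$, a contradiction. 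Hence the induced-path number of $G$ is at most $4$, and the argument is complete.

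The crux is the construction itself: one needs a graph that is dense enough that no long induced path survives, yet in which every non-path edge points ``forward'', from an odd vertex to a later even vertex, so that the pattern $S$ never occurs; the half-graph placed between the two parity classes achieves both of these at once. Once this graph has been written down, both the verification that $S$ is avoided and the case analysis excluding an induced $P_5$ are short and routine.
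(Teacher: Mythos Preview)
Your argument is correct. The paper itself does not prove this observation---it is imported from the companion paper~\cite{I}---so there is no in-paper proof to compare against, but your route is the natural one and matches the standard construction: the graph you build is precisely the ordered half-graph (odd vertices on the left, even on the right, with $o\sim e$ iff $o\le e+1$) traversed by its Hamiltonian path, and this is exactly the extremal object behind item~\eqref{it:logloglog} of Corollary~\ref{cor:dicho}. The reduction from $H$ to the three-vertex ``cherry'' $S$ is clean, the parity argument showing that $G-E(P)$ avoids $S$ is immediate, and your inequality-chasing for the non-existence of an induced $P_5$ goes through: in the case where $u_1,u_3,u_5$ are odd, the non-edges $u_1\not\sim u_4$ and $u_2\not\sim u_5$ force $u_2>u_4$ and $u_4>u_2$ respectively (via $u_2+1\ge u_1>u_4+1$ and $u_4+1\ge u_5>u_2+1$), and the other parity case is symmetric. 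The bipartiteness deduction at the end is also fine.

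One stylistic remark: in the final paragraph you assert that the $P_5$ case analysis is ``short and routine'' without writing out the two chains of inequalities. That is acceptable, but since the whole proof is only a few lines it would cost nothing to display the contradiction $u_2>u_4>u_2$ explicitly rather than leave the reader to reconstruct it.
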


\section{Constellations}
\label{sec:stars}

\subsection{Definitions and main properties}

The \emph{$r$-star} is the complete bipartite graph $K_{1,r}$. We say that the vertex of degree $r$ is the \emph{center of the star} (if $r=1$, both endpoints can be the center of the star, and otherwise the center is unique). Recall that by Observation~\ref{obs:d2a}, for $g_H$ to be unbounded, an ordered graph $H$ must have the property that every vertex  is larger than all its neighbors, or smaller than all its neighbors. So, we only need to consider two orderings of a star: the \emph{right star} where the center is the smallest vertex in the ordering, and the \emph{left star} where the center is the largest vertex (we consider that a 1-star is a left star and a right star).
In the course of showing that traceable graphs of bounded degeneracy have long induced paths, Ne{\v{s}}et{\v{r}}il, and Ossona de Mendez proved a lemma that can be restated as follows in terms of excluded patterns.

\begin{lemma}[{\cite[Lemma~6.3]{nevsetvril2012sparsity}}]\label{lem:nespom}
If $H$ is a left or right $r$-star, then
$g_{H}(n) \geq \frac{\log ((r-1)n +1)}{\log r}$.
\end{lemma}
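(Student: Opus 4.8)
The plan is to prove this by induction on $n$, setting up a coloring of the vertices of the path $P$ so that any color class induces a long induced path. Suppose first that $H$ is a right $r$-star, and let $G$ be a graph with Hamiltonian path $P = v_1,\dots,v_n$ avoiding $H$ as a pattern. The key observation is that avoiding a right $r$-star means: for every vertex $v_i$, the set of successors $v_j$ (with $j>i$) that are adjacent to $v_i$ in $G-E(P)$ has size at most $r-1$. (Here "successor" refers to the path order, and these are non-path edges.) I would process the vertices from left to right and greedily assign to each $v_i$ the smallest color in $\{1,2,\dots\}$ not already used on a path-predecessor $v_j$ ($j<i$) that is adjacent to $v_i$ in $G-E(P)$; but this is the wrong direction, so instead I process from \emph{right to left}: assign to $v_i$ the smallest color not used by any successor of $v_i$ adjacent to it in $G-E(P)$. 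Since each $v_i$ has at most $r-1$ such successors, this greedy procedure uses colors from $\{1,\dots,r\}$, and within each color class there are no $G-E(P)$-edges at all — so each color class induces a subpath-free... more precisely, induces in $G$ a graph all of whose edges lie on $P$, i.e.\ a disjoint union of induced subpaths of $P$. So the graph is properly $r$-colored in the sense that each class induces a union of induced paths.

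However, this alone only gives an induced path of order $\lceil n/r\rceil$ in the worst case (one color class being a single long subpath), which is far weaker than the claimed logarithmic... wait, $n/r$ is much \emph{larger} than $\log n$, so actually this crude bound already suffices if it were true — but it is not, because a color class of size $n/r$ need not be a single subpath; it is a disjoint union of many short subpaths. So I would instead refine the argument recursively. Partition $P$ into two halves $P_1 = v_1,\dots,v_{\lfloor n/2\rfloor}$ and $P_2$; by... no. The cleanest route is the following recursive peeling: among the $r$ color classes produced above, one class $C$ contains at least $n/r$ vertices and induces a disjoint union of subpaths of $P$; the \emph{longest} such subpath might still be short. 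Instead, I set $f(n)$ to be the guaranteed induced path length and prove $f(n) \ge 1 + f(\lceil (n-1)/r \rceil)$ by the following argument: consider the vertex $v_1$; it has at most $r-1$ non-path neighbors among $v_2,\dots,v_n$, and removing those at most $r-1$ neighbors from the path splits $P - v_1$ into at most $r$ subpaths, one of which, say $P'$, has order at least $(n-1-(r-1))/r \geq (n-1)/r - 1$; recurse on $G[\{v_1\} \cup V(P')]$... but $v_1$ may still be adjacent to interior vertices of $P'$.

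The correct formulation, which matches the stated bound $\tfrac{\log((r-1)n+1)}{\log r}$, is: define $n_0 = 1$ and build the induced path greedily from the left. Maintain a "current endpoint" and a set $S$ of "available" vertices forming a contiguous suffix of $P$; start with the induced path being $(v_1)$ and $S = \{v_2,\dots,v_n\}$ minus the $\le r-1$ non-path neighbors of $v_1$ lying in $S$, which leaves $S$ as a union of $\le r$ intervals. Take the interval containing the immediate path-successor of $v_1$ if nonempty — this guarantees the path edge $v_1 v_2$ can be used and the new path $(v_1, v_2)$ is induced. Iterating, after $k$ steps the surviving interval has order at least $(n - \text{(stuff)})/r^{\,k-1}$-ish; solving for when this drops below $1$ gives $r^{k} \gtrsim (r-1)n$, i.e.\ $k \approx \log((r-1)n+1)/\log r$. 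The main obstacle — and the point requiring care — is bookkeeping the \emph{exact} recursion so that the constant $(r-1)n+1$ comes out precisely: one must track that at each step, deleting the $\le r-1$ forward non-path neighbors of the current endpoint from the current interval of length $\ell$ leaves a sub-interval of length $\ge (\ell - (r-1))/r$ adjacent to the endpoint on the correct side, and then verify by induction that $\ell \ge (r-1)n+1$ initially... sorry, that $n$ vertices suffice for $k$ steps when $(r-1)n + 1 \ge r^k$, equivalently $n \ge (r^k-1)/(r-1) = 1 + r + \dots + r^{k-1}$, which is exactly the statement that a complete $r$-ary tree of depth $k$ has that many nodes — suggesting the "right" proof is really a tree/branching argument: the at-most-$(r-1)$-forward-neighbor condition means the "interval-splitting" process forms an $r$-ary branching tree of depth equal to the induced path length, and $n$ must be at least the number of nodes on one root-to-leaf path's worth of intervals, giving the bound. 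The left-star case is symmetric (process right to left, or reverse the ordering).
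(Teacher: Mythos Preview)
The paper does not give its own proof of this lemma; it is quoted from \cite{nevsetvril2012sparsity}. That said, the paper does implement essentially the same argument in the base case ($t=1$) of \Cref{th:peel}, via the ``stretch/successor'' machinery preceding \Cref{lem:stretch}: one lists all forward neighbours of $v_1$ in $G$ (there are at most $r$, counting the path edge $v_1v_2$), takes the largest interval between consecutive neighbours, and recurses into it --- its first vertex is a neighbour of $v_1$ and its other vertices are not.

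Your core idea is the right one and matches this standard argument, but your final formulation has a technical slip. You propose to \emph{delete} the $\le r-1$ non-path forward neighbours of the current endpoint and then pick a large surviving sub-interval ``adjacent to the endpoint on the correct side''. The difficulty is that, apart from the very first sub-interval (which begins at the path-successor), none of these sub-intervals begins with a neighbour of the endpoint, so there is no edge along which to extend the induced path into them; and the first sub-interval can have size $1$. If instead you use the $\le r$ forward neighbours (path and non-path together) as the \emph{starting points} of the intervals, then these $\le r$ intervals partition the remaining $n-1$ vertices, each begins with a neighbour of $v_1$ and contains no other neighbour of $v_1$, and the largest has size at least $(n-1)/r$. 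The recursion $f(n)\ge 1+f\big(\lceil (n-1)/r\rceil\big)$ with $f(1)=1$ then gives exactly $f(n)\ge \log_r\big((r-1)n+1\big)$, as claimed. Your recursion $\ell\mapsto (\ell-(r-1))/r$ is both slightly weaker and, as written, does not explain how the induced path is extended into the chosen sub-interval.
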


This shows that avoiding a single right or left star as a pattern guarantees the existence of an induced path of logarithmic order. In the following, our goal will be to obtain polylogarithmic bounds for  patterns consisting of a constant number of constant size stars. 

\smallskip

We now introduce  constellations, a particular type of ordered star forests. A \emph{constellation} $H$ consists of a disjoint union of 
  stars $S_1,\ldots,S_t$, each of which is a left or right star,
  and is defined inductively as follows:
 \begin{itemize}
     \item either the center of one of the stars, say $S_1$, is the first vertex of $H$, and  $H-S_1$ is a constellation,
     \item or the center of one of the stars, say $S_t$, is the last vertex of $H$, and  $H-S_t$ is a constellation,
     \item or $H$ is the concatenation of two constellations.
 \end{itemize}
 
 In the first item above, $H$ is called a \emph{right constellation} (the star whose center is the first vertex of $H$ is a right star), and in the second item above,
 $H$ is called a \emph{left constellation} (the star whose center is the last vertex of $H$ is a left star). We emphasize that the three items in the definition of a constellation are not mutually exclusive: for instance the concatenation of a right star and a left star is a constellation that satisfies all three items. Note also that any ordered matching is a left constellation and a right constellation. See Figure \ref{fig:example} for an example of a constellation. The constellation on the figure is the concatenation of a left constellation (on the left-hand side) and a right constellation (consisting of two intertwined stars, on the right-hand side of the figure). 
 
 \medskip
 
 \begin{figure}[htb]
  \centering
    \includegraphics[scale=1.5]{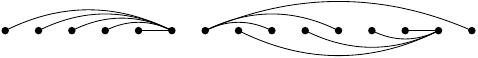}
  \caption{A constellation.\label{fig:example}}
\end{figure}



\medskip

Although it is not needed in this section, the following equivalent definition of constellations will be useful in \Cref{sec:loglogsqr}, in the proof of \Cref{th:loglogbis}. In an ordered graph $G$, we say that a vertex $v$ is \emph{outside} an induced subgraph $H$ of $G$ if $v$ precedes or succeeds the vertex set of $H$. 

\begin{lemma}\label{lem:loccons}
Let $H$ be an ordered star forest consisting only of left  and right stars. Then, $H$ is a constellation if and only if 
\begin{itemize}
    \item[$(\star)$] There is an ordering $S_1,\ldots,S_t$ of the stars of $H$ such that for any $i<j$, the center of $S_i$ is outside $S_j$.
\end{itemize}
\end{lemma}

\begin{proof}
Assume first that $H$ is a constellation. 
We prove $(\star)$ by induction on the number of stars of $H$. If $H$ is a concatenation $H_1\cdot H_2$ of two non-trivial constellations, then $(\star)$ certainly holds by induction on $H_1$ and $H_2$, as each star in $H_1$ precedes each star in $H_2$. So, we can assume that $H$ is a left or right constellation, say a right constellation by symmetry. Then, $H$ contains a star $S_1$ whose center $c$ is the first vertex of $H$, and thus the result follows by induction on the constellation $H-S_1$ (as $c$ precedes all stars of $H-S_1$).

Assume now that $(\star)$ holds, and consider the first star $S_1$ in the order given by $(\star)$. Assume by symmetry that $S_1$ is a right star. If the center $c_1$ of $S_1$ precedes all the other stars of $H$ then $c_1$ is the first vertex of $H$, and thus $H$ is a right constellation (by induction on $H-S_1$).  Otherwise, let $H_1$ be the subgraph of $H$ induced by the stars $S_i$ which precede $c_1$, and let $H_2=H-H_1$. By assumption, $H_1$ and $H_2$ are non-empty, and since $(\star)$ is closed under taking a subset of the stars, both $H_1$ and $H_2$ satisfy $(\star)$. It remains to observe that $H=H_1\cdot H_2$, by definition of $S_1$ and $c_1$, so by induction $H$ is a concatenation of constellations, and therefore also a constellation.
\end{proof}

A constellation consisting of $t$ stars, each of which is an $r$-star, is called a \emph{$(t,r)$-constellation}.

\subsection{Main result}

Our first main result is a polylogarithmic lower bound on $g_H(n)$ when $H$ is a constellation.

\begin{theorem}\label{th:peel_omega}
There exists a constant $\mu > 0$ such that for any integers $r\ge 1$ and $t \ge 1$, and any $(t,r)$-constellation $H$,
\[
    g_{H}(n) \geq (\log_{r} n)^{\frac{\mu}{t(\log t)^2}}.
\]
\end{theorem}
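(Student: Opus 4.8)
The plan is to prove the bound by induction on the structure of the constellation $H$, peeling off one star at a time as dictated by the inductive definition. The base case is a single star ($t=1$): this is exactly Lemma~\ref{lem:nespom}, which gives $g_H(n)\ge \log_r((r-1)n+1)\ge \log_r n$, comfortably larger than $(\log_r n)^{\mu/1}$ once $\mu$ is chosen small. For the inductive step there are three cases matching the three items in the definition of a constellation. The concatenation case is the easiest: if $H=A\cdot B$ with $A$ a $(t_A,r)$-constellation and $B$ a $(t_B,r)$-constellation, $t_A+t_B=t$, then Lemma~\ref{lem:concat} gives $g_H(n)\ge\min\{g_A(\lfloor n/2\rfloor),g_B(\lceil n/2\rceil)\}$, and since $\max(t_A,t_B)\le t-1$ while $n$ only drops by a factor $2$, the inductive bound for the smaller parameter easily absorbs the loss. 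The delicate cases are the ``peel a left/right star'' ones, where $H$ is, say, a right constellation: its first vertex is the center of a right $r$-star $S_1$ and $H'=H-S_1$ is a $(t-1,r)$-constellation.

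The heart of the argument is therefore: given a graph $G$ with a Hamiltonian path $P=v_1,\dots,v_n$ avoiding a right constellation $H$, where $S_1$ is the peeled right star with leaves going to positions ``to the right'', use the absence of $S_1$ together with the absence of $H'$ in the remaining graph to build a long induced path. The idea I would pursue is a \emph{boosting / amplification} scheme. Partition $P$ into $r$ consecutive blocks $B_1,\dots,B_r$ of (roughly) $n/r$ vertices each. For a fixed left endpoint $v$ in block $B_1$, because $G$ avoids the right $r$-star rooted at $v$ with one leaf per later block, $v$ cannot have a neighbor (in $G-E(P)$) in all of $B_2,\dots,B_r$ simultaneously; more usefully, iterating this over a nested family of sub-blocks, one finds a block $B_j$ and a sub-path $Q\subseteq B_j$ of order about $n^{1/r}$ (really $\log_r$-scale: of order roughly $n$ raised to a power like $1/(2^{t})$) such that $v$ has \emph{no} neighbor in $Q$ at all, and $Q$ still avoids the full constellation $H$ (being a sub-path, it avoids $H$, and in fact avoids $H'$). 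Recursing inside $Q$ with the parameter $t-1$ (having ``used up'' the star $S_1$ to clear the neighborhood of $v$), we obtain an induced path $R$ in $Q$ of order $g_{H'}(|Q|)$; then $v\cdot R$ — or more precisely an appropriate extension handling the path edges — is an induced path, giving roughly $g_H(n)\gtrsim 1+g_{H'}\big(n\pow (1/\text{something}(t))\big)$. Unrolling the recursion on $t$ and tracking the exponents is where the exponent $\tfrac{\mu}{t(\log t)^2}$ comes from: each of the $\le t$ peeling steps costs a factor in the exponent, and balancing these costs (some steps split the path, roughly halving $t$; the geometric-like series of logarithmic losses) produces the $(\log t)^2$ rather than a clean $t$ or $t!$.

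The main obstacle, and where the bulk of the technical work lies, is making the ``clear the neighborhood of $v$ inside a large sub-path while keeping enough structure to recurse'' step actually yield the stated exponent rather than something much weaker (like a tower, or $1/t!$). Naively, peeling one star and losing a root $n^{1/r}$ each time would give an exponent like $r^{-t}$; to get only $t(\log t)^2$ in the denominator one must be much more economical — e.g.\ handle the two directions (left and right stars) and the concatenations in a balanced recursion tree of depth $O(\log t)$, at each node of which the constellation is split into two roughly-equal halves and the ``cost'' is amortized, so that the total multiplicative loss in the exponent is $\prod$ over $O(\log t)$ levels of a per-level factor that is itself only $\poly\log t$. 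Carefully setting up this balanced decomposition of an arbitrary constellation into halves (which the inductive definition does not directly give — a constellation is built by peeling single stars or by one concatenation, not by a balanced split), proving that such a balanced decomposition always exists, and then controlling the arithmetic of exponents along it, is the crux. A secondary but real nuisance is bookkeeping the path edges $E(P)$: when we prepend $v$ to an induced path $R$ found deep inside a block, we must ensure $v$ is non-adjacent in $G$ — not merely in $G-E(P)$ — to the rest of $R$, which forces us to leave a small gap near $v$ and to always work with sub-paths that start a few vertices after the cleared region; this is routine but must be threaded through every step of the induction.
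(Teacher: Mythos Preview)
Your proposal has a genuine gap in the crucial ``peel a star'' case, and the speculated fix (a balanced decomposition of the constellation) is not the mechanism that actually works.

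Concretely: when $H$ is a right constellation with first star $S_1$ and $H'=H-S_1$, you propose to find a large sub-path $Q$ in which $v_1$ has no neighbour, and then ``recurse inside $Q$ with the parameter $t-1$'', i.e.\ apply the inductive bound for $H'$. But $Q$ need not avoid $H'$; it only avoids $H$. More fundamentally, your way of producing $Q$ --- partition $P$ into $r$ blocks and use that $v_1$ cannot realise an $r$-star across them --- uses only the avoidance of the single star $S_1$, not of the whole constellation $H$. Nothing in ``$(G,P)$ avoids $H$'' directly limits the neighbours of $v_1$; $v_1$ may well have neighbours in every block. So the step ``clear the neighbourhood of $v_1$ using $S_1$'' is not justified, and the recursion $g_H(n)\gtrsim 1+g_{H'}(|Q|)$ (or even $1+g_H(|Q|)$) does not get off the ground.

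The paper's approach is structurally different. It proves a strictly stronger dichotomy: for every $(t,r)$-constellation $H$, either $(G,P)$ contains $H$ as a pattern \emph{with large gap} (minimum spacing $g(n,t,0)$ between consecutive pattern vertices), or $G$ has an increasing induced path of order $f(n,t,0)$. The gap notion is the missing idea. In the peel-a-star case one first applies the induction hypothesis for $H^-$ on the middle third $G_{\rm mid}$: this either already yields a long induced path, or produces $H^-$ in $G_{\rm mid}$ with gap $\ge g(n/3,t-1,p)$. Now the gaps of this copy of $H^-$ give many disjoint intervals; if $v_1$ has enough neighbours spread across them one assembles $H$ itself with large gap (contradiction with avoidance), and otherwise $v_1$ has large ``stretch'', i.e.\ a successor sub-path $G'$ of size $\approx g(n/3,t-1,p)/(2r+1)$ starting at the unique neighbour of $v_1$ there. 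One then recurses on $G'$ with the \emph{same} $H$ (same $t$) and an auxiliary counter $p\mapsto p+1$, prepending $v_1$ to the resulting path. The induced path is thus built by the $p$-recursion, while the $t$-recursion is used only to locate $H^-$ with gap.

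The exponent $\mu/(t(\log t)^2)$ does not come from a balanced splitting of the constellation. It comes from the constraint that the cumulative ``budget'' $\gamma(t)\asymp\sum_{i<t}\varphi(i)$ stay below $1$ (so that the successor $G'$ is still an $n^{1-o(1)}$-fraction of $n$); this forces $\varphi(t)$ to be the general term of a convergent series, and $\varphi(t)=\Theta\big(1/(t(\log t)^2)\big)$ is essentially the best one can take. Your concatenation step via Lemma~\ref{lem:concat} is fine in isolation, but in the paper it too is handled inside the gap framework (find $H_1$ with gap in the left third and $H_2$ with gap in the right third), so that all three cases feed the same dichotomy.
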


Let $G$ be an ordered graph with order $v_1,\ldots,v_n$, and let $H$
be an ordered subgraph of $G$ with vertex set $v_{a[1]}, v_{a[2]},\ldots,
v_{a[k]}$ (for $1\le a[1] < a[2] < \ldots < a[k] \le n$). The \emph{gap}
of $H$ in $G$ is defined as the minimum of $a[{i+1}]-a[i]$, for $1\le i
\le k-1$. The definition of the gap naturally extends to patterns in
pairs $(G,P)$ where $G$ is a graph and $P$ a Hamiltonian path in
$G$. 

\medskip

\Cref{th:peel_omega} is a consequence of \Cref{th:peel} below which, informally, states that a graph either contains a ``large scale'' version of a constellation as a pattern, or contains a ``long''\footnote{The inequalities describing how ``large'' should be related to ``long'' for our argument to work are gathered in \Cref{lem:bounds} hereafter.} increasing induced path. This is illustrated by the following simplified form of \Cref{th:peel}.

\begin{theorem}[simplified form of \Cref{th:peel}]
Let $H$ be a $(t,r)$-constellation.
There are functions
\[f(n) = (\log n)^{\Theta_{r}(1/(t(\log t)^2))}\quad \text{and}\quad g(n) = \frac{n}{2 \pow (\log n) \pow \big( 1 - \Theta_{r}(1/(t(\log t)^2)) \big)}\]
such that for every graph $G$ with a Hamiltonian path $P=v_1,\ldots,v_n$, either $(G,P)$ contains the pattern $H$ with gap $g(n)$, or $G$ contains an induced path of order at least $f(n)$ which is increasing with respect to the order $v_1,\ldots,v_n$.
\end{theorem}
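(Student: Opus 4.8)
The plan is to prove the statement inductively on the structure of the constellation $H$, following the three-case definition (right constellation, left constellation, concatenation). The concatenation case is immediate from \Cref{lem:concat}: if $H = A \cdot B$ where $A$ is a $(t_1,r)$-constellation and $B$ is a $(t_2,r)$-constellation with $t_1+t_2 = t$, then $g_H(n) \ge \min\{g_A(\lfloor n/2\rfloor), g_B(\lceil n/2\rceil)\}$, and the desired bound follows by induction (the loss of a factor $2$ inside the logarithm is absorbed into the constant $\mu$, since $t_1, t_2 \le t$). So the real content is the ``peeling'' argument for a right constellation $H$ whose first vertex is the center of a right $r$-star $S_1$; the left case is symmetric by reversing the order. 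The heart of the matter, as the simplified form of \Cref{th:peel} signals, is to strengthen the statement into a dichotomy with quantitative control on the \emph{gap}: either $(G,P)$ contains $H$ as a pattern with large gap $g(n)$, or $G$ has a long \emph{increasing} induced path. Proving the increasing version is essential because it is what allows the inductive step to re-use a sub-path cleanly.

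For the peeling step, suppose $(G,P)$ avoids $H$ as a pattern. Since $H$ is a right constellation, write $H = S_1 \cup H'$ where $S_1$ is a right $r$-star centered at the first vertex and $H'$ is a $(t-1,r)$-constellation. The idea is: first locate many disjoint, widely-spaced ``candidate centers'' $v_{i}$ along $P$ each of which has $r-1$ later non-$P$-neighbors forming the leaves of a copy of $S_1$ (this is where a Ne{\v{s}}et{\v{r}}il--Ossona de Mendez style counting argument, \Cref{lem:nespom}, comes in: if most vertices fail to be such centers then one already extracts a long increasing induced path directly, because avoiding a single right $r$-star forces logarithmic induced paths, and we iterate/amplify this). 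Then, among the portion of $P$ lying to the right of such a center, we must avoid $H'$ with an appropriate gap (else we could prepend the center and its leaves to get $H$ with gap $g(n)$, recalling the leaves of $S_1$ can be taken far out to the right). Applying the induction hypothesis for $H'$ on this right portion gives either $H'$ with large gap — contradiction — or a long increasing induced path there. The bookkeeping is to choose the gap threshold $g(n) = n / 2\pow((\log n)\pow(1 - \Theta_r(1/(t(\log t)^2))))$ so that ``large gap for $H$'' is consistent with ``large gap for $H'$ on a sub-interval'', and to choose $f(n) = (\log n)^{\Theta_r(1/(t(\log t)^2))}$ so that the recursion on $t$ (dividing the available exponent budget by something like a factor accounting for each peeled star) telescopes to the claimed exponent $\mu/(t(\log t)^2)$. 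The $(\log t)^2$ factor rather than a bare $t$ presumably arises because the peeling is done in $\Theta(\log t)$ phases, each handling a geometrically shrinking batch of stars, with a $\log t$-type loss per phase.

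The key auxiliary ingredients I would set up before the main induction are: (i) a ``cleaning'' lemma that, given a Hamiltonian path and a target gap $\gamma$, either produces an increasing induced subpath of order $\gamma$ or reduces to a sub-instance where the pattern we seek would automatically have gap $\ge \gamma$ (i.e., we may assume vertices we select are $\gamma$-separated by restricting to every $\gamma$-th vertex or by a greedy argument); (ii) the base case $t=1$, which is exactly \Cref{lem:nespom} giving $g_H(n) \ge \log_r((r-1)n+1) \ge (\log_r n)^{1}$, comfortably above $(\log_r n)^{\mu}$ for small $\mu$; (iii) a collection of numerical inequalities — deferred, as the footnote indicates, to a lemma like \Cref{lem:bounds} — verifying that the chosen $f$ and $g$ satisfy the recursive relations forced by the concatenation split and the star-peeling step simultaneously. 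I expect the main obstacle to be the star-peeling step itself, specifically making the counting argument for extracting either many well-separated $S_1$-centers or a long increasing induced path quantitatively tight enough that the per-star exponent loss is only $O(1/(t(\log t)^2))$ rather than something like $1/t$ or worse; getting from the naive $1/t^2$-type bound (which the authors note is what the simpler matching-based argument of the companion paper gives for $K_t$-minors) down to $1/(t(\log t)^2)$ is presumably exactly where the clever batching of the peeling into $O(\log t)$ rounds is needed, and where the delicate part of the argument lies.
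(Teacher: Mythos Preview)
Your proposal has the right high-level shape but misses the central mechanism of the paper's proof, and contains two concrete errors.

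\textbf{The concatenation step does not follow from \Cref{lem:concat}.} That lemma bounds $g_{A\cdot B}$ in terms of $g_A$ and $g_B$, but the simplified form of \Cref{th:peel} is a \emph{dichotomy}: either $H$ appears as a pattern with gap $g(n)$, or there is a long increasing induced path. \Cref{lem:concat} says nothing about gaps. The paper handles concatenation inside the main induction: apply the inductive dichotomy to $G[1,\lceil n/3\rceil]$ with $H_1$ and to $G[\lfloor 2n/3\rfloor,n]$ with $H_2$; if both return ``pattern with gap'', concatenate the two patterns (the middle third guarantees separation), otherwise one of them returns a long induced path.

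\textbf{The peeling step is missing its engine.} You describe a single recursion on $t$: peel the first star $S_1$, recurse with $H' = H - S_1$ on a right portion. But this alone does not build an increasing induced path starting at the first vertex, and without that anchoring you cannot chain the pieces. The paper's actual mechanism is a \emph{second} recursion, on an auxiliary parameter $p$, with the \emph{same} pattern $H$. Concretely: apply the $(t{-}1)$-induction to the middle third $G_{\mathrm{mid}}$ with $H^-$; if the outcome is ``$H^-$ with large gap'', then either $v_1$ has enough well-spaced neighbors to complete $H^-$ to $H$ (done), or $G$ has large stretch. In the large-stretch case, pass to the \emph{successor} $G'$ of $G$ (the largest gap between consecutive neighbors of $v_1$), which has $|G'| \ge s(n,t,p)$, and recurse on $G'$ with $H$ and $p{+}1$. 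If that recursion returns an induced path starting at the first vertex of $G'$, prepend $v_1$ to gain one vertex. This is why the paper carries the stronger three-way proposition $\Prop(n,t,p)$ with a distinguished outcome (P1) ``path starting at the first vertex'', separate from (P2) ``path anywhere''. Your sketch has no analogue of this $p$-recursion, and ``iterate/amplify'' does not substitute for it.

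\textbf{The source of the $(\log t)^2$ is not batching.} There are no $O(\log t)$ phases. The exponent $\varphi(t)$ must satisfy, simultaneously, $\gamma(t) - \gamma(t-1) \ge 8\varphi(t-1)$ and $\gamma(t) < 1$ for all $t$, which forces $\sum_{i}\varphi(i) < \infty$. The choice $\varphi(t) = \Theta\bigl(1/(t(\log t)^2)\bigr)$ is essentially the slowest-decaying sequence with a convergent sum; this is exactly the content of \Cref{rem:exist-phigamma}. The $(\log t)^2$ is an artifact of this summability constraint, not of any phased peeling.
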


For the purpose of the induction  we need to define the aforementioned functions $f$ and $g$ with extra parameters, as well as additional functions, which we do now. In the rest of the section, $r \geq 1$ is a fixed integer. Most functions we introduce depend on $r$ implicitly, but as $r$ is fixed, we do not consider them explicitly as functions of $r$.

\begin{definition}\label{def:phigamma}
Let $\varphi, \eta, \gamma \colon \N \cup \{-1\} \to (0,1)$ be three functions such that for every $t\in \N$ we have:
\begin{align}
    \gamma(t) - \gamma(t-1) &\geq \cst \cdot \varphi(t-1),\\
    \label{eq:unmoins} 1 - \gamma(t-1) &\geq \cst \cdot \varphi(t-1),\\
    \varphi(t-1) &> \eta(t) > \varphi(t),\ \text{and}\\
    \varphi(t-1) - \eta(t) &> \varphi(t) - \eta(t+1).
\end{align}
\end{definition}

\begin{remark}\label{rem:exist-phigamma}
Functions as in \Cref{def:phigamma} exist, for instance for every $t\in \N \cup \{-1\}$ we could take
\begin{align*}
\varphi(t) &:= \frac{1}{\cst}\cdot \frac{1}{\alpha }\cdot  \frac{1}{(t+10)(\log (t+10))^2}, \quad \text{where}\ \alpha:=\sum_{i=-1}^{\infty}\frac{1}{(i+10)(\log (i+10))^2}\approx 0.22,\\
\eta(t) &:= \frac{\varphi(t-1) + \varphi(t)}{2},\quad and\\
\gamma(-1) &:= 0\quad \text{and if}\ t\geq 0,\ \gamma(t) := \cst \cdot \sum_{i=-1}^{t-1} \varphi(i).
\end{align*}
Actually we could replace $(t+10) (\log(t+10))^2$ above by any function of the form \[\Theta(t (\log t)(\log \log t) \cdots (\log \cdots \log t)^2),\] where the square is only on the last factor. Indeed, by the Cauchy Condensation Test (see \cite{Mor38}), for any such function $\rho$ and $t_0\in \N$ large enough so that $1/\rho(t_0)$ is defined, the series $\sum_{t=t_0}^\infty 1/\rho(t)$ converges.

The ``${}+ 10$'' term above is only here to ensure that the functions are indeed defined for small values.
\end{remark}

\begin{definition}\label{def:functions}
We use the functions of \Cref{def:phigamma} above to define, for all integers $n\geq 1$, $t\geq 1$, $p\geq 0$, the following functions:

\begin{align*}
    f(n, t, p) & := (\log_{r+1} n)^{\varphi(t)} - p/2 - 4^{\frac{1}{ \varphi(t-1) - \eta(t)}},\\
    h(n, t, p) & := (\log_{r+1} n)^{\eta(t)} + p/2 - 4^{\frac{1}{ \varphi(t-2) - \eta(t-1)}},\\
    g(n, t, p) & := \frac{n}{(6(r+1)) \pow \left(2(\log_{r+1} n)^{\gamma(t)} \cdot (3(\log_{r+1} n)^{\varphi(t)} - p) \right)},\ \text{and}\\
    s(n, t, p) & := \frac{g(n/3, t - 1, p)-1}{2r + 1}.
\end{align*}
\end{definition}

The properties of the functions $f$, $g$, $h$, and $s$ defined above that are crucial for our proof are given in \Cref{lem:mono} and \Cref{lem:bounds} below. The proofs of these properties are a sequence of tedious and relatively unexciting computations, so we defer them to \Cref{sec:stars-proofs}.

\begin{restatable}{lemma}{lemmono}\label{lem:mono}
For any integers $p\geq 0$, $t\geq 1$, and $n$ such that $\log_{r+1} n \ge 4 \pow \frac{1}{\varphi(t) \cdot(\varphi(t-1) - \eta(t))}$ and $p\leq 2\cdot  (\log_{r+1} n)^{\varphi(t)}$, we have 
\[
f(n, t-1, p) \geq f(n, t, p),\ 
g(n, t-1, p) \geq g(n, t, p),\ \text{and}\ 
h(n, t-1, p) \geq h(n, t, p).
\]
\end{restatable}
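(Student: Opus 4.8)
The plan is to prove each of the three inequalities in \Cref{lem:mono} separately, in each case reducing it to the monotonicity properties of $\varphi$, $\eta$, $\gamma$ listed in \Cref{def:phigamma} together with the hypothesis $\log_{r+1} n \ge 4 \pow \tfrac{1}{\varphi(t)\cdot(\varphi(t-1)-\eta(t))}$. Throughout, write $L := \log_{r+1} n$, so $L \ge 4 \pow \tfrac{1}{\varphi(t)\cdot(\varphi(t-1)-\eta(t))} \ge 1$, and observe that since $\varphi(t-1) > \eta(t) > \varphi(t)$ by \Cref{def:phigamma}, the exponents satisfy $\varphi(t-1) > \varphi(t)$ and $\eta(t-1) > \eta(t)$; also $\gamma$ is nondecreasing because $\gamma(t)-\gamma(t-1) \ge \cst\cdot\varphi(t-1) > 0$.

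First I would handle $f$. We have $f(n,t-1,p) - f(n,t,p) = \big(L^{\varphi(t-1)} - L^{\varphi(t)}\big) - \big(4^{1/(\varphi(t-2)-\eta(t-1))} - 4^{1/(\varphi(t-1)-\eta(t))}\big)$. The first bracket is nonnegative since $L \ge 1$ and $\varphi(t-1) \ge \varphi(t)$. The second bracket is nonpositive: by the fourth inequality of \Cref{def:phigamma}, $\varphi(t-2)-\eta(t-1) > \varphi(t-1)-\eta(t)$, so $1/(\varphi(t-2)-\eta(t-1)) < 1/(\varphi(t-1)-\eta(t))$, hence $4^{1/(\varphi(t-2)-\eta(t-1))} \le 4^{1/(\varphi(t-1)-\eta(t))}$. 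Thus $f(n,t-1,p) \ge f(n,t,p)$. The argument for $h$ is essentially identical, comparing $L^{\eta(t-1)}$ with $L^{\eta(t)}$ (nonnegative difference since $\eta(t-1)\ge\eta(t)$) and the two constant terms (using again the monotonicity of $s\mapsto \varphi(s-2)-\eta(s-1)$).

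For $g$ the argument is a bit more delicate because of the nested exponentials. Writing $g(n,t,p) = n \big/ \big((6(r+1)) \pow E(t)\big)$ with $E(t) := 2 L^{\gamma(t)}\big(3L^{\varphi(t)} - p\big)$, it suffices to show $E(t-1) \le E(t)$, since then the denominator is smaller for $t-1$. Here the two competing effects are $\gamma(t-1) \le \gamma(t)$ (which decreases $E$) and $\varphi(t-1) \ge \varphi(t)$ (which increases it); we want the $\varphi$-effect to dominate. Using $p \le 2L^{\varphi(t)}$ we have $3L^{\varphi(t)} - p \ge L^{\varphi(t)} > 0$, so $E(t) \ge 2L^{\gamma(t)+\varphi(t)}$, while $E(t-1) \le 2L^{\gamma(t-1)}\cdot 3L^{\varphi(t-1)} = 6L^{\gamma(t-1)+\varphi(t-1)}$. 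It then suffices to check $6L^{\gamma(t-1)+\varphi(t-1)} \le 2L^{\gamma(t)+\varphi(t)}$, i.e.\ $3 \le L^{(\gamma(t)-\gamma(t-1)) - (\varphi(t-1)-\varphi(t))}$. Since $\gamma(t)-\gamma(t-1) \ge \cst\cdot\varphi(t-1) \ge 8\varphi(t-1)$ and $\varphi(t-1)-\varphi(t) \le \varphi(t-1)$, the exponent is at least $7\varphi(t-1) > 0$, and since $L \ge 4 \pow \tfrac{1}{\varphi(t)\cdot(\varphi(t-1)-\eta(t))}$ is large (in particular $L \ge 4 \ge 3^{1/(7\varphi(t-1))}$ as soon as $7\varphi(t-1) \ge 1$, which however need not hold — so instead one uses that $L^{7\varphi(t-1)} \ge (4 \pow \tfrac{1}{\varphi(t)(\varphi(t-1)-\eta(t))})^{7\varphi(t-1)} = 4^{7\varphi(t-1)/(\varphi(t)(\varphi(t-1)-\eta(t)))} \ge 4 \ge 3$, using $\varphi(t-1) \ge \varphi(t)$ and $\varphi(t-1)-\eta(t) \le 1$ to bound the exponent of $4$ below by $7$, hence by $1$). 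This yields $E(t-1)\le E(t)$ and therefore $g(n,t-1,p)\ge g(n,t,p)$.

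The main obstacle is the $g$ inequality: one must carefully juggle the constants $6(r+1)$, the factor $\cst$, the slack $3L^{\varphi(t)}-p$, and the lower bound on $L$ so that the $\gamma$-gap genuinely overwhelms the $\varphi$-gap; getting the chain of estimates to close without circular use of the hypotheses requires some care, but each individual step is elementary. All of this is a routine (if tedious) computation, consistent with the remark in the excerpt that these proofs are deferred to \Cref{sec:stars-proofs}.
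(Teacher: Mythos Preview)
Your proposal is correct and follows the same overall plan as the paper (treat the three inequalities separately and reduce each to the properties listed in \Cref{def:phigamma}), but the tactics differ in interesting ways. For $f$ and $h$ your argument is actually cleaner than the paper's: you observe directly that the difference splits as a sum of two nonnegative terms (using $\varphi(t-1)>\varphi(t)$, $\eta(t-1)>\eta(t)$, and item~(4) of \Cref{def:phigamma}), so the inequality holds for every $L\ge 1$ and the lower bound on $n$ is not even needed; the paper instead argues that the difference is monotone in $n$ and then checks it at the smallest admissible $n$, which is correct but more roundabout. For $g$ the situation is reversed: the paper reduces to the worst case $p=2L^{\varphi(t)}$ and obtains the tight comparison $\gamma(t)+\varphi(t)\ge \gamma(t-1)+\varphi(t-1)$, which again needs nothing about $n$; your separate upper and lower bounds on $E(t-1)$ and $E(t)$ lose a factor~$3$ that you then absorb via the hypothesis on $L$. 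Both routes are valid, and each ``spends'' the large-$n$ hypothesis on a different one of the three inequalities.
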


\begin{restatable}{lemma}{lembounds}\label{lem:bounds}
For any integers $r\ge 1, t\ge 1, p\ge 0, n\ge 1$, such that %
\begin{align}
    \label{cond:bign} \log_{r+1} n &\geqslant (2 + p/2)^{1/\varphi(t)} + 4^{\frac{1}{\varphi(t) \cdot (\varphi(t-1) - \eta(t))}} \quad and\\
    \label{cond:smallp} p &< 2 (\log_{r+1} n)^{\varphi(t)},
\end{align}
\noindent we have the following inequalities:
\begin{align}
    \label{eq:recursionf} f\big(s(n, t, p), t, p+1\big) &\geq f(n, t, p) - 1,\\
    \label{eq:recursionh} h\big(s(n, t, p), t, p+1\big) &\geq h(n, t, p),\\
    \label{eq:recursiong} g\big(s(n, t, p), t, p+1\big) &\geq g(n, t, p),\\
    \label{eq:middlef} f(n/3, t-1, p) &\geq h(n, t, p),\ \text{and}\\
    \label{eq:middleg} s(n,t,p) &\geq g(n, t, p).
\end{align}
\end{restatable}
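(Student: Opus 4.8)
The plan is to verify each of the five inequalities \eqref{eq:recursionf}--\eqref{eq:middleg} by direct substitution of the closed formulas in \Cref{def:functions}, using only the monotonicity/size hypotheses \eqref{cond:bign}--\eqref{cond:smallp} together with the four defining inequalities of \Cref{def:phigamma}. Since nothing deep is happening, the work is organizational: I would first record a few ``master estimates'' on $s(n,t,p)$ and $\log_{r+1}s(n,t,p)$, and then feed them into each of the five claims.

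The first preparatory step is to control $\log_{r+1} s(n,t,p)$ from below in terms of $\log_{r+1} n$. Unfolding $s(n,t,p) = \frac{g(n/3,t-1,p)-1}{2r+1}$ and $g(n/3,t-1,p) = \frac{n/3}{(6(r+1))\pow(2(\log_{r+1}(n/3))^{\gamma(t-1)}(3(\log_{r+1}(n/3))^{\varphi(t-1)}-p))}$, one gets
\[
\log_{r+1} s(n,t,p) \;\ge\; \log_{r+1} n \;-\; C\cdot (\log_{r+1} n)^{\gamma(t-1)+\varphi(t-1)} \;-\; C'
\]
for absolute constants $C,C'$ (coming from the $\log_{r+1}(6(r+1))$ factor, the $/3$, the $-1$, and the $2r+1$, all of which contribute at most $O(\log_{r+1} n)^{\gamma(t-1)+\varphi(t-1)}$ after using \eqref{cond:smallp} to bound $p$). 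The key structural fact that makes everything go through is that $\gamma(t-1)+\varphi(t-1) < \gamma(t) \le 1$: indeed inequality~(3.1) gives $\gamma(t)-\gamma(t-1)\ge \cst\cdot\varphi(t-1) > \varphi(t-1)$, so the subtracted term is of strictly smaller polynomial order in $\log_{r+1} n$ than the ``budget'' $(\log_{r+1} n)^{\gamma(t)}$ appearing in $g(n,t,p)$, and also smaller than $(\log_{r+1} n)^{\varphi(t)}$ will turn out to matter for $f$ and $h$ via \eqref{eq:unmoins}. Condition~\eqref{cond:bign} is exactly what is needed to absorb the additive constants $C'$, $4^{1/(\varphi(t)(\varphi(t-1)-\eta(t)))}$, etc., so that the resulting bounds hold cleanly and not just asymptotically.

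With that master estimate in hand, the five inequalities reduce to real-variable calculus. For \eqref{eq:recursionf}, write $f(s,t,p+1) - f(n,t,p) = \big((\log_{r+1} s)^{\varphi(t)} - (\log_{r+1} n)^{\varphi(t)}\big) + \tfrac12$; using $\log_{r+1} s \ge \log_{r+1} n - (\log_{r+1} n)^{\gamma(t-1)+\varphi(t-1)}\cdot(\text{const})$, concavity of $x\mapsto x^{\varphi(t)}$, and the gap $\gamma(t-1)+\varphi(t-1) < 1$, the first bracket is at least $-\tfrac32$, giving the claimed $\ge -1$; here the ``$-4^{1/(\varphi(t-1)-\eta(t))}$'' tail terms cancel since $p$ and $t$ are unchanged up to the $p\to p+1$ shift already accounted for by the $+\tfrac12$. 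Inequalities \eqref{eq:recursionh} and \eqref{eq:recursiong} are the easier ``$\ge$, not $\ge -1$'' analogues: for $h$ the $p/2$ term increases, for $g$ one checks that decreasing $n$ to $s(n,t,p)$ while increasing $p$ by $1$ does not decrease the ratio, again because the denominator's exponent $2(\log_{r+1}\cdot)^{\gamma(t)}(3(\log_{r+1}\cdot)^{\varphi(t)}-p)$ shrinks faster than the numerator, using $\gamma(t)+\varphi(t)<1$ (which follows from \eqref{eq:unmoins} applied at $t+1$, i.e.\ $1-\gamma(t)\ge\cst\,\varphi(t)>\varphi(t)$). For \eqref{eq:middlef} I compare $f(n/3,t-1,p) = (\log_{r+1}(n/3))^{\varphi(t-1)} - p/2 - 4^{1/(\varphi(t-2)-\eta(t-1))}$ with $h(n,t,p) = (\log_{r+1} n)^{\eta(t)} + p/2 - 4^{1/(\varphi(t-2)-\eta(t-1))}$: the two tail constants are identical, so it suffices that $(\log_{r+1}(n/3))^{\varphi(t-1)} - p/2 \ge (\log_{r+1} n)^{\eta(t)} + p/2$; since $\varphi(t-1) > \eta(t)$ by \Cref{def:phigamma}, the power gap dominates $p \le 2(\log_{r+1} n)^{\varphi(t)} < 2(\log_{r+1} n)^{\eta(t)}$ once $\log_{r+1} n$ exceeds the explicit threshold in \eqref{cond:bign} (this is precisely where the ``$(2+p/2)^{1/\varphi(t)}$'' summand in \eqref{cond:bign} is used). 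Finally \eqref{eq:middleg}, $s(n,t,p)\ge g(n,t,p)$, follows because $s(n,t,p) = \frac{g(n/3,t-1,p)-1}{2r+1}$ and $g(n/3,t-1,p)$ is, up to the harmless $/3$, $-1$, $/(2r+1)$ factors, at least $g(n,t,p)$ itself — the denominators compare as $(6(r+1))\pow(\cdots)$ with the $t-1$ version having the smaller exponent since $\gamma(t-1)<\gamma(t)$ and $\varphi(t-1)$-versus-$\varphi(t)$ both push the exponent down — and the lost constant factors are reabsorbed by the slack coming from \eqref{cond:bign}.

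The main obstacle is purely bookkeeping: making sure that every additive constant lost along the way (the $\log_{r+1}(6(r+1))$, the iterated logs, the $-1$ and $2r+1$ in $s$, the $p/2$ shifts, and the $4^{1/(\varphi(\cdot)-\eta(\cdot))}$ tails) is genuinely dominated by the corresponding power-gap $(\log_{r+1} n)^{a} - (\log_{r+1} n)^{b}$ with $a>b$, and that the hypotheses \eqref{cond:bign}--\eqref{cond:smallp} are strong enough in every case. The qualitative reason this always works is the single inequality $\cst\cdot\varphi > $ (difference of consecutive $\gamma$'s, and $1-\gamma$), with $\cst$ chosen large ($\cst = 8$), which creates enough room to swallow all the constants; the fourth condition of \Cref{def:phigamma}, $\varphi(t-1)-\eta(t) > \varphi(t)-\eta(t+1)$, is what keeps the tail terms $4^{1/(\varphi(t-1)-\eta(t))}$ from blowing up as $t$ grows, so that the threshold in \eqref{cond:bign} stays finite. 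As indicated in the statement, the full details are deferred to \Cref{sec:stars-proofs}.
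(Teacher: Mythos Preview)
Your overall strategy---establish a master lower bound $\log_{r+1} s(n,t,p) \ge \ell - O(\ell^{\gamma(t-1)+\varphi(t-1)})$ and then feed it into each of the five inequalities---is exactly what the paper does; the paper packages the master estimate as a separate lemma (\Cref{lem:lowbdstretch}) and replaces your concavity bound by an explicit elementary inequality $(\ell - c_1\ell^{c_0})^x \ge \ell^x - 1/2$ valid under stated hypotheses on $c_0,c_1,x,\ell$ (\Cref{lem:log inequality}).

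That said, several details in your sketch are wrong and would need correction. In \eqref{eq:recursionf} you have a sign error: $f(s,t,p+1)-f(n,t,p)$ equals the bracket \emph{minus} $\tfrac12$ (the $-p/2$ term becomes more negative when $p\to p+1$), so you actually need the bracket $\ge -\tfrac12$, not $\ge -\tfrac32$; the paper obtains exactly $-\tfrac12$ via \Cref{lem:log inequality}. For \eqref{eq:recursiong} the decisive inequality is not $\gamma(t)+\varphi(t)<1$ but $\gamma(t-1)+\varphi(t-1)<\gamma(t)$ (from \Cref{def:phigamma}(3.1)): the saving comes from the $p\to p+1$ shift, which subtracts $2\ell^{\gamma(t)}$ from the denominator's exponent, and this must dominate the $O(\ell^{\gamma(t-1)+\varphi(t-1)})$ lost in the numerator when passing from $n$ to $s$; your stated inequality $\gamma(t)+\varphi(t)<1$ actually makes the exponent \emph{less} sensitive to the change in $\ell$, which pushes the wrong way. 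Finally, in \eqref{eq:middleg} you say ``$\varphi(t-1)$-versus-$\varphi(t)$ both push the exponent down'', but $\varphi(t-1)>\varphi(t)$ pushes it \emph{up}; the net comparison $\gamma(t-1)+\varphi(t-1)<\gamma(t)+\varphi(t)$ still holds, but only because the $\gamma$-gap $\gamma(t)-\gamma(t-1)\ge \cst\,\varphi(t-1)$ overwhelms the $\varphi$-increase.
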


\medskip

Let $G$ be a graph with a Hamiltonian path $P = v_1, \dots, v_n$, $n\geq 2$. For any two integers $a$ and $b$ such that $1\le a\le b \le n$, we denote by $G[a,b]$ the ordered subgraph of $G$ induced by the vertices $v_a, \ldots,v_b$.
Let $a_1, \dots, a_d$ denote the indices of the neighbors of $v_1$ in $G$ (in the same order as in $P$, so $a_1=2$) and let $a_{d+1}=n$.
The \emph{stretch} of $G$ is defined as $\max_{i\in \intv{1}{d} } a_{i+1}-a_i$.
Let $i \in \intv{0}{d}$ be the minimum index maximizing $a_{i+1}-a_{i}$ and call the ordered subgraph $G[a_{i},a_{i+1}-1]$ of $G$ the \emph{successor} of $G$.

\medskip

\begin{lemma}\label{lem:stretch}
Let $G$ be a graph with a Hamiltonian path $P=v_1, \dots ,v_n$ and let $s,m\in \N_{\geq 1}$ 
with $m < n$. 
If for every $i,j \in \intv{1}{n}$ such that $j-i+1 \geq n/m$, $G[i,j]$ has stretch at least $\frac{j - i + 1}{s}$ then $G$ has an increasing induced path of length at least $\frac{\log m}{\log s}$ starting from $v_1$. 
\end{lemma}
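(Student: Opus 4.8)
The plan is to build the increasing induced path greedily by repeatedly passing to the \emph{successor} subgraph defined just before the statement, and to control the process by tracking the order of the current subgraph. Concretely, I set $G_0 := G$ and, as long as the current subgraph $G_j$ has order at least $n/m$, I let $G_{j+1}$ be the successor of $G_j$: if $G_j = G[i,j']$ with $v_i$ the first vertex and $a_1, \dots, a_d$ the indices (within $[i,j']$) of the neighbors of $v_i$ in $G$, with $a_{d+1} = j'$, then by hypothesis $G_j$ has stretch at least $(j'-i+1)/s$, so the chosen gap $a_{\ell+1} - a_{\ell}$ is at least $|G_j|/s$, and $G_{j+1} = G[a_\ell, a_{\ell+1}-1]$ has order at least $|G_j|/s$. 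Hence after $k$ steps the order of $G_k$ is at least $n/s^k$, so the process can be run for at least $\log_s(m) = \frac{\log m}{\log s}$ steps before the order drops below $n/m$.

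The key point is that the first vertices of the successive subgraphs form an induced path. Write $w_j$ for the first vertex of $G_j$, so $w_0 = v_1$. By construction $w_{j+1} = v_{a_\ell}$ where $a_\ell$ is the index of a neighbor of $w_j$ in $G$ (using that $a_1 = 2$ corresponds to an actual neighbour, and the indices $a_1 < \dots < a_d$ are exactly the neighbours of $w_j$ to the right of $w_j$ inside $G_j$; here I use that $P$ is a Hamiltonian path, so $v_i v_{i+1}$ is always an edge and in particular $a_1$ is the successor index of $w_j$ along $P$). Thus $w_j w_{j+1} \in E(G)$ for every $j$, and the indices are strictly increasing, so $w_0, w_1, w_2, \dots$ is an increasing walk. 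To see it is an induced path I argue that it is \emph{chordless}: first, $G_{j+1}$ is entirely contained in the interval $[a_\ell, a_{\ell+1}-1]$, which lies strictly between two consecutive neighbours of $w_j$; therefore $w_j$ has no neighbour in $G_{j+1}$ other than $w_{j+1}$ itself (the vertices $w_{j+2}, w_{j+3}, \dots$ all lie inside $G_{j+1}$), and since the vertex sets $V(G_0) \supseteq V(G_1) \supseteq \cdots$ are nested, $w_j$ likewise has no neighbour among $w_{j+2}, w_{j+3}, \dots$. Applying this to every index shows no chord exists, so $w_0, w_1, \dots$ is an induced path, increasing, starting at $v_1$, of order at least $\lfloor \log_s m\rfloor + 1 \ge \frac{\log m}{\log s}$.

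There is one subtlety to be careful about: the stretch hypothesis is only guaranteed for subgraphs $G[i,j']$ of order at least $n/m$, which is exactly why I stop the recursion once the order drops below $n/m$ — every subgraph to which I apply the stretch bound has order at least $n/m$, so the hypothesis applies. I also need the successor to be well defined, i.e.\ that the first vertex of each $G_j$ has at least one neighbour to its right in $G_j$; this holds because $P$ is Hamiltonian, so the path-successor of $w_j$ lies in $G_j$ (as long as $|G_j|\ge 2$, which is ensured by $|G_j|\ge n/m > 1$). I expect the main obstacle to be purely bookkeeping: verifying that the indices $a_1 < \dots < a_d$ used at each step are consistent between the ``local'' numbering inside $G_j$ and the global numbering in $P$, and that the nestedness of the $V(G_j)$ really does kill all potential chords — once that is set up cleanly, the order bound $n/s^k$ and the resulting length $\frac{\log m}{\log s}$ are immediate.
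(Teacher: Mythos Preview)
Your proposal is correct and follows essentially the same approach as the paper's own proof: iteratively take successors $G_0 = G, G_1, G_2, \ldots$ while $|G_j| \ge n/m$, observe that the first vertices $w_j$ form an increasing induced path (since $w_j$'s only neighbor in $G_{j+1}$ is $w_{j+1}$), and use the stretch hypothesis to get $|G_k| \ge n/s^k$, forcing at least $\log m / \log s$ steps. The paper's write-up is terser but the argument is identical.
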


\begin{proof}
Let $G_0 = G$. For every $i\ge 0$ and as long as $|G_i|\geq n / m$, we define $G_{i+1}$ as the successor of $G_i$. Let $p$ be the index of the last graph defined that way. For every $i \in \intv{1}{p}$, let $v_{a[i]}$ be the first vertex of $G_i$. Clearly $G_{i+1}$ is an (ordered) induced subgraph of $G_i$ and $v_{a[i]}$ has only one neighbor in $G_{i+1}$, that is $v_{a[i+1]}$. So, $v_{a[0]}, \dots, v_{a[p]}$ is an induced path.

By definition of $p$, for every $i\in \intv{0}{p-1}$, $|G_i| \geq n/m$ so by assumption $G_i$ has stretch at least $|G_i|/s$. Hence, $G_{i+1}\geq |G_i|/s \geq n/s^{i+1}$.
Recall that $|G_p| < n/m$. So, $n/s^p \leq n/m$ and $p \geq \frac{\log m}{\log s}$.
\end{proof}

The following is the main technical result of the section.

\begin{theorem}\label{th:peel}
Let $r\ge 1$ be a fixed integer, and let $f$ and $g$ be the functions introduced in Definition~\ref{def:functions}. Let $H$ be a $(t,r)$-constellation.
Let $G$ be a graph with a Hamiltonian path $P=v_1,\ldots,v_n$. Then, either $(G,P)$ contains the pattern $H$ with gap at least $g(n, t, 0)$, or $G$ contains an induced path of size at least $f(n, t, 0)$ which is increasing with respect to the order $v_1,\ldots,v_n$.
\end{theorem}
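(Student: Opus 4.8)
The plan is to prove by induction a statement slightly stronger than \Cref{th:peel}, which also keeps track of the auxiliary function $h$ of \Cref{def:functions} and records that the increasing induced path produced can sometimes be taken to start at $v_1$. Precisely, for a $(t,r)$-constellation $H$, a graph $G$ with Hamiltonian path $P=v_1,\dots,v_n$, and an integer $p\ge 0$, I will show that at least one of the following holds: (i) $(G,P)$ contains $H$ as a pattern with gap at least $g(n,t,p)$; (ii) $G$ has an increasing induced path of order at least $f(n,t,p)$ whose first vertex is $v_1$; or (iii) $G$ has an increasing induced path of order at least $h(n,t,p)$, with no constraint on its endpoints. Since $h(n,t,0)\ge f(n,t,0)$ when $n$ is large enough, and $f(n,t,0)<2$ otherwise (so that outcome (ii) is witnessed trivially by $v_1,v_2$), the case $p=0$ of this statement implies \Cref{th:peel}. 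Outcome (iii) is the device that makes the induction close: as soon as we are forced to decrease $t$ we can no longer keep the path anchored at $v_1$, but the extra length gained lands us safely in (iii).

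The induction is on the pair $(t,n)$ with the lexicographic order, and it follows the three-way inductive definition of a constellation. If $H=H_1\cdot H_2$ is a concatenation of a $(t_1,r)$- and a $(t_2,r)$-constellation with $t_1,t_2\ge 1$, I cut $G$ into three consecutive intervals of orders about $n/3$, apply the induction hypothesis to $H_1$ on the first and to $H_2$ on the last; a long path in either part already settles (ii) or (iii) for $G$ (using $f(n/3,t_i,p)\ge f(n,t,p)$ and $f(n/3,t_i,p)\ge h(n,t,p)$), while two patterns get concatenated into a copy of $H$, the reserved middle third forcing a gap of more than $n/3\ge g(n,t,p)$ between them and $g(n/3,t_i,p)\ge g(n,t,p)$ controlling the internal gaps. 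If $H$ is a left constellation I reverse the vertex order — which exchanges left and right stars and left and right constellations, preserves gaps, and, after possibly reversing the path, preserves the property of being an increasing induced path — reducing to the case of a right constellation. If $H$ is a right constellation with first star $S_1$ a right $r$-star and $H':=H-S_1$ a $(t-1,r)$-constellation, I pass to the successor of $G$ (in the sense of the paragraph preceding \Cref{lem:stretch}), following the smallest neighbour of $v_1$. If the current graph has stretch at least $s(n,t,p)$, its successor $G''$ has order at least $s(n,t,p)$ and $v_1$ has a unique neighbour in $G''$, namely its first vertex; the induction hypothesis on $G''$ with parameter $p+1$ then either transfers a pattern (outcome (i), via \eqref{eq:recursiong}), or transfers an outcome-(iii) path (via \eqref{eq:recursionh}), or yields an outcome-(ii) path to which we prepend $v_1$, the loss of one vertex to the parameter increment being exactly absorbed by \eqref{eq:recursionf}. (When the hypotheses \eqref{cond:bign}--\eqref{cond:smallp} fail, $f(n,t,p)<2$ and (ii) is again witnessed by $v_1,v_2$.)

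It remains to treat the case where the current first vertex $v$ has stretch less than $s(n,t,p)$, i.e.\ its neighbourhood is $s(n,t,p)$-dense in the current graph; this $v$ is the centre of $S_1$. I apply the induction hypothesis at level $t-1$, with the same parameter $p$, to a central sub-interval of order at least $n/3$ of the current graph. A returned path (outcome (ii) or (iii) at level $t-1$) has order at least $h(n,t,p)$ by \eqref{eq:middlef} and $h(n/3,t-1,p)\ge h(n,t,p)$, hence gives outcome (iii) for $G$. A returned copy of $H'$ has gap at least $g(n/3,t-1,p)$; I insert the $r$ leaves of $S_1$ into the ``slots'' around it: the leaves that lie, in the order of $H$, before (resp.\ after) all of $H'$ go into the roughly $n/3$ free positions on the left (resp.\ right) of the copy, and the leaves interleaved with $H'$ go between consecutive vertices of the copy, each such gap having width at least $g(n/3,t-1,p)-1=(2r+1)s(n,t,p)$, which accommodates up to $r$ leaves while keeping every pairwise distance at least $s(n,t,p)$ (and avoiding the edge of the path incident with $v$). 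The resulting copy of $H$ has gap at least $s(n,t,p)\ge g(n,t,p)$ by \eqref{eq:middleg}. The base case $t=1$, where the reduction to $t-1=0$ is unavailable, is handled by the same successor iteration directly: it either produces the required path or reaches a dense vertex, and partitioning the current interval into $r$ equal blocks yields $r$ of its neighbours spread out enough to form the required $r$-star.

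Throughout, the monotonicity of $f,g,h$ in $t$ (\Cref{lem:mono}) and their easy monotonicity in the first argument are used to line up the recursive estimates, and the lengthy but routine verifications of \eqref{eq:recursionf}--\eqref{eq:middleg}, together with the few auxiliary comparisons quoted above (all consequences of properties of $\varphi,\eta,\gamma$ from \Cref{def:phigamma} such as $\varphi(t)<\eta(t)<\varphi(t-1)$, the monotonicity of $\gamma+\varphi$, and $\gamma(t)+\varphi(t)<1$ for the functions of \Cref{rem:exist-phigamma}), are exactly what \Cref{lem:bounds} and \Cref{sec:stars-proofs} provide. I expect the main difficulty to be the dense-centre case above: one must choose the sub-interval carried into the level-$(t-1)$ recursion so that it is a constant fraction of the graph, leaves room on both sides of the returned copy of $H'$ for the leaves of $S_1$ lying outside it in the order of $H$, and makes a returned path long enough to report without anchoring it at $v_1$; and then one must check that the density $s(n,t,p)=(g(n/3,t-1,p)-1)/(2r+1)$ of the centre's neighbourhood is precisely enough to realise every possible interleaving of the leaves of $S_1$ with the copy of $H'$ while keeping the gap above $g(n,t,p)$.
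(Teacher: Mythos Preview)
Your proposal is correct and follows essentially the same route as the paper: the same three-outcome strengthening (your (i)--(iii) are the paper's (P3), (P1), (P2)), the same concatenation/left/right case split, and the same dense-versus-large-stretch dichotomy in the right-constellation step, closed by exactly the inequalities \eqref{eq:recursionf}--\eqref{eq:middleg} from \Cref{lem:bounds}. Two cosmetic differences are worth flagging: first, the paper recurses on the middle third at level $t-1$ \emph{before} testing the stretch (via a claim that is the contrapositive of your dense-case insertion), whereas you test the stretch first --- the two orderings are interchangeable; second, the paper formulates the anchored-path outcome as ``starts at $v_1$'' for right constellations and ``ends at $v_n$'' for left ones, and your reversal trick actually needs this symmetric version of (ii), since reversing turns a path starting at the first vertex of the reversed graph into one ending at $v_n$, not one starting at $v_1$.
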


\begin{proof}
Recall that $r\ge 1$ is fixed and all the functions of Definition \ref{def:functions} implicitly depend on~$r$.
Recall also that by definition, $H$ is either a right constellation, a left constellation, or a concatenation of smaller constellations.
For the sake of induction we will actually prove the following stronger proposition $\Prop(t, n, p)$ for any integers $t\ge 1$, $n\ge 1$, and $p\ge 0$:
\begin{description}
    \item[$\Prop(n, t, p)$] 
One of the following holds
\begin{enumerate}[(P1)]
    \item \label{it:first} $H$  is a right (resp.\ left) $(t,r)$-constellation and $G$ contains an increasing induced path of order $f(n, t, p)$ starting at the first (resp.\ ending at the last) vertex, or
    \item \label{it:induced} $G$ contains an increasing induced path of order $h(n, t, p)$, or
    \item \label{it:pattern} $G$ contains the pattern $H$ with gap at least $g(n, t, p)$.
\end{enumerate}
\end{description}

\noindent {\bf Base case ($t = 1$).}
\newcommand{\Pa}{(\hyperref[it:first]{P1})}
\newcommand{\Pb}{(\hyperref[it:induced]{P2})}
\newcommand{\Pc}{(\hyperref[it:pattern]{P3})}

When $t = 1$, $H$ is a right or left star (and in particular, a left or right constellation). By symmetry, we can assume that $H$ is a right star. We will prove that either \Pa{} or \Pc{} holds in this case. To do so, we call \Cref{lem:stretch} with $s := 2r$ and $m := 2^{f(n, 1, p) \cdot \log 2r}$.

The first step is to prove that either \Pc{} holds, or for any indices $i$ and $j$ with $j - i + 1 \ge n/m$, the subgraph $G[i,j]$ has stretch at least $\frac{j - i + 1}{s}$.
Hence, assume that for such pair $i,j$, $G[i,j]$ has stretch less than $\frac{j - i + 1}{s} = \frac{j - i +1}{2r}$. Then, one finds the star $H$ with gap at least $\frac{n}{ms}$ as follows: take $v_i$, and a neighbor of $v_i$ in each interval $\left [i + (2k - 1)\frac{j - i + 1}{s}, i + 2k\frac{j - i + 1}{s}-1 \right ]$ for $1\le k\le r$.
But since
\begin{align*}
sm & = 2r \cdot 2^{f(n, 1, p) \cdot \log 2r}\\
& = 2r \cdot 2 * \left( \left( (\log_{r+1} n)^{\varphi(1)} - p/2 - 4^{\frac{1}{\varphi(0) - \eta(1)}} \right) \cdot \log 2r \right) &\text{by definition}\\
& \le 2r \cdot (2r) * \left( (\log_{r+1} n)^{\varphi(1)} - p/2 \right)\\
& \le  (2r) * \left( 2 \left( (\log_{r+1} n)^{\varphi(1)} - p/2 \right) \right) &\text{since $f(n, 1, p) \ge 1$} \\
& <  (6(r+1)) * \left( 6 \left( (\log_{r+1} n)^{\varphi(1)} - p/2 \right) \right)\\
& =  (6(r+1)) * \left( 6(\log_{r+1} n)^{\varphi(1)} - 3p  \right) = n/g(n, 1, p)
\end{align*}
we have $\frac{n}{ms} \ge g(n, 1, p)$, and so we found $H$ with gap at least $g(n, 1, p)$ in $G$, and proved \Pc{}.
Hence, we can assume that $G[i,j]$ has stretch at least $\frac{j - i + 1}{s}$ for any pair $i,j$ with $j-i+1\ge n/m$, and thus we can apply \Cref{lem:stretch}, finding a path that starts in $v_1$ of size $\frac{\log m}{\log s} = \frac{f(n, 1, p)\log 2r}{\log 2r} = f(n, 1, p)$ This proves that \Pa{} holds and concludes the proof of the base case $(t=1)$.

\medskip
\noindent {\bf Induction step ($t>1$).}
We distinguish two cases below, depending whether $H$ is a concatenation of smaller constellations or a left or right constellation. For the induction we will assume that that $t>1$ and that for every $t'<t$, and every $n'\geq 1$ and $p'\geq 0$, $\Prop(n',t',p')$ holds.

\medskip
\noindent {\it Case 1: $H$ is a concatenation of non-empty constellations.}

\smallskip

If $H$ is the concatenation of a $(t_1, r)$-constellation $H_1$ and  a $(t_2, r)$-constellation $H_2$ (with $t_1,t_2>0$ and $t_1+t_2=t$), then we apply induction on $G_1=G[1, \lceil n/3 \rceil]$ with pattern $H_1$, and induction on $G_2=G[\lfloor 2n/3 \rfloor,n]$ with pattern $H_2$. If, in one of them, the outcome is \Pa{}, then the resulting increasing induced path has order at least $f\big(n/3, \max(t_1, t_2), p\big)\ge f(n/3, t - 1, p)$, since $f$ is decreasing in $t$ (by \Cref{lem:mono}). If, in one of them, the outcome is \Pb{}, then the resulting increasing induced path has order at least $h\big(n/3, \max(t_1, t_2), p\big)\ge h(n/3, t - 1, p)\ge f(n/3, t - 1, p)$. It then follows from  \Cref*{lem:bounds}.\eqref{eq:middlef}, that in both cases this path has order at least  $h(n, t, p)$. Hence, \Pb\ holds for $G$. 


Otherwise, both applications of the induction hypothesis result in \Pc{} for $G_1$ with pattern $H_1$ and $G_2$ with pattern $H_2$. That is, we find $H_1$ with gap $g(n/3, t_1, p)$ in $G[1,\lceil n/3 \rceil]$ and $H_2$ with gap $g(n/3, t_2, p)$ in $G[\lfloor 2n/3 \rfloor,n]$. Since $g$ is decreasing with $t$ (by \Cref{lem:mono}), in particular, we find $H_1$ and $H_2$ each with gap at least $g(n/3, t-1, p)$ which is at least $s(n, t, p)$, so at least $g(n, t, p)$ by \Cref*{lem:bounds}.\eqref{eq:middleg}. Furthermore, the patterns $H_1$ and $H_2$ are separated by at least $n/3 - 2 \ge g(n, t, p)$ vertices, so we have the pattern $H=H_1 \cdot H_2$ with gap at least $g(n, t, p)$. Hence, \Pc{} holds.

\medskip
\noindent {\it Case 2: $H$ is a left or right constellation.}

\smallskip

We will extensively use the following two claims.
\begin{claim}\label{cl:bigp}
If $p \geqslant 2 \cdot f(n, t, 0)$ then $\Prop(n,t,p)$ holds.
\end{claim}
\begin{proof}
Indeed in this case by \hyperref[def:functions]{definition} of $f$ we have 
$
f(n,t,p) \leqslant 0,
$
so \Pa{} is trivially satisfied.%
\cqed
\end{proof}
\begin{claim}\label{cl:smalln}
If $\log_{r+1} n < (2 + p/2)^{1/\varphi(t)} + 4^{\frac{1}{\varphi(t) \cdot(\varphi(t-1) - \eta(t))}}$ then $\Prop(n,t,p)$ holds.
\end{claim}
\begin{proof}
Indeed in this case 
\begin{align*}
f(n,t,p) & = \left(\log_{r+1} n\right)^{\varphi(t)} - p/2 - 4^{\frac{1}{ \varphi(t-1) - \eta(t)}}& \text{by definition}\\
& < \left(\left(2 + p/2\right)^{1/\varphi(t)} + 4^{\frac{1}{\varphi(t) \cdot( \varphi(t-1) - \eta(t))}}\right)^{\varphi(t)} - p/2 - 4^{\frac{1}{ \varphi(t-1) - \eta(t)}}\\
& \le \left( \left(2 + p/2\right)^{1/\varphi(t)}\right)^{\varphi(t)} + \left(4^{\frac{1}{\varphi(t) \cdot (\varphi(t-1) - \eta(t))}}\right)^{\varphi(t)} - p/2 - 4^{\frac{1}{ \varphi(t-1) - \eta(t)}}\\&&\hspace{-3cm} \text{by sub-additivity of $x \mapsto x^{\varphi(t)}$}\\
& = 2 + p/2 + 4^{\frac{1}{\varphi(t-1) - \eta(t)}} - p/2 - 4^{\frac{1}{ \varphi(t-1) - \eta(t)}}\\
& = 2.
\end{align*}
So \Pa{} is satisfied by any edge incident to the first vertex of $G$.%
\cqed
\end{proof}

By \Cref{cl:bigp} and \Cref{cl:smalln} we can assume without loss of generality that
\begin{equation}\label{eq:clim}
p < 2 \cdot f(n, t, 0)\ \text{and}\ \log_{r+1} n \geqslant (2 + p/2)^{1/\varphi(t)} + 4^{\frac{1}{\varphi(t) \cdot(\varphi(t-1) - \eta(t))}}.
\end{equation}

\medskip

 By symmetry, up to reversing $P$, we may assume without loss of generality that $H$ is a right constellation. Let $H=S_1,\ldots,S_t$. Since $H$ is a right $(t,r)$-constellation, the vertex of degree $r$ in $S_1$ is the first vertex of $H$. Let $H^-$ be the ordered graph obtained from $H$ by deleting~$S_1$.

\begin{claim}\label{claim:large-stretch-or-pattern}
Let $k$ be an integer such that $k \leq n/3  - 2$.
Suppose that $G\left [\floor{n/3}, \floor{2n/3} \right ]$ contains the pattern $H^-$ with gap at least $k$.
Then either $G$ has stretch more than $\frac{k - 1}{2r + 1}$,
or $G$ contains the pattern $H$ with gap at least $\frac{k-1}{2r + 1}$.
\end{claim}
\begin{proof}
Suppose that $G$ has stretch at most $\frac{k - 1}{2r + 1}$.
Then $v_1$ has at least one neighbor in any set of $\frac{k-1}{2r + 1}$ consecutive vertices of $G$.
Hence, for any two vertices $v_a$ and $v_b$ of $G$ such that $b - a \geq k$, $v_1$ has $2r + 1$ neighbors indexed $i_0, \ldots, i_{2r}$
such that for all $0\le j\le 2r$,  $a + 1 + j\frac{k - 1}{2r + 1} \leq i_j < a + 1 + (j+1) \frac{k - 1}{2r + 1}$.
In particular, by taking all $i_j$ with $j$ in $\{ 1, 3, \ldots, 2r - 1\}$, one finds a copy with gap at least  $\frac{k-1}{2r + 1}$ of a right $r$-star centered in $v_1$ and whose leaves have indices between $a + \frac{k-1}{2r + 1}$ and $b - \frac{k-1}{2r + 1}$.

As $G\left [\floor{n/3}, \floor{2n/3} \right ]$ contains the pattern $H^-$ with gap at least $k$, $G$ contains as a pattern with gap at least $\frac{k-1}{2r + 1}$ the ordered graph consisting of $H^-$ preceded by a right $(r\cdot (|V(H^-)| + 1))$-star, having at least $r$ leaves in each gap of the pattern $H^-$, together with $r$ leaves before
$v_{\floor{n/3}}$ and $r$ leaves after $v_{\floor{2n/3}}$. In particular, $G$ contains  the pattern $H$ with gap at least $\frac{k-1}{2r + 1}$, as desired.\cqed
\end{proof}

Let $G_{\text{mid}} := G[\lfloor n/3 \rfloor, \lceil 2n/3 \rceil]$ be the graph induced by the central third of $G$.
By  induction on $t$, if $G_{\text{mid}}$ does not contain the pattern $H^-$ with  gap at least $g(n/3, t-1, p)$, then $G_{\text{mid}}$ either satisfies \Pa{} or \Pb\ (note that in the case of \Pa{}, the path may begin at the first vertex of $G_{\text{mid}}$, or end at the last vertex of $G_{\text{mid}}$ since $H^-$ can be either a right or left constellation).
Hence, $G_{\text{mid}}$ either contains an increasing induced path of order at least $f(n/3, t-1, p)$ or an increasing induced path of order at least $h(n/3, t-1, p) \ge f(n/3, t-1, p)$.
By \Cref*{lem:bounds}.\eqref{eq:middlef}, $f(n/3, t-1, p) \ge h(n, t, p)$ which ensures that $G$ indeed satisfies \Pb.

Hence we can assume that $G_{\text{mid}}$ contains the pattern $H^-$ with gap at least $g(n/3, t-1, p)$. 
By \Cref{claim:large-stretch-or-pattern} we know that either $G_{\text{mid}}$ (and $G$) contain the pattern $H$ with gap at least $\frac{g(n/3, t-1, p)-1}{2r + 1}$ or $G_{\text{mid}}$ has stretch at least $\frac{g(n/3, t-1, p)-1}{2r + 1}$.
In the first case, an application of \Cref*{lem:bounds}.\eqref{eq:middleg} gives $\frac{g(n/3, t-1, p)-1}{2r + 1} = s(n, t, p) \geqslant g(n, t, p)$ and thus $G$ satisfies \Pc{}, so we can assume that $G_{\text{mid}}$ has stretch at least $\frac{g(n/3, t-1, p)-1}{2r + 1}$.
Let $G'$ be the successor of $G_{\text{mid}}$.
Notice that 
\begin{equation}
|V(G')| \geqslant \frac{g(n/3, t - 1, p)-1}{2r + 1} = s(n, t, p). \label{eq:gprimes}    
\end{equation}
We now apply induction on $G'$ again with the pattern $H$.

\begin{enumerate}[(P1')]
    \item If $G'$ has an increasing induced path $Q$ with $f(|V(G')|, t, p+1)$ vertices \textbf{starting at its first vertex} (recall that $H$ was assumed to be a right $(t,r)$-constellation),
    then the path $v_1Q$ is an increasing induced path of $G$ starting at $v_1$ and has order at least
    \begin{align*}
        1 + f(|V(G')|, t, p+1) &\geqslant 1 + f(s(n, t, p), t, p+1).&\text{by \eqref{eq:gprimes}}\\
        & \geqslant f(n, t, p) & \text{by \Cref*{lem:bounds}.\eqref{eq:recursionf}},
    \end{align*}
    hence $G$ satisfies \Pa.
    \item If $G'$ has an increasing induced path $Q$ with at least $h(|V(G')|, t, p+1)$ vertices, then observe that
    \begin{align*}
        |Q| &\geqslant h\big(s(n, t, p), t, p+1\big)&\text{by \eqref{eq:gprimes}}\\
        &\geqslant h(n, t, p) &\text{by \Cref*{lem:bounds}.\eqref{eq:recursionh}},
    \end{align*}
    hence $G$ satisfies \Pb.
    \item If $G'$ contains the pattern $H$ with gap at least $g(|V(G')|, t, p+1)$,
    then
    \begin{align*}
        g\big(|V(G')|, t, p+1\big) &\geqslant g\big(s(n, t, p), t, p+1\big)&\text{by \eqref{eq:gprimes}}\\
        &\geqslant g(n, t, p)   &\text{by \Cref*{lem:bounds}.\eqref{eq:recursiong}},
    \end{align*}
    hence $G$ satisfies \Pc.
\end{enumerate}
Hence $\Prop(t, n, p)$ holds for any $n\geq 1$ and $p\ge 0$ (and in particular for $p=0$).
\end{proof}

\subsection{Application to graphs avoiding some topological minor}

We now explain the main application of Theorem \ref{th:peel} on (unordered) graphs avoiding some topological minors. We say that a graph $G$ contains some graph $H$ as a \emph{topological minor} if $G$ contains some subdivision of $H$ as a subgraph (where a \emph{subdivision} of $H$ is a graph obtained from $H$ by replacing each edge by a path of arbitrary length).

\begin{corollary}\label{cor:topomin}
Let $t > 1$ be a positive integer. If a graph $G$ contains an $n$-vertex path $P$ and does not contain $K_t$ as a topological minor, then $G[P]$ contains an induced path of order $(\log n)^{\Omega\left (1/t(\log t)^2\right )}$ which is increasing with respect to $P$.
\end{corollary}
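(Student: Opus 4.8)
The plan is to produce, for each integer $t\ge 3$, a single $(t,t-1)$-constellation $H_t$ such that whenever a pair $(G,P)$ with $P$ a Hamiltonian path contains $H_t$ as a pattern, $G$ contains a subdivision of $K_t$; then \Cref{th:peel} does the rest. First I would replace $G$ by $G[P]$, which still excludes $K_t$ as a topological minor and in which $P$ is Hamiltonian, and set $r:=t-1$. The case $t=2$ is vacuous, as $G[P]$ already has an edge and hence $K_2$ as a topological minor.

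To build $H_t$, take $t$ right stars $S_1,\dots,S_t$ with centres $b_1,\dots,b_t$, where $S_i$ has exactly $t-1$ leaves $\ell_i^{(j)}$, one for each $j\in\{1,\dots,t\}\setminus\{i\}$. Order the vertices of $H_t$ as $b_1,b_2,\dots,b_t$, followed, for each pair $i<j$ taken in lexicographic order, by the two vertices $\ell_i^{(j)}$ and $\ell_j^{(i)}$ placed consecutively. Then $H_t$ is a right constellation: at each step the first vertex of the remaining graph is some $b_i$, the centre of the right star $S_i$, which we peel off.

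The heart of the argument is the claim that if $(G[P],P)$ contains $H_t$ as a pattern, then $G[P]$ contains a subdivision of $K_t$. Identify the vertices of $H_t$ with their (order-preserving) images on $P$. Take the $b_i$'s as branch vertices, and for each $i<j$ let $Q_{ij}$ be the path formed by the chord $b_i\ell_i^{(j)}$, followed by the subpath of $P$ from $\ell_i^{(j)}$ to $\ell_j^{(i)}$, followed by the chord $\ell_j^{(i)}b_j$; the middle subpath is well defined and its interior contains no image of an $H_t$-vertex, because $\ell_i^{(j)}$ and $\ell_j^{(i)}$ appear consecutively in $H_t$. In $\bigcup_{i<j}Q_{ij}$ each $b_i$ has degree exactly $t-1$ (one edge for each leaf of $S_i$), and the $\binom{t}{2}$ paths are pairwise internally disjoint: the only images of $H_t$-vertices lying on $Q_{ij}$ are $b_i,\ell_i^{(j)},\ell_j^{(i)},b_j$, so two of these paths can meet only at a common $b$-endpoint; the middle subpaths of distinct pairs lie in pairwise disjoint stretches of $P$, since the blocks $(\ell_i^{(j)},\ell_j^{(i)})$ are consecutive in $H_t$; and no $b_k$ is interior to any $Q_{ij}$, since all the $b_k$ precede all the leaves in $H_t$. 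Hence $\bigcup_{i<j}Q_{ij}$ is a subdivision of $K_t$ in $G[P]$.

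Since $G[P]$ has no $K_t$ topological minor, the previous step shows $(G[P],P)$ does not contain $H_t$ (with any gap), so in \Cref{th:peel} applied with $H=H_t$ (a $(t,t-1)$-constellation, so $r=t-1$) the first outcome is impossible, and $G[P]$ has an increasing induced path of order at least $f(n,t,0)=(\log_t n)^{\varphi(t)}-O_t(1)$, where $f$ is as in \Cref{def:functions}. By \Cref{rem:exist-phigamma} we have $\varphi(t)=\Theta(1/(t(\log t)^2))$, and since $\log t$ is a constant, $(\log_t n)^{\varphi(t)}=(\log n/\log t)^{\varphi(t)}=(\log n)^{\Omega(1/(t(\log t)^2))}$; for $n$ large this dominates the additive $O_t(1)$ term, and for bounded $n$ the claimed bound is trivial. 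The only real obstacle is the routing claim: arranging the order of the leaves so that each of the $t$ branch vertices receives exactly the $t-1$ ports it needs while keeping the $\binom{t}{2}$ connecting subpaths pairwise disjoint; the rest is a direct appeal to \Cref{th:peel} and the growth rate of $\varphi$.
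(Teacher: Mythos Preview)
Your proof is correct and follows essentially the same approach as the paper: you construct the identical $(t,t-1)$-constellation of $t$ right stars with centres placed first and leaf pairs $\ell_i^{(j)},\ell_j^{(i)}$ consecutive, verify it is a (right) constellation, route a $K_t$-subdivision through the chords and the intermediate subpaths of $P$, and then invoke \Cref{th:peel}. Your write-up is slightly more explicit than the paper's about the internal disjointness of the paths $Q_{ij}$ and about extracting the asymptotic from $f(n,t,0)$, and you dispose of $t=2$ separately, but none of this constitutes a different argument.
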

\begin{proof}
Let $G$ be a graph that does not contain $K_t$ as a topological minor and let $P$ be an $n$-vertex path of $G$.
We describe a pattern $H$ that is a $(t,t-1)$-constellation and such that $(G[P],P)$ avoids $H$ as a pattern.

The pattern $H$ consists of $t$ right $(t-1)$-stars $S_1, \dots, S_t$.
For each star $S_i$, we write $c_i$ for its center, and $\ell_{i,1}, \dots, \ell_{i,i-1}, \ell_{i,i+1}, \dots \ell_{i,t}$ for its $t-1$ leaves (note that we omitted the name $\ell_{i,i}$). We then order the vertices of $H$ such that 
\begin{itemize}
\item all centers of the stars lie before all the leaves of the stars, and
\item for any $1\le i<j\le t$, the leaves $\ell_{i,j}$ and $\ell_{j,i}$ are consecutive.
\end{itemize}
Note that there are many orders satisfying these two conditions.
We refer the reader to \Cref{fig:top_minor_pat} for a drawing of such an ordered graph $H$ when $t = 4$.
For each $1\le i<j\le t$, consider the path $P_{i,j}$ which is the concatenation of the edge $c_i\ell_{i,j}$, the subpath of $P$ between $\ell_{i,j}$ and $\ell_{j,i}$, and the edge $\ell_{j,i}c_j$. Note that these paths are internally vertex-disjoint, and thus if $G[P]$ contains the pattern $H$, then it contains $K_t$ as a topological minor.
Hence, $G[P]$ avoids the pattern $H$ and so, by  \Cref{th:peel}, $G[P]$ contains an increasing path of order at least $f(n, t, 0)=(\log n)^{\Omega\left (1/t(\log t)^2\right )}$.
\end{proof}

\begin{figure}[ht]
\centering
\begin{tikzpicture}

\foreach \i in {1, 2, 3, 4} {
    \node (\i) at (\i, 0) {$c_\i$};
}

\foreach \i/\j [count = \c] in {1/2, 2/1, 1/3, 3/1, 1/4, 4/1, 2/3, 3/2, 2/4, 4/2, 3/4, 4/3} {
    \node (\i\j) at (4 + \c, 0)  {$\ell_\i^\j$};
}
\begin{scope}[every node/.style = black node]
\foreach \u/\v in {1/12, 1/13, 1/14} {
	\draw[very thick, blue] (\u) to [bend left=45] (\v);
}
\foreach \u/\v in {2/21, 2/23, 2/24} {
	\draw[very thick, orange] (\u) to [bend left=45] (\v);
}
\foreach \u/\v in {3/31, 3/32, 3/34} {
	\draw[very thick, red] (\u) to [bend left=45] (\v);
}
\definecolor{forestgreen}{RGB}{34, 139, 34}
\foreach \u/\v in {4/41, 4/42, 4/43} {
	\draw[very thick, forestgreen] (\u) to [bend left=45] (\v);
}
\foreach \pos/\name in {1, ..., 4}{
    \fill (\pos + 0.25, 0.25) circle (0.1) {};
}
\foreach \pos/\name in {5, ..., 16}{
    \fill (\pos  - 0.25, 0.25) circle (0.1) {};
}
\end{scope}
\end{tikzpicture}
\caption{Drawing of an ordered graph $H$ with the following property: if a graph $G$ with a Hamiltonian path $P$ contains $H$ as a pattern, then $G$ contains $K_4$ as a topological minor. The ordered graph $H$ consists of four right stars centered respectively in $c_1, c_2, c_3$ and $c_4$, each drawn with a different color.}
\label{fig:top_minor_pat}
\end{figure}
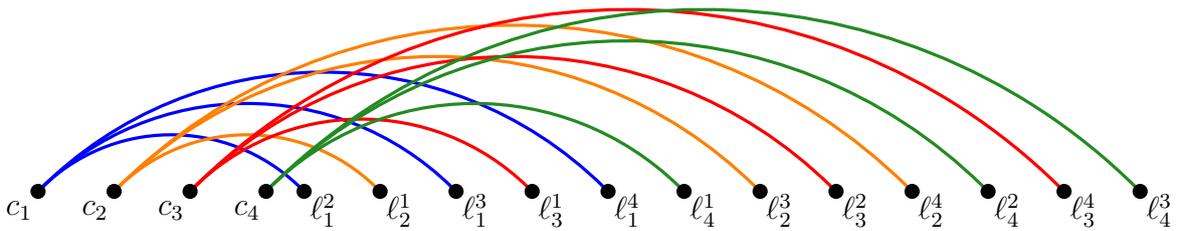

\section{Doubly polylogarithmic upper-bounds}
\label{sec:loglogsqr}

In \cite{defrain2024sparse} Defrain and the third author gave the following upper-bound on the size of induced paths in 2-degenerate graphs with long paths.

\begin{theorem}[\cite{defrain2024sparse}]\label{th:loglogsqr}
There is a constant $c$ such that for infinitely many integers $n$, there is a 2-degenerate  graph $G$ with a path of order $n$ and no induced path of order $c (\log \log n)^2$.
\end{theorem}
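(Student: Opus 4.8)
The plan is to build a recursive family of $2$-degenerate graphs $(G_k)_{k\ge 1}$, each equipped with a fixed Hamiltonian path $P_k$, in such a way that $n_k:=|V(G_k)|$ grows roughly like $2\pow 2\pow k$ (so that $\log\log n_k=\Theta(k)$) while the longest induced path of $G_k$ has order $O(k^2)$; taking $n=n_k$ then gives the theorem. I would start from a small $G_0$ and build $G_k$ from roughly $n_{k-1}$ disjoint copies $C_1,\dots,C_m$ of $G_{k-1}$ (with $m\approx n_{k-1}$), letting $P_k$ be the concatenation of the Hamiltonian paths of the copies, interleaved with a small number of new \emph{linking vertices} $z_1,\dots,z_{m-1}$ (one between each pair of consecutive copies, so that $z_i$ is joined on $P_k$ to the last vertex of $C_i$ and to the first vertex of $C_{i+1}$).

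A graph of this shape can still have short induced paths provided we add, on top of $P_k$, a family of \emph{blocking chords} among the linking vertices: the $z_i$ together with these chords form an auxiliary ordered graph $F_k$ on $m-1$ vertices. Since the only edges joining the ``$C_i$-world'' to the ``$C_{i+1}$-world'' pass through $z_i$, an induced path of $G_k$ that visits two different copies $C_a,C_b$ with $a<b$ must contain all of $z_a,z_{a+1},\dots,z_{b-1}$, and it contains them pairwise non-consecutively; hence $\{z_a,\dots,z_{b-1}\}$ is an \emph{independent set} of $F_k$. So if $F_k$ is designed so that no interval of more than $t_k$ consecutive vertices is independent in $F_k$, then an induced path of $G_k$ meets at most $t_k+1$ consecutive copies; combined with a mechanism forcing it to meet all but one of them only shallowly — this is where the endpoints of the $G_{k-1}$'s must be given a little extra adjacency — one gets that an induced path of $G_k$ is, up to $O(t_k)$ vertices, confined to a single copy of $G_{k-1}$. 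An induction on $k$ then shows the longest induced path of $G_k$ has order $O(t_1+\dots+t_k)$, and choosing $t_k=\Theta(k)$ yields $O(k^2)=O((\log\log n)^2)$, while the linking vertices and $F_k$ are few enough that $n_k\approx n_{k-1}^2$.

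The heart of the construction — and what I expect to be the main obstacle — is designing $F_k$ together with the end-adjacencies of the copies so that \emph{all} of the following hold simultaneously: $F_k$ has no independent interval of size $>\Theta(k)$; adding $F_k$ and all the new edges keeps $G_k$ exactly $2$-degenerate (this is delicate, since each linking vertex and each endpoint of a copy can afford only a bounded number of new incident edges, and the degeneracy ordering has to be chosen compatibly with $P_k$ and with the recursion); and an induced path that enters one copy substantially cannot also enter a second copy substantially. The simplest choice, taking $F_k$ to be a path on the $z_i$'s, already forbids visiting more than two copies and preserves $2$-degeneracy, but it only yields a bound of order $\log n$; squeezing this down to $(\log\log n)^2$ requires the more efficient, recursively structured blocker of \cite{defrain2024sparse}, and verifying its properties is the technical core. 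A secondary point to check at every level of the recursion is that induced paths that use blocking chords rather than edges of $P_k$ do not escape the argument; this follows again from the ``two non-consecutive endpoints of a chord'' reasoning, but it has to be carried out carefully.
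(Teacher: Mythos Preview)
Your central deduction --- that an induced path visiting both $C_a$ and $C_b$ must contain all of $z_a,\dots,z_{b-1}$, and that these form an independent set of $F_k$ --- does not survive the addition of the very chords you need. As soon as $F_k$ has an edge $z_pz_q$ with $p<i<q$, the vertex $z_i$ ceases to be a cut vertex, and an induced path can jump straight from $z_p$ to $z_q$ without ever touching $z_i$; and even when all the relevant $z_i$ do lie on the path, two that are adjacent in $F_k$ may simply be consecutive on the induced path, so no independence follows. Conversely, if you restrict $F_k$ so that every $z_i$ remains a cut vertex (which forces the chords to lie only between consecutive $z_i$'s), then $z_1,\dots,z_{m-1}$ is itself an induced path of order $m-1\approx n_{k-1}\approx\sqrt{n_k}$, far from the $O(\log n_k)$ you assert for that case. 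This is the basic tension your sketch does not resolve: the chords that would block long induced runs among the $z_i$'s are precisely the chords that destroy the cut-vertex property your argument relies on. The ``shallow visits to all but one copy'' mechanism and the compatibility with $2$-degeneracy are left equally unspecified, and both interact with this same issue.

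The paper's construction (a variant of \cite{defrain2024sparse}) avoids this by abandoning linear concatenation. The base object is a complete binary tree $B_\ell$ of depth $h(\ell)=\Theta(2^\ell)$ whose nodes are replaced by constant-size gadgets and threaded by a Hamiltonian path; the extra edges (\emph{ribs}) go from the gadget at a node of depth $i$ down to its descendants at depth $j$, where the pairs $(i,j)$ form a \emph{nested} system $\cN_\ell$ of $\ell$ intervals on $\{1,\dots,h(\ell)\}$. The analysis assigns to each vertex a rank $\tau\in\{1,\dots,\ell+1\}$ coming from this nesting, shows that along any induced path without a ``special source'' the sequence of $\tau$-values is essentially unimodal and constant for only $O(1)$ consecutive vertices (hence such a path has order $O(\ell)$), and then argues that special sources of each rank occur at most once in each direction, cutting an arbitrary induced path into $O(\ell)$ such pieces and yielding the $O(\ell^2)=O((\log\log n)^2)$ bound. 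The tree geometry is what lets high-degree rib endpoints coexist with $2$-degeneracy while still furnishing the separators that confine induced paths --- something a one-dimensional chain of linking vertices cannot do.
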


In the proof of Theorem~\ref{th:loglogsqr}, the construction of the graph $G$ is explicit.
The graph is obtained after adding subdivisions and extra edges to a 3-blow-up of a binary tree (each node is replaced by a triangle and each edge by two ``parallel'' edges between the triangles corresponding to its endpoints).
We give an alternative construction that implies \Cref{th:loglogsqr} while also having consequences related to excluded patterns.


\begin{restatable}{theorem}{loglogbis}\label{th:loglogbis}
For any ordered graph $H$ which is not a constellation, \[g_H(n) = O((\log \log n)^2).\]
\end{restatable}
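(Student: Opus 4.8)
\textbf{Proof proposal for \Cref{th:loglogbis}.}

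The plan is to produce, for infinitely many $n$, a single host graph $G$ with a Hamiltonian path $P=v_1,\dots,v_n$ such that $(G,P)$ avoids \emph{every} ordered graph that is not a constellation and yet $G$ has no induced path of order $\Omega((\log\log n)^2)$. The crucial observation is that ``not a constellation'' is a rather weak condition: by \Cref{obs:d2a} we may restrict attention to ordered graphs in which every vertex is smaller than all its neighbours or larger than all its neighbours, and among these the constellations are characterised by the inductive peeling-from-the-ends / splitting procedure. A minimal ordered graph failing this recursion is small and explicit — essentially the obstruction is a ``crossing'' configuration of two stars whose centers are in the middle (an ordered graph on a bounded number of vertices). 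So it suffices to build $G$ avoiding that bounded-size obstruction as a pattern. Concretely, I would isolate a finite list of minimal non-constellations $H_1,\dots,H_\ell$ (each on at most some absolute constant number of vertices), show that an ordered graph is a constellation iff it contains none of the $H_i$ as an ordered subgraph, and then it is enough to construct $G$ avoiding each $H_i$ as a pattern.

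For the construction itself I would follow the blueprint of \cite{defrain2024sparse} but adapt it so that the non-edges of $G-E(P)$ have the right order structure. Take a complete binary tree $T$ of height $\approx \log\log n$, replace each node by a small constant-size gadget (a blow-up, as in \cite{defrain2024sparse}), connect gadgets along tree edges, subdivide suitably so that a Hamiltonian path $P$ traverses the whole thing in a DFS-like manner, and add the extra ``long-range'' edges between gadgets at different levels that kill all long induced paths. The key design constraint is that whenever two vertices $u\prec w$ are non-adjacent, the reason should be ``local'' in a way that forbids the middle-center crossing pattern: one wants that the only non-adjacencies $v_av_b$ with $a<b$ occur in a laminar / nested fashion with respect to the DFS intervals of the tree, so that any ordered star forest embedded into $G-E(P)$ can be peeled from its ends. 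I would verify (i) $G$ is traceable with the claimed $P$, (ii) $|V(G)|=n$ with tree-height $\Theta(\log\log n)$ so any root-to-leaf-ish structure has length $\Theta(\log\log n)$ and a simple counting/path-tracing argument bounds induced paths by $O((\log\log n)^2)$ exactly as in \cite{defrain2024sparse}, and (iii) $(G,P)$ contains no $H_i$ as a pattern, hence avoids every non-constellation.

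The heart of the argument — and the step I expect to be the main obstacle — is (iii): pinning down precisely which non-adjacencies appear in the construction and showing they are ``nested enough'' to forbid a crossing of two middle-centered stars. Equivalently, I would prove a structural lemma: in $G-E(P)$, for the chosen $P$, the set of non-edges forms a non-crossing (or near-non-crossing, up to bounded local exceptions) system, and any ordered subgraph living on such a system is a constellation. This is where the geometry of the DFS traversal and the blow-up gadgets has to be engineered carefully: if the gadgets or the long-range edges are chosen naively, a crossing pair of stars can sneak in. Once this lemma is in place, combining it with the easy observations that (a) constellations are exactly the obstruction (the finite-list characterisation) and (b) the induced-path upper bound $O((\log\log n)^2)$ carried over from \cite{defrain2024sparse}, finishes the proof. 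A minor bookkeeping point is that we only get the bound for infinitely many $n$ (those of the form ``size of the construction at a given tree height''), which is all that is claimed, and $g_H$ being monotone-ish in $n$ lets us fill in the gaps with the standard padding argument.
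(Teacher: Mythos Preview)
Your high-level strategy is correct and matches the paper's: modify the construction of \cite{defrain2024sparse} so that, in addition to having no induced path of order $\Omega((\log\log n)^2)$, the pair $(G,P)$ avoids every non-constellation as a pattern. However, your plan for step~(iii) has genuine problems.

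First, you consistently write ``non-edges'' and ``non-adjacencies'' where you must mean edges: a pattern $H$ in $(G,P)$ is an ordered \emph{subgraph} of $G-E(P)$, so what has to be controlled is the order structure of the edge set of $G-E(P)$, not of its complement. Second, the detour through a finite list of minimal non-constellations is both unnecessary and dubious (it is not at all clear such a list is finite), and ``non-crossing'' is the wrong structural target anyway: every ordered matching, including crossing ones, is a constellation, so the property you are after is not a laminarity condition on individual edges.

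The paper's route is much cleaner and sidesteps all of this: one designs the gadgets so that $G_\ell-E(P)$, ordered along the Hamiltonian path $P$, is \emph{itself} a constellation. Since constellations are closed under taking ordered subgraphs (immediate from the $(\star)$ characterisation of \Cref{lem:loccons}), this instantly gives that every pattern occurring in $(G_\ell,P)$ is a constellation, so no non-constellation can occur. Concretely, after removing $E(P)$ the connected components of what remains are (i) a matching between connectors, and (ii) for each out-port $v$, a star consisting of all ribs from $v$ together with one gadget edge. Each such star is a left or right star because the out-ports of the gadget at a source $s$ are the first and last vertices of $G_\ell(s)$ along $P$; ordering these stars by the depth of their centers then verifies $(\star)$. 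That single observation replaces your entire proposed structural lemma, and it is precisely what forces the redesign of the gadget relative to \cite{defrain2024sparse}.
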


We note that our construction is merely a modification of that of \cite{defrain2024sparse} so that it avoids constellations. For completeness we include the proof, but since several parts are similar to the proof of \cite{defrain2024sparse}, we reused as much material from that paper as possible (including definitions, proofs, pictures), with the agreement of the authors. Besides proving \Cref{th:loglogbis}, our contribution here is also to show that the construction of \cite{defrain2024sparse} is quite versatile and in particular, that there is a lot of freedom in choosing the connecting \emph{gadgets} (to be defined below).

\subsection{The construction}

\subsubsection{The base graph.}
Let $h\colon \N_{\geq 1} \to \N$ be the function defined for every $\ell \in \N_{\geq 1}$ by the following formula:
\begin{equation}\label{eq:exp}
h(\ell)=5\cdot 2^{\ell-1} - 2.
\end{equation}
For every $\ell \in \N_{\geq 1}$ we denote by $B_\ell$ the complete binary tree of depth\footnote{The \emph{complete binary tree} of depth $p$ is $K_1$ (rooted at its unique vertex) if $p=1$ and otherwise it can be obtained from two disjoint copies of the complete binary tree of depth $p-1$ by adding a new vertex $v$ adjacent to their roots and rooting the resulting tree at $v$.} $h(\ell)$. In this tree, $\prec$ is the ancestor-descendant relation, i.e.\ $s\prec t$ if $s\neq t$ and $s$ lies on the unique path linking $t$ to the root.
Let $H_\ell$ be the graph obtained from $B_\ell$ by replacing each vertex by the subgraph called \emph{gadget} that is drawn in Figure~\ref{fig:gadget} (while Figure~\ref{fig:constr}, right, shows the gadgets used in \cite{defrain2024sparse} as a comparison) and connected to other gadgets as described in Figure~\ref{fig:constr}, bottom (and explained more formally below, after we introduce some necessary terminology). So, each vertex $s$ of $B_\ell$ corresponds to a different gadget in $H_\ell$, that we refer to as the \emph{gadget at $s$}. Conversely, it is useful for our proofs to define a function $\pi$ that maps the vertices of $H_\ell$ back to the vertex of $B_\ell$ they originate from. Hence, the gadget at $s$ is precisely the subgraph of $H_\ell$ induced by $\pi^{-1}(s)$. We define the depth $\depth(u)$ of a vertex $u\in V(H_\ell)$ as the depth of its corresponding node $\pi(u)$ in $B_\ell$. Gadgets have three special sets of vertices as described on Figure~\ref{fig:gadget}, the \emph{in-ports}, the \emph{out-ports}, and the \emph{connectors}. 

For any non-leaf node $s$ of $B_\ell$, say with left child $s_1$ and right child $s_2$, we add 
a matching between the left connectors of the gadget at $s$ and the out-ports of the gadget at $s_1$, and another matching between the right connectors of the gadget at $s$ and the out-ports of the gadget at $s_2$. 
The resulting graph is $H_\ell$. This is illustrated in Figure~\ref{fig:constr}, bottom.
In the following we will also add edges to $H_\ell$ between in-ports and out-ports of specified depths.

\begin{figure}[htb]
    \centering
    \begin{tikzpicture}[every node/.style={draw, circle, fill=black, minimum size=0.15cm, inner sep=0}]
    \draw (-90:1) node[draw=none, fill=none] (S) {} (30:1) node (NE) {} (150:1) node (NW) {};
    \draw (NE) -- ++(0, 1) node[star, minimum size=2.75mm, color=orange] (NE1) {};
    \draw (NW) -- ++(0, 1) node[star,minimum size=2.75mm, color=orange] (NW1) {};
    \draw (NW1) -- (NE);
    \draw (NE1) -- (NW);
    \node [draw = none, fill = none] () [above right of = NE1] {right out-port};
    \node [draw = none, fill = none] () [above left of = NW1] {left out-port};
    
    \draw (S) ++(-30:0.5) node[regular polygon, regular polygon sides=5, minimum size=2.75mm,color=violet] (SSE) {} -- ++(-30:0.5) node[regular polygon, regular polygon sides=5, minimum size=2.75mm, color=violet] (SSE1) {} -- ++(-30:0.5) node[color=green!50!black, minimum size=2.5mm] (SSE2) {};
    \draw (NE) -- ++(-30:0.5) node[regular polygon, regular polygon sides=5, minimum size=2.75mm, color=violet] (NESE) {} -- ++(-30:0.5) node[regular polygon, regular polygon sides=5, minimum size=2.75mm, color=violet] (NESE1) {} -- ++(-30:0.5) node[color=green!50!black, minimum size=2.5mm] (NESE2) {};
    \draw (SSE2) -- (NESE2);
    \draw (S) ++(210:0.5) node[regular polygon, regular polygon sides=5, minimum size=2.75mm, color=violet] (SSW) {} -- ++(210:0.5) node[regular polygon, regular polygon sides=5, minimum size=2.75mm, color=violet] (SSW1) {} -- ++(210:0.5) node[color=green!50!black, minimum size=2.5mm] (SSW2) {};
    \draw (NW) -- ++(210:0.5) node[regular polygon, regular polygon sides=5, minimum size=2.75mm, color=violet] (NWSW) {} -- ++(210:0.5) node[regular polygon, regular polygon sides=5, minimum size=2.75mm, color=violet] (NWSW1) {} -- ++(210:0.5) node[color=green!50!black, minimum size=2.5mm] (NWSW2) {};
    \draw (SSW2) -- (NWSW2);
    \draw (SSE) -- (SSW);
    \end{tikzpicture}
    \caption{Gadget used to replace vertices of $B_\ell$. The \emph{out-ports} (resp.\ \emph{in-ports}, \emph{connectors}) are depicted with colored stars (resp.\ pentagons, circles).}
    \label{fig:gadget}
\end{figure}
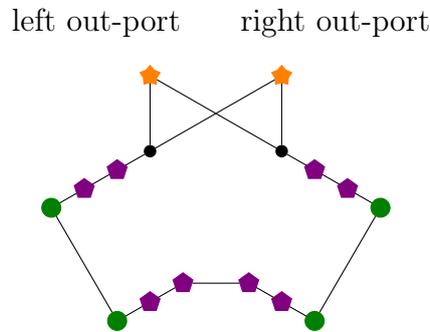

\begin{figure}[htb]
    \centering
    \begin{tikzpicture}[every node/.style={draw, circle, fill, minimum size=0.15cm, inner sep=0}, every path/.style={thick}]
    \begin{scope}
    \draw (0,0) node[color=Purple] (Z) {};        \draw[dashed] (Z) -- ++(0, 1);
    \draw (-30:3.5) node[color=orange] (SE) {} -- (Z) -- (210:3.5) node[color=BlueGreen] (SW) {};
    \draw[dashed] (SE) -- ++(30:1);
    \draw[dashed] (SE) -- ++(-90:1);
    \draw[dashed] (SW) -- ++(-210:1);
    \draw[dashed] (SW) -- ++(-90:1);
    \end{scope}
    \begin{scope}[xshift = 0cm, yshift = -9cm]
    \begin{scope}[color = Purple, scale=2/3]
    \draw (-90:1) node[draw=none, fill=none] (Sa) {} (30:1) node (NEa) {} (150:1) node (NWa) {};
    \draw (NEa) -- ++(0, 1) node (NE1a) {};
    \draw (NWa) -- ++(0, 1) node (NW1a) {};
    \draw (NW1a) -- (NEa);
    \draw (NE1a) -- (NWa);
    
    \draw (Sa) ++(-30:0.5) node (SSEa) {} -- ++(-30:0.5) node (SSE1a) {} -- ++(-30:0.5) node (SSE2a) {};
    \draw (NEa) -- ++(-30:0.5) node (NESEa) {} -- ++(-30:0.5) node (NESE1a) {} -- ++(-30:0.5) node (NESE2a) {};
    \draw (SSE2a) -- (NESE2a);
    \draw (Sa) ++(210:0.5) node (SSWa) {} -- ++(210:0.5) node (SSW1a) {} -- ++(210:0.5) node (SSW2a) {};
    \draw (NWa) -- ++(210:0.5) node (NWSWa) {} -- ++(210:0.5) node (NWSW1a) {} -- ++(210:0.5) node (NWSW2a) {};
    \draw (SSW2a) -- (NWSW2a);
    \draw (SSEa) -- (SSWa);
    \end{scope}
    \begin{scope}[{shift=(210:3.25)}, scale = 2/3, rotate = -60, color = BlueGreen, fill = BlueGreen]
    \draw (-90:1) node[draw=none, fill=none] (Sb) {} (30:1) node (NEb) {} (150:1) node (NWb) {};
    \draw (NEb) -- ++(0, 1) node (NE1b) {};
    \draw (NWb) -- ++(0, 1) node (NW1b) {};
    \draw (NW1b) -- (NEb);
    \draw (NE1b) -- (NWb);
    
    \draw (Sb) ++(-30:0.5) node (SSEb) {} -- ++(-30:0.5) node (SSE1b) {} -- ++(-30:0.5) node (SSE2b) {};
    \draw (NEb) -- ++(-30:0.5) node (NESEb) {} -- ++(-30:0.5) node (NESE1b) {} -- ++(-30:0.5) node (NESE2b) {};
    \draw (SSE2b) -- (NESE2b);
    \draw (Sb) ++(210:0.5) node (SSWb) {} -- ++(210:0.5) node (SSW1b) {} -- ++(210:0.5) node (SSW2b) {};
    \draw (NWb) -- ++(210:0.5) node (NWSWb) {} -- ++(210:0.5) node (NWSW1b) {} -- ++(210:0.5) node (NWSW2b) {};
    \draw (SSW2b) -- (NWSW2b);
    \draw (SSEb) -- (SSWb);
    \end{scope}
    \begin{scope}[{shift=(-30:3.25)}, scale = 2/3, rotate=60, color = orange]
    \draw (-90:1) node[draw=none, fill=none] (Sc) {} (30:1) node (NEc) {} (150:1) node (NWc) {};
    \draw (NEc) -- ++(0, 1) node (NE1c) {};
    \draw (NWc) -- ++(0, 1) node (NW1c) {};
    \draw (NW1c) -- (NEc);
    \draw (NE1c) -- (NWc);
    
    \draw (Sc) ++(-30:0.5) node (SSEc) {} -- ++(-30:0.5) node (SSE1c) {} -- ++(-30:0.5) node (SSE2c) {};
    \draw (NEc) -- ++(-30:0.5) node (NESEc) {} -- ++(-30:0.5) node (NESE1c) {} -- ++(-30:0.5) node (NESE2c) {};
    \draw (SSE2c) -- (NESE2c);
    \draw (Sc) ++(210:0.5) node (SSWc) {} -- ++(210:0.5) node (SSW1c) {} -- ++(210:0.5) node (SSW2c) {};
    \draw (NWc) -- ++(210:0.5) node (NWSWc) {} -- ++(210:0.5) node (NWSW1c) {} -- ++(210:0.5) node (NWSW2c) {};
    \draw (SSW2c) -- (NWSW2c);
    \draw (SSEc) -- (SSWc);
    \end{scope}
    \draw (SSW2a) -- (NE1b) (NWSW2a) -- (NW1b);
    \draw (SSE2a) -- (NW1c) (NESE2a) -- (NE1c);
    \draw[dashed] (SSE2b) -- ++(-90:0.5) (NESE2b) -- ++(-90:0.5) (SSW2c) -- ++(-90:0.5) (NWSW2c) -- ++(-90:0.5); 
    \draw[dashed] (NWSW2b) -- ++(150:0.5) (SSW2b) -- ++(150:0.5);
    \draw[dashed] (NESE2c) -- ++(30:0.5) (SSE2c) -- ++(30:0.5);
    \draw[dashed] (NE1a) --++(0,0.5) (NW1a) --++(0,0.5);
    \end{scope}
    \begin{scope}[xshift = 6.5cm, yshift = -6cm, scale=2/3]
    \begin{scope}[color=Purple]
    \draw (-90:1) node (aS) {} -- (30:1) node (aNE) {} -- (150:1) node (aNW) {} --(aS);
    \draw (aNW) --++(0,1) node (aNW1) {} node[midway] {};
    \draw (aNE) --++(0,1) node (aNE1) {} node[midway] {};
    \end{scope}
    \begin{scope}[{shift=(210:3)}, rotate = -60, color = BlueGreen]
    \draw (-90:1) node (bS) {} -- (30:1) node (bNE) {} -- (150:1) node (bNW) {} --(bS);
    \draw (bNW) --++(0,1) node (bNW1) {} node[midway] {};
    \draw (bNE) --++(0,1) node (bNE1) {} node[midway] {};
    \end{scope}
    \begin{scope}[{shift=(-30:3)}, rotate = 60, color = orange]
    \draw (-90:1) node (cS) {} -- (30:1) node (cNE) {} -- (150:1) node (cNW) {} --(cS);
    \draw (cNW) --++(0,1) node (cNW1) {} node[midway] {};
    \draw (cNE) --++(0,1) node (cNE1) {} node[midway] {};
    \end{scope}
    \draw (bNW1) -- (aNW) (bNE1) -- (aS) (cNW1) -- (aS) (cNE1) -- (aNE);
    \draw[dashed] (aNE1) --++(0,0.5) (aNW1) --++(0,0.5);
    \draw[dashed] (cNW) -- ++(0,-0.5) (cS) -- ++(0,-0.5) (bNE) -- ++(0, -0.5) (bS) -- ++(0, -0.5);
    \draw[dashed] (cS) --++(30:0.5) (cNE) --++(30:0.5) (bS) --++(150:0.5) (bNW) --++(150:0.5);
    \end{scope}
    \draw[very thick, -latex] (0, -4) --++(-0,-1.5);
    \draw[very thick, -latex] (3, -4) --++(-45:1.5);
    \end{tikzpicture}
    \caption{Our modification (bottom) of the definition of \cite{defrain2024sparse} (right) of $H_\ell$ from a binary tree.}
    \label{fig:constr}
\end{figure}
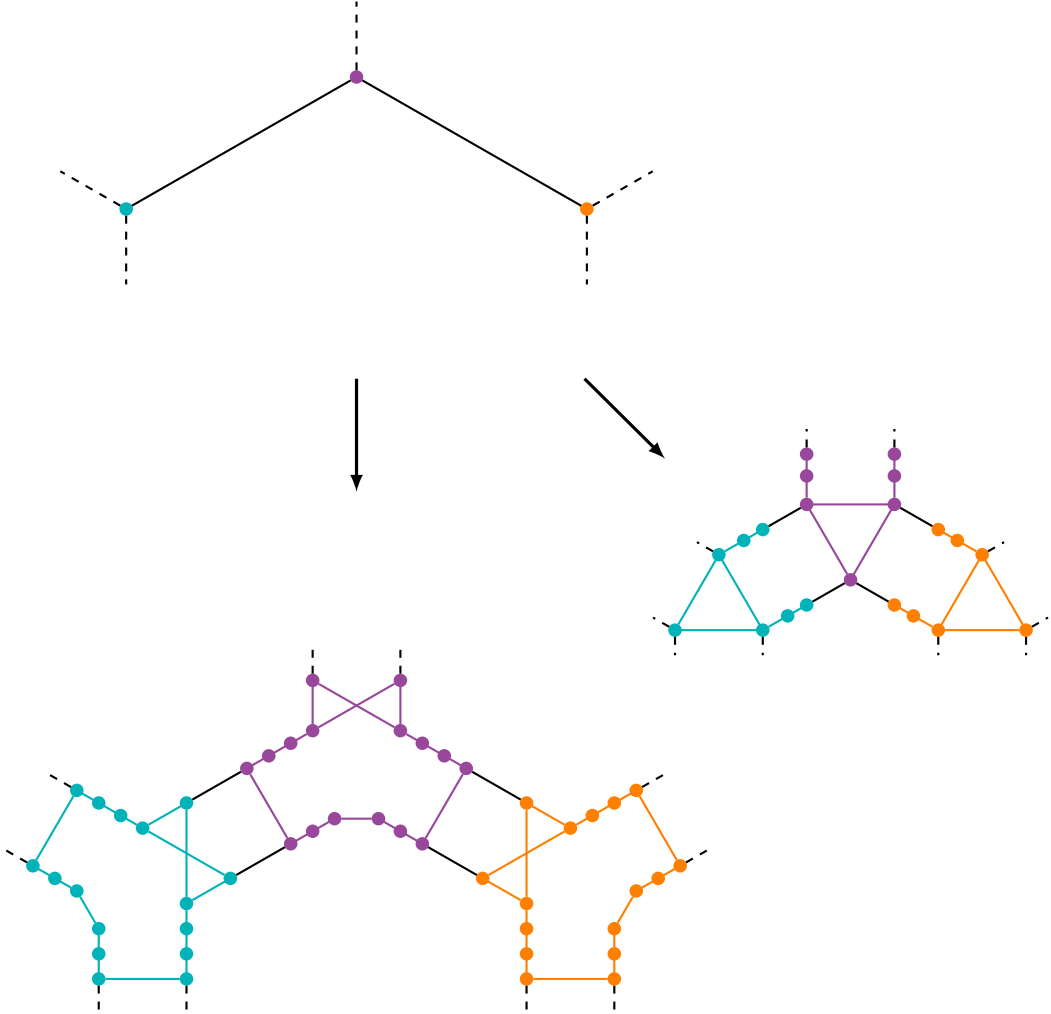

\subsubsection{Nested intervals systems.}

If $X$ is a set of pairs of integers and $i \in \N$, we denote by $X\!\oplus i$ the set $\{(x+i, x'+i) : (x,x') \in X\}$.
For every $\ell \in \N_{\geq 1}$ the set $\cN_\ell$ is recursively defined as follows:
\[
\left \{
\begin{array}{ll}
     \cN_1 = \{(1,3)\},\ \text{and}&  \\
     \cN_\ell = \{(1, h(\ell))\}\ \cup\ (\cN_{\ell-1}\oplus 1)\ \cup\ \big(\cN_{\ell-1} \oplus (h(\ell-1) +1)\big)& \text{if}\ \ell>1.
\end{array}
\right.
\]
The elements of $\cN_\ell$ are called \emph{intervals}.
Intuitively, $\cN_\ell$ is obtained by taking two copies of $\cN_{\ell-1}$ (appropriately shifted so that they start after integer 1 and do not intersect) and adding a new interval $(1,h(\ell))$ containing the two copies. See Figure~\ref{fig:intervals} for an illustration.

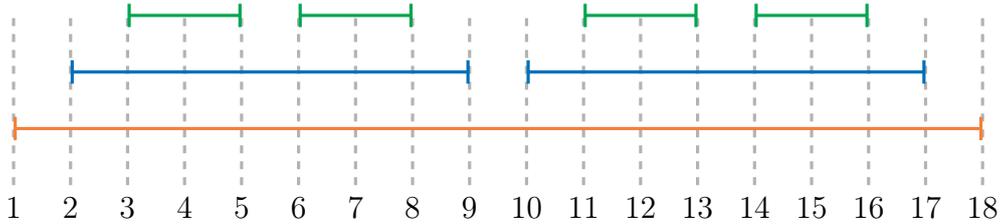
\begin{figure}
    \centering
    \begin{tikzpicture}[every path/.style = {very thick}, scale=0.75]
    \foreach \i in {1,..., 18}{
        \draw[dashed, color = black!30] (\i, 0) node[anchor=north, color=black] (n\i) {\i} -- ++(0,3);
    }
    \draw[|-|,color=Orange] (1,1) -- (18,1);
    \draw[|-|, color=NavyBlue] (2,2) -- (9,2);
    \draw[|-|, color=NavyBlue] (10,2) -- (17,2);
    \draw[|-|, color = Green] (3,3) -- (5,3);
    \draw[|-|, color = Green] (6,3) -- (8,3);
    \draw[|-|, color = Green] (11,3) -- (13,3);
    \draw[|-|, color = Green] (14,3) -- (16,3);
    \end{tikzpicture}
    \caption{The intervals of $\cN_3$ with intervals of rank $1,\ 2$, and $3$ depicted from top to bottom in green, blue, and orange, respectively.}\label{fig:intervals}
\end{figure}
The following easy properties of $\cN_{\ell}$ can be proved by a straightforward induction:

\begin{remark}\label{rem:nestint}
For every $\ell\in \N_{\geq 1}$ the following holds:
\begin{enumerate}
    \item the endpoints of the intervals in $\cN_{\ell}$ range from $1$ to $h(\ell)$ and are all distinct;
    \item every interval of $\cN_\ell$ is of the form $(i, i+h(a)-1)$ for some $i \in \intv{1}{h(\ell)}$ and $a \in \intv{1}{\ell}$;
    \item for every interval $(i,j)$ in $\cN_\ell$ there is no other interval $(i', j')$ in $\cN_\ell$ such that $i<i'<j<j'$ (informally, intervals do not cross).
\end{enumerate}
\end{remark}

We call \emph{rank} of an interval $(i,j)\in \cN_\ell$ the aforementioned integer $a \in \intv{1}{\ell}$ such that $j = i+h(a)-1$.

\subsubsection{Ribs.}
The function $\ribs_\ell$ is defined on every pair $(s,t) \in V(B_\ell)^2$ of nodes such that $s \prec t$ as the set of edges between $\pi^{-1}(s)$ and $\pi^{-1}(t)$ described in Figure~\ref{fig:ribs}.\footnote{Note that these edges do not exist in $H_\ell$. We define this set in order to later construct a graph by adding these edges to $H_\ell$.}

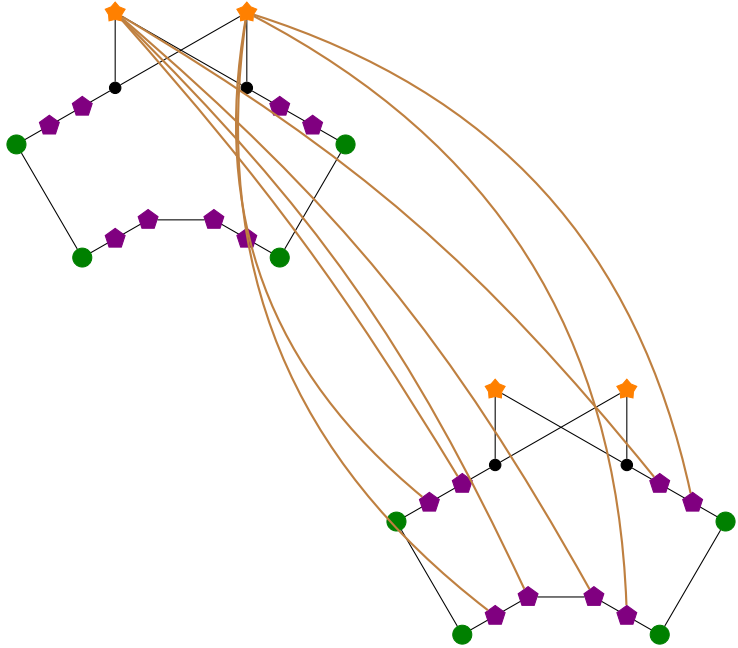
\begin{figure}[htb]
    \centering
    \begin{tikzpicture}[every node/.style=black node]
    
    \begin{scope}
    \draw (-90:1) node[draw=none, fill=none] (Sa) {} (30:1) node (NEa) {} (150:1) node (NWa) {};
    \draw (NEa) -- ++(0, 1) node[star, minimum size=2.75mm, color=orange] (NE1a) {};
    \draw (NWa) -- ++(0, 1) node[star,minimum size=2.75mm, color=orange] (NW1a) {};
    \draw (NW1a) -- (NEa);
    \draw (NE1a) -- (NWa);
    
    \draw (Sa) ++(-30:0.5) node[regular polygon, regular polygon sides=5, minimum size=2.75mm,color=violet] (SSEa) {} -- ++(-30:0.5) node[regular polygon, regular polygon sides=5, minimum size=2.75mm, color=violet] (SSE1a) {} -- ++(-30:0.5) node[color=green!50!black, minimum size=2.5mm] (SSE2a) {};
    \draw (NEa) -- ++(-30:0.5) node[regular polygon, regular polygon sides=5, minimum size=2.75mm, color=violet] (NESEa) {} -- ++(-30:0.5) node[regular polygon, regular polygon sides=5, minimum size=2.75mm, color=violet] (NESE1a) {} -- ++(-30:0.5) node[color=green!50!black, minimum size=2.5mm] (NESE2a) {};
    \draw (SSE2a) -- (NESE2a);
    \draw (Sa) ++(210:0.5) node[regular polygon, regular polygon sides=5, minimum size=2.75mm, color=violet] (SSWa) {} -- ++(210:0.5) node[regular polygon, regular polygon sides=5, minimum size=2.75mm, color=violet] (SSW1a) {} -- ++(210:0.5) node[color=green!50!black, minimum size=2.5mm] (SSW2a) {};
    \draw (NWa) -- ++(210:0.5) node[regular polygon, regular polygon sides=5, minimum size=2.75mm, color=violet] (NWSWa) {} -- ++(210:0.5) node[regular polygon, regular polygon sides=5, minimum size=2.75mm, color=violet] (NWSW1a) {} -- ++(210:0.5) node[color=green!50!black, minimum size=2.5mm] (NWSW2a) {};
    \draw (SSW2a) -- (NWSW2a);
    \draw (SSEa) -- (SSWa);
    \end{scope}
    \begin{scope}[shift = {(5,-5)}]
    \draw (-90:1) node[draw=none, fill=none] (Sb) {} (30:1) node (NEb) {} (150:1) node (NWb) {};
    \draw (NEb) -- ++(0, 1) node[star, minimum size=2.75mm, color=orange] (NE1b) {};
    \draw (NWb) -- ++(0, 1) node[star,minimum size=2.75mm, color=orange] (NW1b) {};
    \draw (NW1b) -- (NEb);
    \draw (NE1b) -- (NWb);
    
    \draw (Sb) ++(-30:0.5) node[regular polygon, regular polygon sides=5, minimum size=2.75mm,color=violet] (SSEb) {} -- ++(-30:0.5) node[regular polygon, regular polygon sides=5, minimum size=2.75mm, color=violet] (SSE1b) {} -- ++(-30:0.5) node[color=green!50!black, minimum size=2.5mm] (SSE2b) {};
    \draw (NEb) -- ++(-30:0.5) node[regular polygon, regular polygon sides=5, minimum size=2.75mm, color=violet] (NESEb) {} -- ++(-30:0.5) node[regular polygon, regular polygon sides=5, minimum size=2.75mm, color=violet] (NESE1b) {} -- ++(-30:0.5) node[color=green!50!black, minimum size=2.5mm] (NESE2b) {};
    \draw (SSE2b) -- (NESE2b);
    \draw (Sb) ++(210:0.5) node[regular polygon, regular polygon sides=5, minimum size=2.75mm, color=violet] (SSWb) {} -- ++(210:0.5) node[regular polygon, regular polygon sides=5, minimum size=2.75mm, color=violet] (SSW1b) {} -- ++(210:0.5) node[color=green!50!black, minimum size=2.5mm] (SSW2b) {};
    \draw (NWb) -- ++(210:0.5) node[regular polygon, regular polygon sides=5, minimum size=2.75mm, color=violet] (NWSWb) {} -- ++(210:0.5) node[regular polygon, regular polygon sides=5, minimum size=2.75mm, color=violet] (NWSW1b) {} -- ++(210:0.5) node[color=green!50!black, minimum size=2.5mm] (NWSW2b) {};
    \draw (SSW2b) -- (NWSW2b);
    \draw (SSEb) -- (SSWb);
    \end{scope}
    \draw[thick, color=brown]
    (NE1a) to[bend left=30] (NESE1b)
    (NW1a) to[bend left=10] (NESEb);
    \draw[thick, color=brown]
    (NE1a) to[bend right=30] (NWSW1b)
    (NW1a) to[bend left=5] (NWSWb);
    \draw[thick, color=brown]
    (NE1a) to[bend left=30] (SSE1b)
    (NW1a) to[bend left=10] (SSEb);
    \draw[thick, color=brown]
    (NE1a) to[bend right=30] (SSW1b)
    (NW1a) to[bend left=10] (SSWb);
        
    \end{tikzpicture}
    \caption{In color, the set $\ribs(s,t)$ of extra edges from out-ports of $\pi^{-1}(s)$ (top left) to in-ports of $\pi^{-1}(t)$ (bottom right).}
    \label{fig:ribs}
\end{figure}

The graph $G_\ell$ is obtained from $H_\ell$ after the addition of the set of edges $\ribs_\ell(s,t)$ for every pair of nodes $s,t\in V(B_\ell)$ of respective depth $i,j$ such that $s\prec t$ and $(i,j)\in \cN_\ell$.
We call an edge $uv$ in that set a \emph{rib}.
Hence the edges of $G_\ell$ are partitioned into ribs, \emph{gadget edges} (i.e., edges with both endpoints in the same gadget) and \emph{tree edges} (edges connecting two gadgets corresponding to adjacent nodes of $B_\ell$).

The following stems from the definition of a gadget.
\begin{remark}\label{rem:preimage-partition}
    The family of sets $\{\pi^{-1}(t) : t\in V(B_\ell)\}$ defines a partition of $V(G_\ell)$ with $|\pi^{-1}(t)|=16$ for every node $t\in V(B_\ell)$.
\end{remark}

\subsection{The properties of \texorpdfstring{$G_\ell$}{Gl}}

In this section, we describe the properties of the construction.
The proofs are straightforward and very similar to those in \cite{defrain2024sparse}, so we omit them and refer to the proofs of the corresponding results in \cite{defrain2024sparse}.

\begin{lemma}\label{lem:ham}
    For every integer $\ell \geq 1$, the graph $G_\ell$ has a Hamiltonian path.
\end{lemma}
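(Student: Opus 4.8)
The plan is to exhibit an explicit Hamiltonian path in $G_\ell$ by induction on $\ell$, tracking not only the existence of a path but also its endpoints so that the induction goes through. Recall that $G_\ell$ is built from the complete binary tree $B_\ell$ of depth $h(\ell)$ by replacing each node by the $16$-vertex gadget of Figure~\ref{fig:gadget}, connecting a child's out-ports to the parent's connectors by matchings, and then adding ribs (which are extra edges and hence never obstruct a Hamiltonian path argument). So it suffices to find a Hamiltonian path in $H_\ell$, and in fact I would prove the stronger statement: for every $\ell\ge 1$, the graph $H_\ell$ has a Hamiltonian path whose two endpoints are the two out-ports of the gadget at the root of $B_\ell$ (the left out-port and the right out-port). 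This is the natural invariant because the out-ports are exactly the vertices used to glue a subtree to its parent.

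For the base case $\ell=1$, $B_1$ is a complete binary tree of depth $h(1)=5\cdot 2^0-2=3$, i.e.\ a tree on $7$ nodes, so $H_1$ consists of $7$ gadgets ($112$ vertices) wired together as in Figure~\ref{fig:constr}, bottom; one checks by hand (or rather, by a routine traversal following the gadget structure) that a Hamiltonian path with the two out-ports of the root gadget as endpoints exists. For the inductive step, suppose $H_{\ell-1}$ has a Hamiltonian path $Q_{\ell-1}$ from the left out-port to the right out-port of its root gadget. The root of $B_\ell$ has two children, each the root of a copy of $B_{\ell-1}$ (note $h(\ell)=h(\ell-1)+\ldots$ is consistent with the depths matching up as in Remark~\ref{rem:nestint}, but we only need the tree structure here). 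Let $R$ be the root gadget of $H_\ell$, with left connectors matched to the two out-ports of the left child copy and right connectors matched to the two out-ports of the right child copy. I would route a Hamiltonian path of $H_\ell$ as follows: start at the left out-port of $R$, traverse the internal structure of $R$ so as to enter the left connectors, cross the matching into an out-port of the left subtree, run the Hamiltonian path $Q_{\ell-1}$ of that subtree to its other out-port, cross back into $R$ via the second left-connector matching edge, traverse the remaining vertices of $R$ so as to reach a right connector, cross into the right subtree, run its Hamiltonian path $Q_{\ell-1}$, cross back into $R$, and finish at the right out-port of $R$. The only thing to verify is that the $16$-vertex gadget admits a path structure supporting this routing — concretely, that one can partition the gadget's vertices into two segments, one joining the left out-port to the two left connectors (visiting some of the in-ports), the other joining the two right connectors to the right out-port (visiting the remaining in-ports) — which is immediate from inspection of Figure~\ref{fig:gadget}.

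The main (and really only) obstacle is bookkeeping: making the gadget-internal routing precise, i.e.\ naming the $16$ vertices of the gadget and exhibiting the explicit edge sequences inside $R$ that hand off correctly to the left connectors, then pick up from the left-subtree return edge, then hand off to the right connectors, then pick up from the right-subtree return edge and terminate at the right out-port. Since this is entirely analogous to the corresponding argument in \cite{defrain2024sparse} (our gadget is a modification of theirs, with the same port/connector interface), I would simply carry out the traversal on our gadget and cite \cite{defrain2024sparse} for the overall scheme, as the paper already announces it does for the lemmas in this subsection.
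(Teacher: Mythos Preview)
Your approach is the same as the paper's (which omits the proof, defers to \cite{defrain2024sparse}, and later depicts exactly your Hamiltonian path in Figure~\ref{fig:hamgell}: start at the left out-port of the root gadget, descend into the left subtree, traverse it, return, descend into the right subtree, traverse it, return, end at the right out-port, using no ribs).

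One genuine slip to fix: your induction is on the wrong variable. You write that the two subtrees hanging from the root of $B_\ell$ are ``copies of $B_{\ell-1}$'', but they are not: they are complete binary trees of depth $h(\ell)-1 = 5\cdot 2^{\ell-1}-3$, whereas $B_{\ell-1}$ has depth $h(\ell-1)=5\cdot 2^{\ell-2}-2$. So the induction on $\ell$ as stated does not close. Your parenthetical (``we only need the tree structure here'') is the right instinct: since the Hamiltonian path you build uses only gadget edges and tree edges (no ribs), the correct statement to prove by induction is that for \emph{every} depth $d\ge 1$, the graph obtained by replacing each node of the depth-$d$ complete binary tree by the gadget (with the connector/out-port matchings) has a Hamiltonian path between the two out-ports of its root gadget. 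Induct on $d$, not on $\ell$.

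A minor wording point: your ``two segments'' summary at the end is off by one; the routing you describe earlier (and which is correct) needs three gadget-internal segments, namely left out-port $\to$ left connector, left connector $\to$ right connector, and right connector $\to$ right out-port.
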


\begin{lemma}\label{lem:taillede}
For every integer $\ell\geq 1$,    $|V(G_\ell)| \geq 2^{2^{\ell}}$.
\end{lemma}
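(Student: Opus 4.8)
The plan is to compute $|V(G_\ell)|$ exactly, observing that passing from $H_\ell$ to $G_\ell$ (adding tree edges and ribs) does not create any new vertex, and that $H_\ell$ is obtained from the complete binary tree $B_\ell$ of depth $h(\ell)$ by substituting a $16$-vertex gadget for each node. Concretely, by \Cref{rem:preimage-partition} the sets $\pi^{-1}(t)$ for $t\in V(B_\ell)$ partition $V(G_\ell)$ and each has size $16$, so $|V(G_\ell)| = 16\,|V(B_\ell)|$.

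Next I would count $|V(B_\ell)|$. A routine induction on $p$ from the recursive definition of the complete binary tree (one vertex when $p=1$; two copies of the depth-$(p-1)$ tree plus a new root otherwise) gives that the complete binary tree of depth $p$ has $2^{p}-1$ vertices, hence $|V(B_\ell)| = 2^{h(\ell)}-1$. Plugging in $h(\ell) = 5\cdot 2^{\ell-1}-2$ from \eqref{eq:exp} yields
\[
|V(G_\ell)| \;=\; 16\bigl(2^{\,5\cdot 2^{\ell-1}-2}-1\bigr).
\]
It remains to verify the elementary bound $16\bigl(2^{\,5\cdot 2^{\ell-1}-2}-1\bigr)\ge 2^{2^{\ell}}$ for $\ell\ge 1$. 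Since $h(\ell)\ge 3$ for $\ell\ge 1$, we have $2^{h(\ell)}-1\ge \tfrac12\,2^{h(\ell)}$, so $|V(G_\ell)|\ge 8\cdot 2^{h(\ell)} = 2^{\,h(\ell)+3} = 2^{\,5\cdot 2^{\ell-1}+1}$; and $5\cdot 2^{\ell-1}+1 = \tfrac52\,2^{\ell}+1\ge 2^{\ell}$, which gives the claim.

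There is no real obstacle here: the argument is a short computation, and the only thing to be careful about is keeping the off-by-one conventions straight — namely that a complete binary tree of depth $p$ has $2^{p}-1$ vertices (rather than $2^{p}$ or $2^{p+1}-1$) and that $h$ carries the $-2$ correction of \eqref{eq:exp} — after which the inequality follows immediately. For comparison, this is the analogue of the corresponding size estimate in \cite{defrain2024sparse}, up to the different gadget size used here.
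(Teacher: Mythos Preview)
Your proof is correct and follows essentially the same approach as the paper's: compute $|V(G_\ell)| = 16\bigl(2^{h(\ell)}-1\bigr)$ via \Cref{rem:preimage-partition} and the vertex count of the complete binary tree, then finish with an elementary comparison of exponents. The only difference is cosmetic, in the precise inequality chain used at the end.
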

\begin{proof}
Indeed, $B_\ell$ is a complete binary tree of depth $h(\ell)$. Hence, $B_\ell$ has $2^{h(\ell)} - 1$ nodes. Since each node of $B_\ell$ is replaced by a copy of the 16-vertex gadget to make $G_\ell$, we have $|V(G_\ell)| = 16(2^{h(\ell)} - 1)  \geq 2^{5 \cdot 2^{\ell - 1} - 2} - 1 \ge 2^{2^{\ell} + 2^{\ell-1}} - 1 \ge 2^{2^\ell}$.
\end{proof}

\begin{lemma}\label{lem:2deg}
    For every integer $\ell\geq 1$,
    the graph $G_\ell$ is 2-degenerate.
\end{lemma}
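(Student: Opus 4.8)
The plan is to exhibit an explicit ordering of $V(G_\ell)$ witnessing 2-degeneracy, or equivalently to repeatedly peel off a vertex of degree at most $2$. The natural order to use is a ``bottom-up'' order on the tree $B_\ell$: first I would handle the gadgets at the leaves of $B_\ell$, then work upward toward the root. Within a single gadget I would also fix an internal elimination order so that the 16 vertices are removed one after the other in a way that always keeps degree at most $2$ at the moment of removal.

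The key structural observation to exploit is \Cref{rem:preimage-partition}: the vertex set is partitioned into the 16-vertex gadgets $\pi^{-1}(t)$, and the edges of $G_\ell$ are of exactly three kinds — gadget edges (inside a single gadget), tree edges (out-ports of a child gadget matched to connectors of the parent gadget), and ribs (out-ports of an ancestor gadget joined to in-ports of a descendant gadget, as prescribed by the nested interval system $\cN_\ell$). First I would check that, after we have already deleted all gadgets at nodes that are proper descendants of $t$, the gadget at $t$ has lost all its tree edges going downward and all ribs going downward (those all had an endpoint in a strictly deeper gadget). What remains incident to $\pi^{-1}(t)$ are: the gadget edges of $\pi^{-1}(t)$; the (at most one) tree edge from the root of the gadget upward to the parent gadget; and the ribs from the out-ports of $\pi^{-1}(t)$ \emph{up} to ancestors — but wait, ribs go from ancestor out-ports to descendant in-ports, so the ribs incident to the \emph{in-ports} of $\pi^{-1}(t)$ go up, while ribs incident to the \emph{out-ports} of $\pi^{-1}(t)$ go down to descendants and have already been deleted. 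So when we reach $t$ in the bottom-up sweep, the only edges leaving $\pi^{-1}(t)$ are: the up-tree edge at the connectors/root, and the up-ribs at the in-ports. Crucially, by item (2) of \Cref{rem:nestint} and the nesting property, each in-port of a gadget receives at most one rib (it corresponds to a unique interval endpoint of a fixed rank), and similarly each out-port is used a bounded number of times. Then, looking at the 12-vertex gadget of Figure~\ref{fig:gadget}, I would verify by direct inspection that this gadget-with-bounded-extra-pendant-edges can be eliminated vertex by vertex keeping degree $\le 2$: start from the green ``connector'' endpoints and the in-ports (each has at most one internal neighbor plus at most one external edge, hence degree $\le 2$), peel those, then the pentagon in-ports become degree $\le 2$, and so on inward until the triangle at the top and the out-ports, which by then have had their downward edges removed.

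The main obstacle — and the only place that needs care — is bounding the number of external edges simultaneously incident to a single vertex of a gadget at the moment of its elimination, in particular at the out-ports and the in-ports. One must check that an in-port receives at most one rib (which follows from the distinctness of interval endpoints in \Cref{rem:nestint}(1) and from how ribs are attached according to depths $(i,j)\in\cN_\ell$), and that an out-port sends ribs only ``downward'', so that all of those are gone once we have deleted the descendant gadgets; combined with the one tree edge, this keeps every vertex at degree $\le 2$ when removed. Since this is the content of the corresponding lemma in \cite{defrain2024sparse} and our gadget is only a minor modification (with the same port structure and the same rib/tree-edge incidences), I would simply follow that argument, noting that the extra flexibility in our gadget does not change the relevant degree counts. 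Hence $G_\ell$ is 2-degenerate.
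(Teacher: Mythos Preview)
Your approach --- eliminate gadgets bottom-up along $B_\ell$, and within each gadget peel vertices in a fixed order --- is the right one, and is presumably what the paper intends (it omits the proof and just refers to \cite{defrain2024sparse}).

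That said, several details in your write-up are off and would not survive a careful check. The gadget has $16$ vertices (\Cref{rem:preimage-partition}), not $12$. More importantly, your proposed starting point ``connectors and in-ports, each with at most one internal neighbour plus at most one external edge'' is inaccurate on both counts: after all descendant gadgets have been removed, a connector has \emph{two} internal neighbours (an in-port and the other connector on its side) and \emph{zero} external edges, while an inner in-port such as SSE has two internal neighbours (SSE1 and SSW) plus its one incoming rib, so it starts at degree~$3$ and cannot be removed first. A correct within-gadget order is: the four connectors (degree~$2$), then the four ``outer'' in-ports SSE1, NESE1, SSW1, NWSW1 (now one internal neighbour $+$ one rib), then the four ``inner'' in-ports SSE, SSW, NESE, NWSW (now one internal neighbour $+$ one rib), then the two vertices NE, NW (now degree~$2$), and finally the two out-ports (now only the single tree edge to the parent). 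Your observation that each in-port carries at most one rib, by the distinctness of interval endpoints in \Cref{rem:nestint}(1), is the key external fact and is correct.
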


Let us note that the proofs in the upcoming  \Cref{sec:ribs} do not depend on gadgets themselves but on the structure of the ribs. In the current proof, the properties of the gadgets are only used to show the above lemmas and in the last step of \Cref{sec:smallip}.

\subsection{Ribs, sources, and their properties}
\label{sec:ribs}

In the rest of the proof we fix $\ell\in\mathbb{N}_{\geq 1}$.
A node $s$ of the tree $B_\ell$ is a \emph{source} if there is an interval $(i,j) \in \cN_\ell$ such that $s$ has depth~$i$.
Intuitively, this means that in $G_\ell$ there are ribs from the out-ports of the gadget at $s$ to the in-ports of the gadget at $t$, for every descendant $t$ of $s$ of depth~$j$.
For a source $s$ of $B_\ell$ the \emph{rank} of $s$ is defined as the rank of $(i,j)$, i.e., the integer $a\in\intv{1}{\ell}$ such that $j - i + 1 = h(a)$. As for depth, we will extend this notation to gadgets and vertices of $G_\ell$: $\rank(J) := \rank(s)$ if $J$ is the gadget at $s$, and $\rank(v) := \rank(J)$ if $v \in V(J)$.

We denote by $B_\ell(s)$ the subtree of $B_\ell$ rooted at $s$ and of depth~$h(a)$. This means that the leaves of $B_\ell(s)$ are exactly those vertices $t$ such that in $G_\ell$, the gadget at $s$ sends ribs to the gadget at $t$.
The graph $G_\ell(s)$ is defined as the subgraph of $G_\ell$ induced by $\pi^{-1}\big(V(B_\ell(s))\big)$.

The \emph{internal nodes} of $B_\ell(s)$ are those that are neither the root or leaves of $B_\ell(s)$.
For every node $x$ of $B_\ell$, we define $\tau(x)$ as the minimum rank of a source $s$ such that $x$ is an internal node of $B_\ell(s)$. Notice that if $x$ is the root or a leaf of $B_\ell$ then $\tau$ is not defined: in this case we set $\tau(x) = \ell+1$.
We naturally extend the definition of $\tau$  to gadgets and vertices of $G_\ell$ as we did for $\rank$ above.

\smallskip

In a graph $G$, we say that a set $X\subseteq V(G)$ \emph{separates} two sets $Y,Z\subseteq V(G)$ if every path from a vertex of $Y$ to a vertex of $Z$ intersects $X$.

\begin{remark}\label{rem:edge-order}
If two vertices $u$ and $v$ are adjacent in $G_\ell$, then $\pi(u)$ and $\pi(v)$ are $\preceq$-comparable. In particular, if $\depth(u) = \depth(v)$ then $\pi(u) = \pi(v)$.
\end{remark}

\begin{remark}\label{rem:source-sep}
If $s$ is a source of $B_\ell$, and $\mathcal{L}$ is the set of leaves of $B_\ell(s)$, then
$$ X :=  \bigcup\limits_{x \in \mathcal{L}\cup \{s\}} \pi^{-1}(x) $$ 
separates $G_\ell(s)$ from $G_\ell \setminus G_\ell(s)$.
\end{remark}
\begin{remark}\label{rem:intertau}
    Let $s$ be a source of $B_\ell$ of rank $a\in \intv{1}{\ell}$ and $v \in V(G_\ell(s))$.
    \begin{enumerate}
        \item if $\pi(v)$ is an internal node of $B_\ell(s)$ then $\tau(v)\leq a$;
        \item if $\pi(v) = s$ or $\pi(v)$ is a leaf of $B_\ell(s)$, then $\tau(v) = a+1$.
    \end{enumerate}
\end{remark}

\begin{remark}\label{rem:depth-decreasing}
If $uv$ is an edge of $G_\ell$ such that $\depth(u)< \depth(v)$, then either $uv$ is a tree edge, or $uv$ is a rib of source $\pi(u)$. In the first case, $u$ is a connector and $v$ an out-port, and in the second case, $u$ is an out-port and $v$ an in-port.
\end{remark}

\begin{lemma}\label{lem:tau-decreasing}
If $uv$ is an edge of $G_\ell$ such that $\tau(u) > \tau(v)$, then $uv$ is a tree edge and $\tau(u)=\tau(v)+1$.
If in addition, $\depth(v)\geq \depth(u)$, then $\pi(u)$ is a source,
$u$ is a connector, and $v$ is an in-port.
\end{lemma}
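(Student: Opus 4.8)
The plan is to split the proof along the two sentences of the statement and, in each part, to argue by the type of the edge $uv$, which by construction of $G_\ell$ is a gadget edge, a rib, or a tree edge.

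For the first sentence, I would first dispose of gadget edges and ribs. If $uv$ is a gadget edge then $\pi(u)=\pi(v)$, so $\tau(u)=\tau(v)$, contradicting $\tau(u)>\tau(v)$. If $uv$ is a rib of some source $s$ of rank $a$, then (up to swapping $u,v$) $\pi(u)=s$ and $\pi(v)$ is a leaf of $B_\ell(s)$, so by Remark~\ref{rem:intertau}(2) both endpoints have $\tau$ equal to $a+1$, again impossible. Hence $uv$ is a tree edge, and $x:=\pi(u)$, $y:=\pi(v)$ are adjacent in $B_\ell$ with $\tau(u)=\tau(x)$ and $\tau(v)=\tau(y)$. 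It then suffices to prove that adjacent nodes $x,y$ of $B_\ell$ satisfy $|\tau(x)-\tau(y)|\le 1$, since together with $\tau(x)>\tau(y)$ this yields $\tau(x)=\tau(y)+1$. Assuming $x$ is the parent of $y$: if $\tau(y)=a\le\ell$, take a source $s$ of rank $a$ with $y$ internal in $B_\ell(s)$; since $s$ is a proper ancestor of $y$, also $x\in V(B_\ell(s))$, and either $x=s$ (so $x$ is a source and $\tau(x)=a+1$ by Remark~\ref{rem:intertau}(2)) or $x$ is internal in $B_\ell(s)$ (so $\tau(x)\le a$ by Remark~\ref{rem:intertau}(1)); in both cases $\tau(x)\le\tau(y)+1$. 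Symmetrically, if $\tau(x)=a'\le\ell$ then $y$ is internal in, or a leaf of, the subtree of the relevant source, giving $\tau(y)\le a'+1=\tau(x)+1$. The corner cases where $\tau(x)$ or $\tau(y)$ equals $\ell+1$ (i.e., that node is the root or a leaf of $B_\ell$) are checked by hand using Remark~\ref{rem:nestint}, the fact that the root is a source of rank $\ell$, and $h(\ell)\ge 3$.

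For the second sentence, assume moreover $\depth(v)\ge\depth(u)$. As $uv$ is a tree edge, $\depth(u)\ne\depth(v)$ by Remark~\ref{rem:edge-order}, so $\depth(v)=\depth(u)+1$ and $y=\pi(v)$ is a child of $x=\pi(u)$; the port types of $u$ and $v$ are then exactly those given by Remark~\ref{rem:depth-decreasing} ($u$ a connector, $v$ an out-port). It remains to see that $x$ is a source. From $\tau(x)=\tau(u)>\tau(v)=\tau(y)$, the node $y$ cannot be a leaf of $B_\ell$ (else $\tau(y)=\ell+1\ge\tau(x)$), so $\tau(y)=a\le\ell$ and there is a source $s$ of rank $a$ with $y$ internal in $B_\ell(s)$; as above $x\in V(B_\ell(s))$, and if $x$ were internal in $B_\ell(s)$ we would get $\tau(x)\le a=\tau(y)$, a contradiction, so $x=s$ and $x$ is a source. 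The step I expect to require the most care is the inequality $|\tau(x)-\tau(y)|\le 1$, in particular the bookkeeping at the boundary of the subtrees $B_\ell(s)$. A convenient way to organise this is to first observe that $\tau(x)$ depends only on $\depth(x)$ — whenever $(i,j)\in\cN_\ell$ with $i<\depth(x)$, the ancestor of $x$ at depth $i$ is a node at depth $i$, hence a source — so $\tau(x)$ is the least rank of an interval of $\cN_\ell$ whose interior contains $\depth(x)$, and $\ell+1$ if there is none. The inequality then reduces to a statement about consecutive depths $d$ and $d+1$, which follows from the non-crossing property of $\cN_\ell$ (Remark~\ref{rem:nestint}(3)) together with the fact that, in the recursive construction, the interval immediately containing an interval of rank $a$ has rank $a+1$.
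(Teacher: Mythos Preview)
Your argument is correct, but it is organized quite differently from the paper's proof and does noticeably more work.

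The paper does not begin with an edge-type case split. Instead it goes straight to the key object: since $\tau(v)\le\ell$, there is a source $s$ of rank $a=\tau(v)$ with $\pi(v)$ internal in $B_\ell(s)$. Then \emph{Remark~\ref{rem:source-sep}} (which you never invoke) gives $u\in V(G_\ell(s))$ immediately from adjacency, and since $\tau(u)>a$, Remark~\ref{rem:intertau} forces $\pi(u)\in\{s\}\cup\{\text{leaves of }B_\ell(s)\}$, hence $\tau(u)=a+1$. Only then does the paper read off the edge type (not a gadget edge since $\pi(u)\ne\pi(v)$; not a rib since $\tau(u)\ne\tau(v)$; hence a tree edge). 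For the second sentence, the paper simply notes that leaves of $B_\ell(s)$ are deeper than the internal node $\pi(v)$, so $\depth(v)\ge\depth(u)$ forces $\pi(u)=s$.

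Your route---rule out gadget edges and ribs first, then prove the auxiliary statement $|\tau(x)-\tau(y)|\le1$ for adjacent nodes of $B_\ell$---is sound, and your observation that $\tau$ depends only on depth is a nice way to organize it. But this buys generality you do not need, at the price of the corner-case bookkeeping you flag (root and leaves of $B_\ell$, the nesting structure of $\cN_\ell$). The paper's use of the separation remark sidesteps all of that in one line. Interestingly, in your second paragraph you essentially rediscover the paper's argument (find $s$ for $\pi(v)$, place $\pi(u)$ in $B_\ell(s)$, rule out internal); running that same argument from the start would have given you the whole lemma without the $|\tau(x)-\tau(y)|\le1$ detour.

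One incidental point: you conclude ``$v$ an out-port'' from Remark~\ref{rem:depth-decreasing}, while the lemma statement says ``in-port''. Your reading of the construction is the right one (tree edges join connectors to out-ports, as in Figure~\ref{fig:constr}), and indeed when the lemma is applied later in Claim~\ref{cl:tau-bimon} the paper itself says out-port; the ``in-port'' in the statement appears to be a typo.
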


\begin{proof}
Since $\tau(v) < \tau(u)$ we have $\tau(v) \le \ell$.
Hence there exists a source $s$ of minimum rank such that $\pi(v)$ is an internal node of $B_\ell(s)$.
By definition we have $\rank(s) = \tau(v)$.

By \Cref{rem:source-sep}, the vertex $u$ being adjacent to $v$, we obtain $u \in V(G_\ell(s))$.
Since $\tau(u) > \rank(s)$, \Cref{rem:intertau} asserts that $\pi(u)$ is either $s$ or a leaf of $B_\ell(s)$.
In particular, $\tau(u) = \rank(s) + 1 = \tau(v) + 1$.

Finally, since $u$ and $v$ are not in the same gadget, \Cref{rem:edge-order} asserts that $\depth(u) \neq \depth(v)$.
Hence by \Cref{rem:depth-decreasing} $uv$ is either a tree edge or a rib; but it cannot be a rib as $\tau(u) \neq \tau(v)$. Hence, $uv$ is a tree edge.
As the depth of the leaves of $B_\ell(s)$ is larger than the depth of its internal nodes, the case where $\depth(v) \ge \depth(u)$ translates into $\pi(u) = s$, $u$ is a connector and $v$ an in-port.
\end{proof}

\newcommand{\Out}{\mathrm{Out}}
\newcommand{\Int}{\mathrm{Int}}
We will need a last structural lemma, which gives a slightly more precise version of \Cref{rem:source-sep}.
For any source $s$, we denote the outside of $G_\ell(s)$ by $\Out(s) := G_\ell \setminus G_\ell(s)$ and its interior by $\Int(s) := \{v \in V(G_\ell) \ : \ \pi(v) \text{ is an internal node of } B_\ell(s)\}$. Recall that for a vertex $v$ in a graph $G$, $N_G[v]=\{v\}\cup N_G(v)$ denotes the closed neighborhood of $v$ in $G$ (we omit subscripts when $G$ is clear from the context).

\begin{lemma}\label{lem:separation}
Let $s$ be a source of $B_\ell$. Let $u_L$ (resp.\ $u_R$) be the left (resp.\ right) out-port of the gadget at $s$. Then,
\begin{itemize}
    \item \label{it:left_sep} the set $N_{G_\ell(s)}[u_L]$ separates $\Out(s) \cup \{u_R\}$ from $\Int(s)$;~and
    \item \label{it:right_sep} the set $N_{G_\ell(s)}[u_R] \cup \{u_L\}$ separates $\Out(s)$ from $\Int(s)$.
\end{itemize}
\end{lemma}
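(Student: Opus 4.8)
Looking at Lemma~\ref{lem:separation}, I need to prove two separation statements about the out-ports $u_L$ and $u_R$ of a source gadget.

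\medskip

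\textbf{Plan of the proof.} The key idea is to understand how a path from $\Int(s)$ must escape $G_\ell(s)$. By Remark~\ref{rem:source-sep}, any path leaving $G_\ell(s)$ must pass through $X = \bigcup_{x\in \mathcal{L}\cup\{s\}}\pi^{-1}(x)$, where $\mathcal{L}$ is the set of leaves of $B_\ell(s)$. So I would start by analyzing a path $Q$ that starts at a vertex of $\Int(s)$ and ends at a vertex of $\Out(s)\cup\{u_R\}$ (for the first item), and look at the first vertex $w$ of $Q$ that leaves $\Int(s)$, i.e.\ whose predecessor on $Q$ is in $\Int(s)$ but $w$ itself is not. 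The predecessor $w'$ lies in $\Int(s)$, so $\pi(w')$ is an internal node of $B_\ell(s)$, and $w$ is a neighbor of $w'$. Using Remark~\ref{rem:edge-order} and Remark~\ref{rem:depth-decreasing}, I would classify the edge $w'w$: either $\pi(w)\prec\pi(w')$ (so we move ``up'' towards $s$), or $\pi(w')\prec\pi(w)$ (so we move ``down'' towards a leaf of $B_\ell(s)$), or $\pi(w)=\pi(w')$ (same gadget). Since $w\notin\Int(s)$, in the same-gadget case nothing changes, so I may assume $w'w$ crosses between gadgets.

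\medskip

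\textbf{Case analysis on the escape vertex.} If the path moves down, then $\pi(w)$ is a leaf of $B_\ell(s)$; but leaves of $B_\ell(s)$ are not adjacent to anything outside $G_\ell(s)$ other than via further tree edges back towards internal nodes of $B_\ell(s)$ (by Remark~\ref{rem:depth-decreasing} the only edges from a leaf gadget going to strictly greater depth are tree edges to its children, which are \emph{not} in $B_\ell(s)$ — wait, they are outside $G_\ell(s)$). Here I need to be careful: the claim is that to actually reach $\Out(s)$ the path must eventually pass through the gadget at $s$ itself, and the only vertices of the gadget at $s$ through which one can exit ``upward'' to $\Out(s)$ are its connectors and out-ports; the ribs of source $s$ emanate precisely from its out-ports $u_L, u_R$. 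The crucial structural point — which I expect to be the main obstacle — is to show that $N_{G_\ell(s)}[u_L]$ (respectively $N_{G_\ell(s)}[u_R]\cup\{u_L\}$) ``guards'' the left half (resp.\ both halves) of $B_\ell(s)$, together with the connectors/in-ports of the gadget at $s$, so that no path from $\Int(s)$ can sneak past it. This requires examining the internal structure of the gadget in Figure~\ref{fig:gadget}: $u_L$ is adjacent to the left connector-chain and to one vertex of the triangle, while the right out-port sits on the other side, and the two out-ports are \emph{not} adjacent (they are joined only through the triangle interior). So removing $N[u_L]$ from $G_\ell(s)$ disconnects the left child subtree's gadgets (plus the left connectors) from everything on the right and from $\Int(s)$'s interior nodes except through $u_R$'s side — hence the asymmetry in the statement: from $\Int(s)$ one can still reach $u_R$ after deleting $N[u_L]$, which is why $u_R$ is placed on the ``$\Out(s)\cup\{u_R\}$'' side of the first item.

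\medskip

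\textbf{Carrying it out.} Concretely, I would: (1) delete $N_{G_\ell(s)}[u_L]$ and argue, using Remarks~\ref{rem:edge-order},~\ref{rem:source-sep},~\ref{rem:depth-decreasing} and Lemma~\ref{lem:tau-decreasing}, that the left connectors of the gadget at $s$ and the entire subtree $B_\ell(s_1)$ ($s_1$ the left child of $s$) get separated from $\Int(s)$ — since every edge leaving that region either goes through $u_L$, through $N[u_L]$, or through a tree edge at $s$ that $N[u_L]$ controls; meanwhile the right side (right connectors, $B_\ell(s_2)$, and $u_R$) remains reachable from the internal nodes of $B_\ell(s)$, so the separator must be exactly $N_{G_\ell(s)}[u_L]$ to cut $\Int(s)$ from $\Out(s)\cup\{u_R\}$. (2) For the second item, additionally delete $u_L$: now the left connectors lose their last attachment point too, and $N_{G_\ell(s)}[u_R]\cup\{u_L\}$ separates all of $\Int(s)$ (both sides) from $\Out(s)$, since the only remaining way out of $G_\ell(s)$ upward was through the two out-ports and their ribs, both of which are now in the deleted set. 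The bookkeeping is almost entirely local to one gadget plus its two children and the relevant ribs, so once the gadget adjacencies from Figure~\ref{fig:gadget} are written out explicitly, each separation claim follows by checking that every edge crossing the purported cut has an endpoint in the separator.
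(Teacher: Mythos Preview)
Your plan misidentifies the set $N_{G_\ell(s)}[u_L]$ and therefore the mechanism behind the separation. Inside the gadget at $s$, the left out-port $u_L$ is adjacent to \emph{both} upper triangle vertices (not to ``the left connector-chain and one vertex of the triangle''), and, crucially, via \emph{ribs} it is adjacent to the \emph{inner} layer of in-ports in \emph{every} leaf gadget of $B_\ell(s)$ (see Figure~\ref{fig:ribs}). Symmetrically, $u_R$ is adjacent to the \emph{outer} layer of in-ports in every leaf gadget. There is no left/right subtree dichotomy here: the children $s_1,s_2$ of $s$ are internal nodes of $B_\ell(s)$, so their gadgets lie \emph{inside} $\Int(s)$; your proposed step ``$B_\ell(s_1)$ gets separated from $\Int(s)$'' is therefore not a meaningful claim.

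The actual argument (as in the paper) is this: a path from $\Int(s)$ to $\Out(s)$ must cross the boundary either at an out-port of $s$ (upward tree edge) or at a connector of some leaf gadget (downward tree edge) --- ribs cannot cross because intervals in $\cN_\ell$ are nested. In the downward case, within the leaf gadget the path must traverse the chain \emph{first in-port} $\to$ \emph{second in-port} $\to$ \emph{connector}, hence it hits a vertex of $N(u_L)$ and then a vertex of $N(u_R)$. In the upward case it hits $u_L$ or $u_R$; if it hits $u_R$ then its predecessor lies in $N(u_R)$, and the subpath back to $\Int(s)$ is forced through $N[u_L]$ (either through a triangle vertex, which is in $N(u_L)$, or through a second in-port, whose path back to $\Int(s)$ crosses a first in-port). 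This layered in-port barrier at the leaves --- not a left/right split at $s$ --- is the missing idea in your plan.
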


\begin{proof}
Let $P = v_1, \dots, v_p$ be any path from $\Int(s)$ to $\Out(s)$.
Consider the smallest $i$ such that $v_{i+1} \in \Out(s)$ and $v_i \not \in \Out(s)$.

Note that $v_i$ and $v_{i+1}$ cannot be in the same gadget. Hence, $\depth(v_i) \neq \depth(v_{i+1})$.
As $s$ is a source, there is an interval $(a,b) \in \cN_\ell$ such that $s$ has depth $a$. The intervals of $\cN_\ell$ do not cross and have distinct endpoints (\Cref{rem:nestint}), so by definition of ribs the edge $v_iv_{i+1}$ is not a rib. Hence, $v_iv_{i+1}$ is a tree edge.
Tree edges connect gadgets at adjacent nodes of $B_\ell$. So, either $v_i$ is an out-port of the gadget at $s$ and $v_{i+1}$ is a connector of the parent of $s$, or $v_i$ is a connector of a gadget at some leaf of $B_\ell(s)$ and $v_{i+1}$ is an out-port of the gadget at a child of this leaf.

We denote by $C$ the set of connector vertices in the gadgets at leaves of $B_\ell(s)$.
By construction, any neighbor of a vertex in $C$ in $G_\ell(s)$ is an in-port adjacent to $u_R$, i.e., $N_{G_\ell(s)}(C) \cap V(G_\ell(s)) \subseteq N_{G_\ell(s)}(u_R)$. Similarly, $N_{G_\ell(s)}(N_{G_\ell(s)}(C)) \cap V(G_\ell(s)) \subseteq N_{G_\ell(s)}(u_L)$.
Hence any path from $\Int(s)$ to $C$ that stays in $G_\ell(s)$ will intersect both $N_{G_\ell(s)}(u_R)$ and $N_{G_\ell(s)}(u_L)$. Hence, a path going from $\Int(s)$ to $\Out(s)$ intersects either both $N_{G_\ell(s)}(u_L)$ and $N_{G_\ell(s)}(u_R)$, or one of $\{u_L, u_R\}$. If $v_i \in \{u_L, u_R\}$ then either $v_i = u_L$, or $v_i = u_R$ in which case $v_{i-1} \in N(u_R)$, which implies that the subpath $v_1 ,\dots, v_{i-1}$ intersects $N_{G_\ell(s)}[u_L]$.
\end{proof}

\subsection{Special sources and length of induced paths}

In this section, let $Q$ be an induced path of $G_\ell$.

\begin{lemma}\label{lem:depth-root}
There is a unique node $t \in V(B_\ell)$ of minimum depth subject to $\pi^{-1}(t) \cap V(Q)\neq \emptyset$.
\end{lemma}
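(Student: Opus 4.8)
The plan is to reduce the uniqueness claim to a statement about walks in $G_\ell$ that never descend below a fixed depth. Write $d := \min\{\depth(v) : v\in V(Q)\}$, which is well defined since $V(Q)\neq\emptyset$. Existence of a node of depth $d$ whose gadget meets $Q$ is immediate: pick $v\in V(Q)$ with $\depth(v)=d$ and put $t:=\pi(v)$, so that $\depth(t)=\depth(v)=d$. Hence only uniqueness requires an argument.

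Suppose for contradiction that $t_1\neq t_2$ are two nodes of $B_\ell$, both of depth $d$, with $\pi^{-1}(t_1)\cap V(Q)\neq\emptyset$ and $\pi^{-1}(t_2)\cap V(Q)\neq\emptyset$. Choose $a\in\pi^{-1}(t_1)\cap V(Q)$ and $b\in\pi^{-1}(t_2)\cap V(Q)$, and let $R=r_0,\dots,r_k$ be the subpath of $Q$ between $a$ and $b$, so $\pi(r_0)=t_1$ and $\pi(r_k)=t_2$. Since $r_j\in V(Q)$ for every $j$, we have $\depth(r_j)\ge d$; in particular $\depth(r_0)=\depth(r_k)=d$.

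The heart of the argument is to prove, by induction on $j$, that $\pi(r_0)\preceq\pi(r_j)$ for every $j\in\{0,\dots,k\}$. The base case $j=0$ is trivial. For the inductive step, apply \Cref{rem:edge-order} to the edge $r_jr_{j+1}$ of $G_\ell$: the nodes $\pi(r_j)$ and $\pi(r_{j+1})$ are $\preceq$-comparable. If $\pi(r_j)\preceq\pi(r_{j+1})$, we conclude by the induction hypothesis. If instead $\pi(r_{j+1})\prec\pi(r_j)$, then $\pi(r_{j+1})$ lies on the path of $B_\ell$ from the root to $\pi(r_j)$; by the induction hypothesis $\pi(r_0)$ also lies on this path, at depth $d$, and since this path contains exactly one node per depth while $\depth(\pi(r_{j+1}))=\depth(r_{j+1})\ge d$, the node $\pi(r_{j+1})$ is a descendant of (or equal to) $\pi(r_0)$, i.e.\ $\pi(r_0)\preceq\pi(r_{j+1})$. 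Taking $j=k$ yields $t_1=\pi(r_0)\preceq\pi(r_k)=t_2$, and as $\depth(t_1)=\depth(t_2)=d$ this forces $t_1=t_2$, the desired contradiction.

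I do not expect a real obstacle here; the only point requiring care is the depth bookkeeping in the induction. The intuition is that $\pi(r_0),\dots,\pi(r_k)$ is a walk in $B_\ell$ which may move up and down the tree but never reaches depth below $d$, hence cannot leave the subtree rooted at $\pi(r_0)$ — and the induction above makes this precise. One should also keep in mind that every vertex of the subpath $R$ inherits the lower bound $d$ on its depth from lying in $V(Q)$. It is worth noting that the argument uses only that $V(Q)$ induces a connected subgraph of $G_\ell$, so the fact that $Q$ is \emph{induced} plays no role in this particular lemma.
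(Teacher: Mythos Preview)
Your proof is correct and follows essentially the same approach as the paper: both arguments use \Cref{rem:edge-order} to conclude that the projection of the subpath of $Q$ to $B_\ell$ is a walk along $\preceq$-comparable nodes, and both reach a contradiction with the minimality of the depth. The paper phrases this in one sentence (the subpath must visit a common ancestor of $t$ and $t'$, contradicting minimality), whereas you spell out the same idea via an explicit induction showing $\pi(r_0)\preceq\pi(r_j)$ for all~$j$; your observation that only connectivity of $Q$ is used (not that $Q$ is induced) is accurate.
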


\begin{proof}
Let us assume towards a contradiction that there are two different such nodes $t,t'$. As they have the same depth, they are not $\preceq$-comparable in $B_\ell$. Recall that every edge of $G_\ell$ connects vertices whose image by $\pi$ is $\preceq$-comparable. Therefore, the subpath of $Q$ linking $\pi^{-1}(t)$ to $\pi^{-1}(t')$ contains a vertex of $\pi^{-1}(t'')$, for some common ancestor $t''$ of $t$ and $t'$, which contradicts the minimality of the depth of those vertices.
\end{proof}

\begin{lemma}\label{lem:tau-constant}
There is a constant $c_{\ref{lem:tau-constant}}$ such that if $Q=u_1,\dots ,u_q$ is an induced path of $G_\ell$ with $\tau(u_i)=\tau(u_1)$ for all $2\leq i\leq q$, then $|Q|\leq c_{\ref{lem:tau-constant}}$.
\end{lemma}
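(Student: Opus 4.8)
The plan is to reduce the question to understanding the subgraph of $G_\ell$ spanned by the vertices of one fixed $\tau$-value, to show that after contracting gadgets this subgraph is a forest of trees of height at most $3$, and then to conclude by an elementary count. Write $a:=\tau(u_1)$, so $\tau(u_i)=a$ for all $i$; note $a\in\{1,\dots,\ell+1\}$. I would first treat the main case $a\le\ell$ (the case $a=\ell+1$ being similar and simpler, and $a=1$ trivial; both are handled below). Let $t\in V(B_\ell)$ be the unique node of minimum depth with $\pi^{-1}(t)\cap V(Q)\neq\emptyset$, given by \Cref{lem:depth-root}; then $\tau(t)=a$, so there is a source $s$ of rank $a$ (unique, since the rank-$a$ intervals of $\cN_\ell$ are pairwise disjoint, having equal length and not crossing by \Cref{rem:nestint}) such that $t$ is an internal node of $B_\ell(s)$. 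With $\mathcal L$ the set of leaves of $B_\ell(s)$, the set $X:=\bigcup_{x\in\mathcal L\cup\{s\}}\pi^{-1}(x)$ separates $G_\ell(s)$ from $G_\ell\setminus G_\ell(s)$ by \Cref{rem:source-sep}, and every vertex of $X$ has $\tau$-value $a+1\neq a$ by \Cref{rem:intertau}, so $X\cap V(Q)=\emptyset$. Since $Q$ is connected and meets $\Int(s)=V(G_\ell(s))\setminus X$, we get $V(Q)\subseteq\Int(s)$: every $\pi(u_i)$ is an internal node of $B_\ell(s)$ with $\tau$-value $a$.

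Next I would pin down which depths survive. Set $j:=\depth(s)$. If $a=1$, then $B_\ell(s)$ has depth $h(1)=3$, its internal nodes all lie at depth $j+1$, they induce no edge (they are pairwise $\preceq$-incomparable, so \Cref{rem:edge-order} applies), none is a source, and none receives a rib (that would force an interval of $\cN_\ell$ crossing $(j,j+2)$, contradicting \Cref{rem:nestint}); hence $Q$ lies in a single gadget and $|Q|\le16$ by \Cref{rem:preimage-partition}. Assume now $a\ge2$. Using the recursive definition of $\cN_\ell$ together with $h(a)=2h(a-1)+2$ (immediate from \eqref{eq:exp}), the intervals of $\cN_\ell$ contained in $(j,j+h(a)-1)$ of rank $a-1$ are exactly $(j+1,\,j+h(a-1))$ and $(j+h(a-1)+1,\,j+2h(a-1))$, and any node of $B_\ell(s)$ at a depth strictly inside one of these intervals is an internal node of a rank-$(a-1)$ subtree, hence of $\tau$-value $<a$. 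It follows that an internal node of $B_\ell(s)$ has $\tau$-value exactly $a$ iff its depth lies in
\[d_1:=j+1,\qquad d_2:=j+h(a-1),\qquad d_3:=j+h(a-1)+1,\qquad d_4:=j+2h(a-1),\]
which are four distinct depths as $h(a-1)\ge3$. Moreover, combining \Cref{rem:edge-order} with \Cref{rem:depth-decreasing}, every edge of $G_\ell$ joining two vertices of $\Int(s)$ of $\tau$-value $a$ is one of: a gadget edge; a tree edge between depths $d_2,d_3$ (joining a depth-$d_2$ node to one of its children); a rib whose source is a depth-$d_1$ node $c$ (joining the gadget at $c$ to the gadget at a leaf of $B_\ell(c)$, of depth $d_2$); or a rib whose source is a depth-$d_3$ node $c'$ (joining the gadget at $c'$ to the gadget at a leaf of $B_\ell(c')$, of depth $d_4$).

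Now I would contract each of the $16$-vertex sets $\pi^{-1}(x)$ (\Cref{rem:preimage-partition}) in the graph $G_\ell[\{v:\tau(v)=a\}\cap\Int(s)]$. By the previous paragraph the result is a forest of two trees, one rooted at each of the two depth-$d_1$ children of $s$: the root (a depth-$d_1$ gadget) has as children the depth-$d_2$ gadgets below it; each depth-$d_2$ gadget has as children the two depth-$d_3$ gadgets at its children in $B_\ell$; each depth-$d_3$ gadget has as children the depth-$d_4$ gadgets below it; and the depth-$d_4$ gadgets are leaves. In particular this forest has height $3$. Since $Q$ is connected it lies in one of these two trees, and since the set $\mathcal G$ of gadgets met by $Q$ is connected in that tree (being the image of a connected graph under contraction) while every gadget of $\mathcal G$ lies at depth at least $\depth(t)$, $\mathcal G$ is contained in the subtree rooted at $g^\ast$, the gadget at $t$; this subtree has height at most $3$. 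Finally, each gadget $g\in\mathcal G$ contains at most $16$ vertices of $Q$, so at most $32$ edges of $Q$ are incident with $g$, so $g$ has at most $32$ children in $\mathcal G$; iterating from $g^\ast$ over the at most $4$ levels gives $|\mathcal G|\le1+32+32^2+32^3$, whence $|Q|\le16\,(1+32+32^2+32^3)=:c_{\ref{lem:tau-constant}}$.

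It remains to treat $a=\ell+1$: here every $\pi(u_i)$ is the root or a leaf of $B_\ell$, and since $h(\ell)\ge3$ the root lies at a depth smaller than all leaves. The root is a source of rank $\ell$ sending ribs to every leaf, and two distinct leaves (being $\preceq$-incomparable) induce no edge by \Cref{rem:edge-order}; so after contracting gadgets, $G_\ell[\{v:\tau(v)=\ell+1\}]$ is a star centred at the root gadget, a tree of height $1$, and the same count (over $2$ levels) gives $|Q|\le16\,(1+32)\le c_{\ref{lem:tau-constant}}$. I expect the main obstacle to be the middle step: extracting from the recursive structure of $\cN_\ell$ and the definition of $\tau$ that precisely four depths carry $\tau$-value $a$ and that the edges between the corresponding gadgets are exactly the four listed types; once this local picture is established, the rest is routine use of the separation lemmas of \Cref{sec:ribs} and counting.
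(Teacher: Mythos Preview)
Your proof is correct and follows essentially the same route as the paper: reduce to $\Int(s)$ for the rank-$a$ source $s$, identify the four depths $d_1,d_2,d_3,d_4$ carrying $\tau$-value $a$, and exploit the tree structure among those gadgets. The paper's version differs only in the final count: instead of bounding the branching factor of $\mathcal G$ by $32$ from the gadget size, it uses the fact that each source gadget has exactly two out-ports, so $Q$ can reach at most three leaf-gadgets of each rank-$(a-1)$ subtree, yielding at most $56$ gadgets rather than your $1+32+32^2+32^3$. Your contraction-to-a-height-$3$-forest presentation is arguably cleaner, while the paper's out-port argument gives a much smaller constant; since the lemma only asserts existence of some constant, either suffices.
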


\begin{proof}
Let $Q$ be such an induced path and $s$ be the source of rank $a=\tau(u_1)$ such that $\pi(u_1)$ is an internal node of $B_\ell(s)$.
In order to show the statement of the lemma, we will prove that $Q$ visits a bounded number of distinct gadgets. This is enough since gadgets have bounded size (Remark~\ref{rem:preimage-partition}).

By Remark~\ref{rem:intertau}, if a vertex $v\in V(G_\ell)$ belongs to the gadget at $s$ or at some leaf of $B_\ell(s)$ then $\tau(v) = a+1$. By Lemma~\ref{lem:separation}, this implies that $Q$ is contained in the union of the gadgets at the internal nodes of $B_\ell(s)$. Let us call $Z$ the union of the vertex sets of these gadgets.

If $a=1$ there are at most two such gadgets so we are done. So, we may now assume $a>1$.

As $s$ is a source, there is an interval $(i,i'') \in \cN_\ell$ such that $\depth(s) = i$ and, for every leaf $t$ of $B_\ell(s)$, $\depth(t) = i''$. We call $s_1$ and $s_2$ the two children of~$s$. Let $i' = i+ h(a-1)$. By construction, $(i+1, i'), (i'+1, i''-1) \in \cN_\ell$; see Figure~\ref{fig:onigiri} for a representation of $B_\ell(s)$. (Notice that $i''\geq i+2$, by the third item of Remark~\ref{rem:nestint}, the definition of the function $h$ and the fact that $a>1$.)
Let $D$ be the set of descendants of $s$ that have depth $i'+1$ in $G_\ell$ (the colored nodes in the figure).

\begin{figure}[htb]
    \centering
    \includegraphics[scale=1.1]{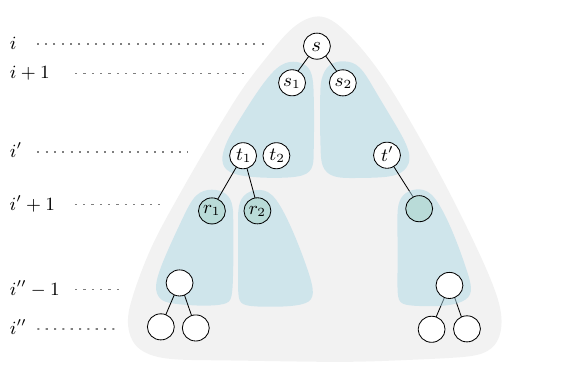}
    \caption{The situation in Lemma~\ref{lem:tau-constant}. Picture from \cite{defrain2024sparse}.}\label{fig:onigiri}
\end{figure}

Let $r \in D$ and let $t$ be a leaf of $B_\ell(r)$. Then, $t$ lies at depth $i''-1$ in $B_\ell$. Observe that in $G_\ell$ each edge with only one endpoint in the gadget at $t$ is of one of the following types:

\begin{itemize}
    \item a tree edge from a vertex of $\pi^{-1}(t^*)$ where $t^*$ is the parent or a child of $t$; or
    \item a rib from an out-port of $\pi^{-1}(r)$.
\end{itemize}

Note that edges of the former type lead to a vertex $v$ with $\tau(v)=a\pm 1$. Therefore, if $Q$ visits the gadget at $t$ and other vertices of $Z$, then $Q$ follows a rib to an out-port of the gadget at $r$.
As there are two out-ports in the gadget at $r$ and all vertices $v$ of $G_\ell(r)$ with $\tau(v)=a$ lie in the gadget at $r$ or in some leaf of $B_\ell(r)$, we deduce that there are at most 
four nodes of $B_\ell(r)$ whose gadget is intersected by $Q$: if $1\le \alpha < \beta\le q$ are the indices such that $u_\alpha$ and $u_\beta$ are out-ports of the gadget at $r$, then $Q$ may visit a leaf of $B_\ell(r)$ before $u_\alpha$, between $u_\alpha$ and $u_\beta$ and after $u_\beta$. Hence, at most three leaves.

For each $j\in \{1,2\}$, the above argument also applies to $s_j$ and the leaves of $B_\ell(s_j)$:  there are at most four nodes of $B_\ell(s_j)$ where gadgets are intersected by $Q$, which are $s_j$ and at most three leaves. We call these leaves $t,t'$ and $t''$ (and choose them arbitrarily if $Q$ visits less than two gadgets at leaves).

Each of $t, t'$ and $t''$ has two children in $B_\ell$. For each such child $r$, we observed above (as $r\in D$) that $Q$ intersects at most four gadgets of nodes of $B_\ell(r)$.
This shows that among the gadgets of descendants of $s_j$, 
$Q$ intersects at most $1 + 3 + 3 \cdot 2  + 3 \cdot 2 \cdot 3 = 28$ of them. So, in total $Q$ is contained in the union of at most 56 gadgets, as desired.
\end{proof}

\subsection{The induced paths of \texorpdfstring{$G_\ell$}{Gl} are short}
\label{sec:smallip}

The following definition is crucial in the rest of the proof:
a source $s$ is said to be \emph{$Q$-special} if $Q$ contains two vertices $u,v$ such that $u$ is an out-port of the gadget at $s$ and $\pi(v)$ is an internal node of $B_\ell(s)$.

For an induced path $Q$, a $Q$-special source $s$ is an important landmark as, intuitively, it identifies a point where $Q$ enters deeper in the tree-structure formed by the ribs. Because of the ribs attached to the gadget at $s$, the subpath of $Q-\{u\}$ containing $v$ (for $v,u$ as in the definition above) will mostly be restricted to $G_\ell(s)$ and will continue towards $Q$-special sources of smaller rank. This will allow us to bound the length of the path if we can also bound the length between two consecutive $Q$-special sources (see \Cref{lem:Q-special-size}). Notice that the definition of a special source above ignores the part of $Q-\{u\}$ that does not contain $v$, so in the final proof (see \Cref{lem:short}) we will need to consider separately the two subpaths of $Q-\{u\}$.

Actually, due to our different construction, we cannot rely only on special sources as in \cite{defrain2024sparse} so we will introduce the notion of $Q$-reducing sources in order to be able to show that one side of the path is indeed confined to a smaller subgraph (\Cref{lem:source-tau}).

\begin{lemma}\label{lem:Q-special-size}
There is a constant $c_{\ref{lem:Q-special-size}}$ such that the following holds.
Let $Q$ be an induced path in $G_\ell$ such that no source in $B_\ell$ is $Q$-special.
Then $|Q| \leq c_{\ref{lem:Q-special-size}} \cdot \ell$.
\end{lemma}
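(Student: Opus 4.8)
The plan is to exploit the $\tau$-function together with \Cref{lem:tau-constant} to show that, in the absence of any $Q$-special source, the path $Q$ cannot ``move down'' the tree structure formed by the ribs very often, so its vertex set is confined to few gadgets of each rank. First I would orient $Q$ as $u_1,\dots,u_q$ and record the values $\tau(u_1),\dots,\tau(u_q)$. By \Cref{lem:tau-decreasing}, consecutive values of $\tau$ along $Q$ differ by at most $1$ (an edge with $\tau(u)>\tau(v)$ is a tree edge with $\tau(u)=\tau(v)+1$, and symmetrically), so $i\mapsto \tau(u_i)$ changes by $\pm1$ at each step, or stays constant. Partition $\{1,\dots,q\}$ into maximal blocks on which $\tau$ is constant; by \Cref{lem:tau-constant} each such block has length at most $c_{\ref{lem:tau-constant}}$, so it suffices to bound the number of blocks, equivalently the number of ``descents'' of $\tau$ (steps where $\tau$ strictly decreases), since between two descents $\tau$ can only ascend or stay constant and the value of $\tau$ lies in $\intv{1}{\ell+1}$.

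The heart of the argument is therefore: \emph{the sequence $\tau(u_1),\dots,\tau(u_q)$ has at most $O(\ell)$ descents}, under the hypothesis that no source is $Q$-special. Here is how I expect to get this. Suppose $\tau(u_i)>\tau(u_{i+1})$; by \Cref{lem:tau-decreasing} the edge $u_iu_{i+1}$ is a tree edge with $\tau(u_i)=\tau(u_{i+1})+1$, and moreover if additionally $\depth(u_{i+1})\ge\depth(u_i)$ then $\pi(u_i)$ is a source $s$, $u_i$ is a connector of the gadget at $s$ and $u_{i+1}$ is an in-port. One then wants to argue that a descent at value $a:=\tau(u_{i+1})$ ``uses up'' the source of rank $a$ whose subtree contains the descent, and that $Q$ can re-enter a given rank only a bounded number of times before either it creates a $Q$-special source (contradiction) or it has exhausted the available sources. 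More concretely: for each rank $a\in\intv{1}{\ell}$, I would look at the subpaths of $Q$ lying inside $\Int(s)$ for sources $s$ of rank $a$, use the separation statement \Cref{lem:separation} (the sets $N_{G_\ell(s)}[u_L]$ and $N_{G_\ell(s)}[u_R]\cup\{u_L\}$ separating the interior of $G_\ell(s)$ from its outside) to see that $Q$ can enter and leave $\Int(s)$ only through the two out-ports $u_L,u_R$ of the gadget at $s$; and the key point is that if $Q$ actually reaches $\Int(s)$ through one of those out-ports, then $s$ is $Q$-special by definition — contradiction. Hence $Q$ never enters $\Int(s)$ for any source $s$, which forces every vertex of $Q$ to have $\tau$-value that is, locally, ``near the top'', and pins down the descent structure: each descent from $a+1$ to $a$ must be immediately reversed, so $\tau$ oscillates within a window of size $O(1)$ around each of $O(\ell)$ threshold crossings dictated by \Cref{lem:depth-root} and the monotone descent of depth into a single root-to-leaf corridor.

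The main obstacle, I expect, is making precise the claim that $Q$ ``crosses each rank level $O(1)$ times'': one has to combine \Cref{lem:depth-root} (there is a unique topmost node $t$ met by $Q$, hence a single ``entry corridor''), \Cref{rem:depth-decreasing} and \Cref{lem:tau-decreasing} (which control exactly which edges change depth and $\tau$), and \Cref{lem:separation} (which confines $Q$ once it is inside some $G_\ell(s)$) to show that, absent $Q$-special sources, the only way for $\tau$ to decrease is along tree edges that march monotonically down a single path of $B_\ell$, and each unit decrease of $\tau$ corresponds to descending past one source-depth, of which there are at most $\ell$ on any root-to-leaf path. Once that is established, the count is $|Q|\le (\text{number of constant-}\tau\text{ blocks})\cdot c_{\ref{lem:tau-constant}} \le O(\ell)\cdot c_{\ref{lem:tau-constant}}$, which gives the desired bound with $c_{\ref{lem:Q-special-size}}$ an absolute constant times $c_{\ref{lem:tau-constant}}$. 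I would double-check the edge cases where $Q$ starts or ends at the root or a leaf of some $B_\ell(s)$ (where $\tau$ takes the sentinel value $\ell+1$), as these are exactly the places where \Cref{lem:tau-decreasing} is sharp.
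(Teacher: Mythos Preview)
Your high-level framework is correct and matches the paper's endgame: partition $Q$ into maximal blocks of constant $\tau$, bound each block by $c_{\ref{lem:tau-constant}}$ via \Cref{lem:tau-constant}, and then bound the number of blocks by $O(\ell)$. The gap is entirely in that last step. Two concrete errors: first, \Cref{lem:separation} does \emph{not} say that $\{u_L,u_R\}$ separates $\Int(s)$ from $\Out(s)$ --- it says the closed \emph{neighborhoods} $N_{G_\ell(s)}[u_L]$ and $N_{G_\ell(s)}[u_R]\cup\{u_L\}$ do, and an induced path can exit $G_\ell(s)$ through a leaf of $B_\ell(s)$ touching only in-port neighbours of $u_L,u_R$ without ever containing $u_L$ or $u_R$, so crossing need not make $s$ $Q$-special. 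Second, even granting your reading of the lemma, ``$Q$ never enters $\Int(s)$'' still would not follow: $Q$ can lie entirely inside $\Int(s)$ (this is the generic situation for high-rank sources) without $s$ being $Q$-special. From this point nothing pins down the descent structure, so neither ``each descent is immediately reversed'' nor the ``single root-to-leaf corridor'' picture is justified; a priori $\tau$ could oscillate many times.

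The paper obtains the missing bimonotonicity of $(\tau(u_i))_i$ by a more delicate route, analysing the \emph{depth} profile of $Q$. It defines a \emph{plateau} as a segment where depth strictly rises, stays constant, then strictly falls, and --- using the no-$Q$-special-source hypothesis --- shows that at every plateau both boundary edges must be ribs and the endpoints must be the two out-ports of a single source gadget (so $\tau$ is constant on each plateau). It then proves every plateau is ``increasing'' or ``decreasing'' in a precise sense, with all decreasing plateaux preceding all increasing ones; this forces $\tau$ to be non-increasing up to some index and non-decreasing afterward, up to a bounded window, after which \Cref{lem:tau-constant} finishes exactly as you described. That plateau analysis of depth is the idea your sketch is missing; a direct attack on $\tau$ via \Cref{lem:separation} alone does not seem to yield it.
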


Here the proof significantly deviates from that of \cite{defrain2024sparse} because of the different gadget that we use.

\begin{proof}
Let $Q = v_1,\dots ,v_q$ be such an induced path. For any two integers $i < j$ such that
\begin{itemize}
    \item $\depth(v_i) < \depth(v_{i+1})$,
    \item $\depth(v_{i+1}) = \dots = \depth(v_{j-1})$, and
    \item $\depth(v_{j-1}) > \depth(v_j)$,
\end{itemize}
we say that $[i, j]$ is a \emph{plateau} of $Q$.
Informally, a plateau captures a local maximum of $\depth$ along the path $Q$. 

\smallskip

The proof of the lemma is split into four claims.

\begin{claim}\label{cl:plateau-structure}
Let $[i, j]$ be a plateau of $Q$.
Then there is a source $s$ such that $v_i$ and $v_j$ are the two out-ports of the gadget at $s$, $v_iv_{i+1}$ and $v_{j-1}v_j$ are ribs and $\tau$ is constant on $\{v_i, \dots, v_j\}$.
\end{claim}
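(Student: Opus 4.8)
The plan is to read off the local structure of $Q$ near the plateau from Remarks \ref{rem:edge-order} and \ref{rem:depth-decreasing}, and to invoke the hypothesis that no source is $Q$-special exactly once, in order to rule out a single bad configuration.

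First I would record two preliminary facts. A plateau satisfies $j\geq i+2$, since otherwise the first and third defining conditions contradict each other. And since $v_{i+1},\dots,v_{j-1}$ all have the same depth while consecutive ones are adjacent in $Q$, Remark \ref{rem:edge-order} forces $\pi(v_{i+1})=\dots=\pi(v_{j-1})$; call this node $t$ and let $J$ be the gadget at $t$. Next I would analyse the two edges $v_iv_{i+1}$ and $v_{j-1}v_j$ with Remark \ref{rem:depth-decreasing}: each of them is either a tree edge, in which case the shallower endpoint is a connector and the deeper endpoint is an out-port of $J$, or a rib of some source, in which case the shallower endpoint is an out-port of that source's gadget and the deeper endpoint is an in-port of $J$. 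The whole claim then amounts to showing that the second alternative holds for both edges (the statement about $s$ and about $\tau$ being easy consequences).

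The core of the argument is to exclude tree edges, and it splits into two subcases. Suppose first that both $v_iv_{i+1}$ and $v_{j-1}v_j$ are tree edges. Then $v_{i+1}$ and $v_{j-1}$ are out-ports of $J$ and $\pi(v_i)=\pi(v_j)$ is the parent of $t$, with $v_i$ (resp.\ $v_j$) the connector of the parent gadget matched to $v_{i+1}$ (resp.\ $v_{j-1}$). If $v_{i+1}=v_{j-1}$ then $j=i+2$ and both tree edges are incident to the out-port $v_{i+1}$; but an out-port lies in at most one tree edge, so $v_i=v_{i+2}$, contradicting that $Q$ is a path. If $v_{i+1}\neq v_{j-1}$ then $v_i$ and $v_j$ are the two distinct connectors of the parent gadget matched to the out-ports of the child $t$, and by construction of the gadget these two connectors are joined by a gadget edge (see Figures~\ref{fig:gadget} and \ref{fig:constr}); hence $v_iv_j\in E(G_\ell)$ is a chord of $Q$ (as $j\geq i+2$), a contradiction. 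Now suppose exactly one of the two edges is a rib, say $v_{j-1}v_j$ is a rib of a source $s=\pi(v_j)$ (the symmetric case is identical): then $v_j$ is an out-port of the gadget at $s$ and $t$ is a leaf of $B_\ell(s)$, while $v_iv_{i+1}$ is a tree edge and $\pi(v_i)$ is the parent of $t$, which is an internal node of $B_\ell(s)$ since $B_\ell(s)$ has depth $h(\rank(s))\geq 3$. Thus $Q$ contains an out-port of the gadget at $s$ together with a vertex whose $\pi$-image is an internal node of $B_\ell(s)$, i.e.\ $s$ is $Q$-special, contradicting the hypothesis of the lemma. I expect this case analysis — making sure the port types, depths and $\preceq$-comparabilities line up, and that the parent of a leaf of $B_\ell(s)$ is genuinely internal — to be the main obstacle; it is also the only place where the no-$Q$-special-source hypothesis and the precise shape of the gadget are used.

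Finally, in the surviving case both $v_iv_{i+1}$ and $v_{j-1}v_j$ are ribs. Let $s=\pi(v_i)$ and $s'=\pi(v_j)$: these are sources, $v_i$ and $v_j$ are out-ports of the gadgets at $s$ and $s'$, and $t$ is a leaf of both $B_\ell(s)$ and $B_\ell(s')$. Since the leaves of $B_\ell(x)$ all lie at depth equal to the right endpoint of the interval of $x$ in $\cN_\ell$, and the endpoints of the intervals in $\cN_\ell$ are pairwise distinct (Remark \ref{rem:nestint}), $s$ and $s'$ have the same interval, so $s=s'$. Hence $v_i$ and $v_j$ are out-ports of the gadget at $s$, and they are distinct since $Q$ is a path, so they are exactly its two out-ports. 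For the last part, $\pi(v_i)=\pi(v_j)=s$ is the root of $B_\ell(s)$ and $\pi(v_k)=t$ is a leaf of $B_\ell(s)$ for $i+1\leq k\leq j-1$, and all these vertices lie in $V(G_\ell(s))$, so Remark \ref{rem:intertau}(2) yields $\tau(v_k)=\rank(s)+1$ for every $k\in\{i,\dots,j\}$; in particular $\tau$ is constant on $\{v_i,\dots,v_j\}$, which completes the proof of the claim.
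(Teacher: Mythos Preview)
Your proof is correct and follows essentially the same approach as the paper's: identify the common gadget $t$ of $v_{i+1},\dots,v_{j-1}$, classify the two boundary edges via Remark~\ref{rem:depth-decreasing}, rule out tree edges by the adjacent-connectors chord and by the $Q$-special hypothesis, and then read off $\tau$. Your write-up is in fact more careful than the paper's in two places (the subcase $v_{i+1}=v_{j-1}$, and the justification that the two rib sources coincide); for the latter, you might add explicitly that $s$ and $s'$ are both ancestors of $t$ at the same depth, which is what actually forces $s=s'$ once you know the intervals agree.
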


\begin{proof}
Since $[i, j]$ is a plateau, $\depth$ is constant on $\{v_{i+1}, \dots, v_{j-1}\}$.
Since any edge of $G_\ell$ between two gadgets connects gadgets of different depth, $\{v_{i+1}, \dots, v_{j-1}\}$ are contained in a unique gadget, say at node $t$. By definition of $\tau$, this implies that $\tau$ is constant on $\{v_{i+1}, \dots, v_{j-1}\}$.

Since $\depth(v_i) < \depth(v_{i+1})$ and $\depth(v_j) < \depth(v_{j-1})$ the edges $v_iv_{i+1}$ and $v_{j-1}v_j$ are either tree edges or ribs (\Cref{rem:depth-decreasing}).
If they are both tree edges, then $v_i$ and $v_j$ are the two (adjacent) connectors of the gadget at the parent node of $t$; a~contradiction since $Q$ is induced.
Otherwise, if $v_iv_{i+1}$ is a rib and $v_{j-1}v_j$ is a tree edge, then $v_i$ is an out-port of the gadget at a source $s \in V(B_\ell)$ and $v_j$ is a connector of the parent $t'$ of $t$. Observe that $t'$ is an internal node of $B_\ell(s)$. So, $s$ is $Q$-special, a~contradiction.
If $v_iv_{i+1}$ is a tree edge and $v_{j-1}v_j$ is a rib, we reach the same contradiction by symmetry.
Hence we conclude that $v_iv_{i+1}$ and $v_{j-1}v_j$ are both ribs.
In particular, $v_i$ and $v_j$ are out-ports of the gadget at a source $s$.
This implies that $\left(\depth(v_i), \depth(v_{i+1})\right)=\left(\depth(v_j), \depth(v_{j-1})\right)$ is an interval of $\mathcal{N}_\ell$, and thus $\tau$ is constant on $\{v_i, \dots, v_j\}$.\cqed
\end{proof}

\begin{corollary}\label{cor:plateau-disjoints}
Two different plateaux of $Q$ cannot intersect. 
\end{corollary}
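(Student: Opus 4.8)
The plan is to prove the contrapositive: if two plateaux $[i,j]$ and $[i',j']$ of $Q$ share at least one index, then $[i,j]=[i',j']$. First I would record the elementary fact that every plateau $[i,j]$ satisfies $j\ge i+2$ — otherwise $j=i+1$ and the two defining inequalities $\depth(v_i)<\depth(v_{i+1})$ and $\depth(v_{j-1})>\depth(v_j)$ would contradict one another — so the \emph{core} $\{i+1,\dots,j-1\}$ is nonempty, its vertices all share a common depth $d$, and the flanking vertices $v_i,v_j$ both have depth strictly less than $d$ (again directly from the definition). Consequently the core of a plateau is a \emph{maximal} run of consecutive vertices of $Q$ of equal depth — it cannot be extended through $v_i$ or through $v_j$ — and hence two distinct plateaux have disjoint cores.

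Next I would localize where an overlap could occur. Writing $R=\{i+1,\dots,j-1\}$ and $R'=\{i'+1,\dots,j'-1\}$ for the two cores, the disjointness of $R$ and $R'$ gives $[i,j]\cap[i',j']\subseteq\{i,j,i',j'\}$. If $i=i'$ then $i+1\in R\cap R'$, a contradiction; symmetrically $j\ne j'$; and if $i<i'$ and $j'<j$ then $\emptyset\ne R'\subseteq R$, again a contradiction. Hence, up to exchanging the two plateaux, $i<i'$ and $j<j'$, and since the intervals intersect, $i'\le j$. Then $[i,j]\cap[i',j']=\{i',i'+1,\dots,j\}$, and since this set must be contained in $\{i,j,i',j'\}$ while $i<i'\le j<j'$, we get $\{i',\dots,j\}\subseteq\{i',j\}$, i.e.\ $j\in\{i',i'+1\}$.

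It then remains to rule out these two configurations. If $j=i'+1$, then $v_{j-1}$ is a core vertex of $[i,j]$, so $\depth(v_{j-1})=d>\depth(v_j)$; but $v_{j-1}=v_{i'}$ is the left flanking vertex of $[i',j']$, so $\depth(v_{i'})<\depth(v_{i'+1})=\depth(v_j)$, a contradiction. The case $j=i'$ is the one where \Cref{cl:plateau-structure} is genuinely needed, and the only step I expect to require real care: here the two plateaux share only the index $j$, and $v_j$ is simultaneously the right flanking vertex of $[i,j]$ and the left flanking vertex of $[i',j']=[j,j']$. By \Cref{cl:plateau-structure}, $v_j$ is an out-port of the gadget at some source $s$ (applied to $[i,j]$) and an out-port of the gadget at some source $s'$ (applied to $[j,j']$); since $v_j$ lies in a unique gadget, $s=s'$, and both $\{v_i,v_j\}$ and $\{v_j,v_{j'}\}$ equal the two-element set of out-ports of the gadget at $s$. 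Therefore $v_i=v_{j'}$, which is impossible since $i\ne j'$ and $Q$ is a path. Having reached a contradiction in every case where distinct plateaux overlap, the corollary follows.
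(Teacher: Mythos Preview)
Your proof is correct and follows essentially the same approach as the paper's. The paper compresses your careful core/maximal-run analysis into the single sentence ``by the assumption on the depth, we must have $j=k$ or $i=l$'', and then finishes with the same observation: by \Cref{cl:plateau-structure} the four endpoints $v_i,v_j,v_k,v_l$ are out-ports of a single gadget, which has only two, forcing a repetition among distinct path vertices.
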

\begin{proof}
Let $[i, j]$ and $[k, l]$ be plateaux with $[i, j] \cap [k, l] \neq \emptyset$. By the assumption on the depth, we  must have $j = k$ or $i = l$. But \Cref{cl:plateau-structure} implies that $v_i$, $v_j$, $v_k$ and $v_l$ are all out-ports of the gadget at some source $s$. Since $s$ has only two out-ports, $[i, j] = [k,l]$.
\cqed
\end{proof}

Let $[i, j]$ be a plateau of $Q$. By Claim \ref{cl:plateau-structure}, $v_i$ and $v_j$ are at the same depth, say at depth $d$.
If $\depth(v_k)>d$ for any $k \in \intv{1}{i-1}$, then $[i, j]$ is called a \emph{decreasing plateau}. If $\depth(v_k)>d$ for any $k\in \intv{j+1}{q}$, then $[i,j]$ is called an  \emph{increasing plateau}.

\begin{claim}\label{cl:inc-plateau}
Any plateau $[i, j]$  of $Q$ is a decreasing or increasing plateau.
\end{claim}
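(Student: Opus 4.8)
The plan is to show that a plateau $[i,j]$ of $Q$ cannot have its "peak" depth $d$ reached (as a strict maximum along $Q$) on \emph{both} sides simultaneously, and also cannot fail to be a strict maximum on at least one side; together with the definition, this forces $[i,j]$ to be increasing or decreasing. First I would invoke Claim~\ref{cl:plateau-structure}: $v_i$ and $v_j$ are the two out-ports $u_L,u_R$ of the gadget at some source $s$, with $v_iv_{i+1}$ and $v_{j-1}v_j$ ribs going to in-ports at depth $d'>d$ of descendant gadgets, and $\tau$ constant on $\{v_i,\dots,v_j\}$. In particular all of $v_{i+1},\dots,v_{j-1}$ lie in $\Int(s)$ (they are in gadgets of descendants of $s$ that are internal nodes of $B_\ell(s)$, since their depth $d'$ satisfies $i<d'<i''$ by Remark~\ref{rem:nestint}). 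So the plateau "dips into" $G_\ell(s)$.

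Next I would apply the separation Lemma~\ref{lem:separation} to the source $s$. Consider the subpath $Q_{<i}=v_1,\dots,v_{i-1}$ before the plateau and $Q_{>j}=v_{j+1},\dots,v_q$ after it. Since $v_{i-1}$ is adjacent to $v_i$, one of $v_{i-1},v_i$ is $u_L$ or lies in $N_{G_\ell(s)}[u_L]$; the key point is that $v_{i-1}$ (and hence the whole prefix $Q_{<i}$, unless it re-enters $G_\ell(s)$) is separated from $\Int(s)$ by $N_{G_\ell(s)}[u_L]$ if $v_i=u_L$, or by $N_{G_\ell(s)}[u_R]\cup\{u_L\}$ if $v_i=u_R$. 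Because $Q$ is induced and already uses $v_i,v_j$ as the two out-ports, it cannot revisit $N_{G_\ell(s)}[u_L]$ or $N_{G_\ell(s)}[u_R]$ — those closed neighborhoods only meet $V(G_\ell(s))$ in in-ports adjacent to $u_L$ or $u_R$, and $Q$ being induced limits how often it can touch them. Hence both $Q_{<i}$ and $Q_{>j}$ are confined: each stays in $\Out(s)$ together possibly with the single vertex $u_R$ (resp.\ $u_L$). In $\Out(s)$ every vertex has depth $<d$ (indeed $\le i = \depth(s)< d$, since $\Out(s)=G_\ell\setminus G_\ell(s)$ consists of nodes that are not strict descendants of $s$ at depth in $(i,i'')$; I would make this precise using Remark~\ref{rem:edge-order} and the tree structure), so $\depth(v_k)<d$ for all $k<i$ and all $k>j$ except at most one vertex among $\{v_{i-1}\}\cup\{v_{j+1}\}$ that could be the "other" out-port at depth $d$ — but that is not $>d$ either.

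This already gives more than needed: if neither side had a vertex of depth $>d$ we would in fact conclude $\depth(v_k)\le d$ for all $k$, so $[i,j]$ is both a decreasing and an increasing plateau vacuously (matching the remark that the three defining clauses are not mutually exclusive), and we are done. If some vertex of depth $>d$ occurs before $i$, then the prefix is \emph{not} confined to $\Out(s)\cup\{u_R\}$, which by the separation statement forces $v_i=u_L$ and forces the subpath from that deep vertex to $v_i$ to cross $N_{G_\ell(s)}[u_L]$; tracing this shows the deep vertex lies on the appropriate side and, symmetrically, the suffix is then confined, so no vertex of depth $>d$ occurs after $j$ — i.e.\ $[i,j]$ is a decreasing plateau. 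The mirror argument handles a deep vertex after $j$, giving an increasing plateau. The main obstacle will be the bookkeeping in this last step: carefully arguing, from "$Q$ is induced" plus Lemma~\ref{lem:separation}, that a subpath escaping $\Int(s)$ can cross the separator $N_{G_\ell(s)}[u_L]$ (or $N_{G_\ell(s)}[u_R]\cup\{u_L\}$) only in a way compatible with $v_i,v_j$ already being the two out-ports, so that the "deep part" of $Q$ lies on exactly one side of the plateau. Once that confinement is pinned down, the depth comparison with $\Out(s)$ is immediate.
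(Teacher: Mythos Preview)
Your approach rests on a misreading of the definitions that sends the argument in the wrong direction. A \emph{decreasing} plateau requires $\depth(v_k) > d$ for every $k < i$ (and \emph{increasing} requires $\depth(v_k) > d$ for every $k > j$): the vertices off the plateau must be \emph{deeper}, not shallower. So your claim that ``if neither side had a vertex of depth $>d$ \dots\ $[i,j]$ is both a decreasing and an increasing plateau vacuously'' is exactly backwards---that situation makes $[i,j]$ neither. Confining $Q_{<i}$ and $Q_{>j}$ to $\Out(s)$ is therefore not what you want; and in any case $\Out(s)$ is not contained at depth $<d$, since it also contains every gadget strictly below the leaves of $B_\ell(s)$. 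A related slip: the ribs $v_iv_{i+1}$ and $v_{j-1}v_j$ land at in-ports of \emph{leaves} of $B_\ell(s)$, so $v_{i+1},\dots,v_{j-1}$ sit in a leaf gadget, not in $\Int(s)$.

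More decisively, you never invoke the ambient hypothesis of Lemma~\ref{lem:Q-special-size} that no source is $Q$-special, and without it the claim is false. That hypothesis is the engine of the paper's argument: one first checks that $v_{i-1}$ and $v_{j+1}$ are outside the gadget at $s$, and that $v_{i-1}v_i$, $v_jv_{j+1}$ cannot both be tree edges (their other endpoints would be adjacent connectors in the parent gadget, giving a chord). Hence one of them, say $v_{i-1}v_i$, is a rib. Then if some $v_k$ with $k<i$ had $\depth(v_k)\le d$, taking $k$ maximal one shows $\pi(v_i)=s$ is an ancestor of $\pi(v_{k+1})$, and the edge $v_kv_{k+1}$ forces either $s$ itself (when $\depth(v_k)=d$) or a shallower source $s'=\pi(v_k)$ (when $\depth(v_k)<d$) to be $Q$-special---a contradiction. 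A separation-only argument via Lemma~\ref{lem:separation} cannot substitute for this step.
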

\begin{proof}
Since $[i, j]$ is a plateau of $Q$, an application of \Cref{cl:plateau-structure} implies that $v_i$ and $v_j$ are out-ports of a same gadget $J$ at a source $s$.
In particular, $v_{i-1}$ and $v_{j+1}$ are not vertices of $J$ since otherwise, $v_{i-1}v_j$ or $v_iv_{j+1}$ would be an edge.

Hence, $v_{i-1}v_i$ and $v_jv_{j+1}$ are either tree edges or ribs.
If they are both tree edges, then $v_i$ and $v_j$ are two adjacent connectors of the gadget at the parent of $s$; a~contradiction since $Q$ is induced.
Hence  $v_{i-1}v_i$ is a rib, or $v_jv_{j+1}$ is a rib.
We show that $\depth(v_i) < \depth(v_k)$ for any $k \in \intv{1}{i-1}$ when $v_{i-1}v_i$ is a rib. The proof that $\depth(v_i) < \depth(v_k)$ for any $k \in \intv{j+1}{q}$ when $v_jv_{j+1}$ is a rib is symmetric.

Assume for the sake of contradiction that $v_{i-1}v_i$ is a rib and that there exists an index $k \in \intv{1}{i-1}$ such that $\depth(v_i) \ge \depth (v_k)$, and take $k$ maximum in $\{1, \dots, i-1\}$ with this property.
This implies that $\pi(v_i)$ is an ancestor of $\pi(v_{k+1})$ (possibly $\pi(v_i) = \pi(v_{k+1})$): indeed every edge of $G_\ell$ is between $\preceq$-comparable vertices, and since $v_{k+1}, \dots ,v_i$ is a path, some vertex among $\{\pi(v_{k+1}), \dots, \pi(v_i)\}$ is a common ancestor in $B_\ell$ of all the others (\Cref{lem:depth-root}); and since $\depth(v_i)$ is minimal, $\pi(v_i)$ is this common ancestor, hence it is an ancestor of $\pi(v_{k+1})$.
 
In the case where $\depth(v_i) > \depth(v_k)$, the vertex $v_k$ is an out-port of a gadget at a source $s'$. The node $\pi(v_i)$ is in $B_\ell(s')$ since $\depth(v_{k}) < \depth(v_i) < \depth(v_{k+1})$ and $\pi(v_{i})$ is a ancestor of $\pi(v_{k+1})$. Hence, the source $s'$ is $Q$-special; a~contradiction.
Otherwise, $\depth(v_k) = \depth(v_i)$; we know that $\pi(v_i)$ is an ancestor of $\pi(v_{k+1})$, so $v_k, v_i$ and $v_j$ are all in the gadget $J$. Since $v_i$ and $v_j$ are the two out-ports of $J$ and $\depth(v_{k+1}) > \depth(v_k)$, we have that $v_k$ is a connector of $J$, and $v_kv_{k+1}$ is a tree edge. Hence, $s$ is $Q$-special, a~contradiction.\cqed
\end{proof}

\begin{claim}\label{cl:bimon-plateau}
If $[i, j]$ is a decreasing plateau, and $[k, l]$ an increasing plateau of $Q$, then $j < k$.
\end{claim}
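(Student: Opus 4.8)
The plan is to argue by contradiction: suppose a decreasing plateau $[i,j]$ and an increasing plateau $[k,l]$ satisfy $k\le j$ instead of $j<k$. By \Cref{cor:plateau-disjoints} two plateaux are either equal or disjoint, so the first thing to do is to separate these two cases. I expect the disjoint case to be the easy, essentially arithmetic part, and the case $[i,j]=[k,l]$ (a single plateau that is simultaneously increasing and decreasing) to be the one requiring actual structural work.

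For the disjoint case: since $[i,j]$ and $[k,l]$ are disjoint intervals with $k\le j$, one necessarily has $l<i$. Let $d=\depth(v_i)=\depth(v_j)$ and $d'=\depth(v_k)=\depth(v_l)$, the depths provided by \Cref{cl:plateau-structure}. Since $[i,j]$ is a decreasing plateau and $l$ lies in $\intv{1}{i-1}$, we get $\depth(v_l)>d$, that is $d'>d$; since $[k,l]$ is an increasing plateau and $i$ lies in $\intv{l+1}{q}$, we get $\depth(v_i)>d'$, that is $d>d'$. These two inequalities contradict each other, which settles this case.

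For the remaining case the task is to rule out a plateau $[i,j]$ that is at once increasing and decreasing. The idea is that such a plateau forces too much: by the two definitions, every vertex of $Q$ other than $v_i$ and $v_j$ would have depth strictly larger than $d:=\depth(v_i)$ (the vertices before $v_i$ by the decreasing condition, those strictly inside the plateau because their depth equals the local maximum $\depth(v_{i+1})>d$, and those after $v_j$ by the increasing condition). Hence $v_i$ and $v_j$ are the unique minimum-depth vertices of $Q$, and by \Cref{cl:plateau-structure} they are the two out-ports of the gadget at a single source $s$ with $\depth(s)=d$; by \Cref{lem:depth-root} every vertex of $Q$ then has $\pi$-image a descendant of $s$. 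I would next follow the subpath of $Q$ between $v_i$ and $v_j$: by \Cref{cl:plateau-structure} and \Cref{rem:depth-decreasing} it leaves $v_i$ and arrives at $v_j$ along ribs of source $s$, so its first and last internal vertices $v_{i+1},v_{j-1}$ are in-ports of gadgets at leaves of $B_\ell(s)$; since this subpath stays among descendants of $s$ while avoiding $\pi^{-1}(s)$, a common $\prec$-ancestor of the nodes it visits must be an internal node of $B_\ell(s)$, and together with the out-port $v_i$ of $s$ this witnesses that $s$ is $Q$-special, contradicting the standing hypothesis. The main obstacle is precisely this last step: turning the informal "common ancestor is internal to $B_\ell(s)$" claim into a rigorous argument, which requires careful use of \Cref{lem:depth-root} together with the separation statements of \Cref{lem:separation} to control how $Q$ re-enters the gadget at $s$.
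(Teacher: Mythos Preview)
Your disjoint-case argument is correct and is exactly the paper's proof: the paper assumes for contradiction that an increasing plateau $[a,b]$ precedes a decreasing plateau $[c,d]$ (with $a<b<c<d$) and derives both $\depth(v_a)>\depth(v_c)$ and $\depth(v_c)>\depth(v_a)$ from the two definitions.

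You are also right that the paper does not treat the case $[i,j]=[k,l]$ explicitly, and you try to rule it out by showing that the common source $s$ would be $Q$-special. This does not work. By the plateau definition the vertices $v_{i+1},\dots,v_{j-1}$ all have the \emph{same} depth, so by \Cref{rem:edge-order} they all lie in the gadget at a single node $t$; and since $v_iv_{i+1}$ is a rib of source $s$ (\Cref{cl:plateau-structure}), that node $t$ is a leaf of $B_\ell(s)$. Hence the ``common $\prec$-ancestor'' you appeal to is just $t$, a leaf rather than an internal node, and no vertex of this subpath witnesses that $s$ is $Q$-special. The separation lemma you mention (\Cref{lem:separation}) does not help either: it controls paths from $\Int(s)$ to $\Out(s)$, but here the relevant subpath never leaves the single leaf gadget.

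In fact the equal case cannot be excluded at all. Let $s$ be any source, $t$ a leaf of $B_\ell(s)$, and take $Q=u_L,w,w',u_R$ where $u_L,u_R$ are the two out-ports of the gadget at $s$ and $w,w'$ are adjacent in-ports of the gadget at $t$ with $w\in N(u_L)\setminus N(u_R)$ and $w'\in N(u_R)\setminus N(u_L)$ (the rib pattern of \Cref{fig:ribs} provides such a pair, and $u_L,u_R$ are non-adjacent in the gadget of \Cref{fig:gadget}). Then $Q$ is induced, $[1,4]$ is a plateau that is vacuously both decreasing (as $\intv{1}{0}=\emptyset$) and increasing (as $\intv{5}{4}=\emptyset$), no source is $Q$-special, yet the conclusion $4<1$ is false. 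So the claim is literally false when the two plateaux coincide. This is harmless downstream: \Cref{cl:tau-bimon} only invokes \Cref{cl:bimon-plateau} to order \emph{distinct} plateaux in the enumeration $([a_i,b_i])_i$, and for that your (and the paper's) disjoint-case argument is all that is required.
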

\begin{proof}
Indeed, let $a < b < c < d$ such that $[a, b]$ is a increasing plateau and $[c, d]$ an decreasing one.
Then since $a < c$ we have by the decreasing property (\Cref{cl:inc-plateau}) of $[c, d]$ applied with $k := a$ and $i := c$ that $\depth(v_a) > \depth(v_c)$; and by the increasing property of $[a, b]$ applied with $k := c$ and $i := a$ that $\depth(v_c) > \depth(v_a)$; a~contradiction.\cqed
\end{proof}

\begin{claim}\label{cl:tau-bimon}
There exists a constant $c_{\ref{cl:tau-bimon}}$ (depending only on the gadget size in the construction) and an integer $\alpha \in \intv{1}{q}$ such that for any pair $1\le i< j\le q$,
\begin{itemize}
    \item if $ i < j \le \alpha - c_{\ref{cl:tau-bimon}}$, then $\tau(v_i) \ge \tau(v_j)$, and 
    \item if $\alpha + c_{\ref{cl:tau-bimon}} \le i < j$, then $\tau(v_i) \le \tau(v_j)$.
\end{itemize}
\end{claim}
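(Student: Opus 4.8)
The plan is to show that along $Q$ the value $\tau$ is non-increasing up to some index $\alpha$ and non-decreasing from $\alpha$ on, up to a bounded correction window. The first thing I would record is that $\tau$ moves slowly along $Q$: by \Cref{lem:tau-decreasing} (and the symmetric statement obtained by swapping $u$ and $v$) every edge of $G_\ell$ on which $\tau$ is not constant is a tree edge on which $\tau$ changes by exactly~$1$, so $|\tau(v_{k+1})-\tau(v_k)|\le 1$ for all $k$. I would also keep at hand that $\tau$ is constant on each plateau (\Cref{cl:plateau-structure}) and that a maximal run of indices carrying the same $\tau$-value has bounded length (\Cref{cl:tau-constant}); if $q$ itself is bounded the claim is vacuous, so we may assume $q$ is large.

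Next I would choose $\alpha$ from the plateau decomposition. By \Cref{cl:inc-plateau} every plateau of $Q$ is decreasing or increasing, and by \Cref{cl:bimon-plateau} all decreasing plateaux precede all increasing ones; so I take $\alpha$ to be an index in the transition region between the last decreasing plateau and the first increasing plateau (and an endpoint of $Q$ if one of these families is empty). Intuitively $v_\alpha$ is where $Q$ is deepest towards the root of $B_\ell$ and $\tau(v_\alpha)$ is essentially minimal along $Q$.

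The core of the argument would then be to prove that $\tau$ is non-increasing on $v_1,\dots,v_{\alpha-c}$ for a suitable constant $c=c_{\ref{cl:tau-bimon}}$, the symmetric statement on $v_{\alpha+c},\dots,v_q$ following by reversing $Q$ (which leaves $\tau$ unchanged and swaps the two kinds of plateaux). Suppose not, and pick $k+1\le\alpha-c$ with $\tau(v_k)<\tau(v_{k+1})$. Then $v_kv_{k+1}$ is a tree edge with $\tau(v_{k+1})=\tau(v_k)+1$, and analysing this step via the second half of \Cref{lem:tau-decreasing} and \Cref{rem:depth-decreasing} shows that at $v_{k+1}$ the path either descends one level of $B_\ell$ into a child gadget, or enters (through a connector) the gadget of a source $s$ of rank $\tau(v_k)$ whose subtree $Q$ is about to dive into. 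In both cases, since the last decreasing plateau occurs before $\alpha$, the subpath $v_{k+1},\dots,v_\alpha$ cannot stay deep: using \Cref{rem:intertau} to see that $\tau$ must fall below $\tau(v_{k+1})$ along it, and the choice of $\alpha$ to see that this fall is not a postponed increasing plateau, one finds that $v_{k+1},\dots,v_\alpha$ crosses back out of the region pinned down by $v_{k+1}$. By \Cref{lem:separation} and \Cref{rem:source-sep}, such a crossing forces $Q$ to meet an out-port of some source $s'$ while also meeting $\Int(s')$, i.e.\ makes $s'$ a $Q$-special source, contradicting the hypothesis on $Q$ — unless the whole detour takes place within boundedly many steps of $\alpha$, which is exactly what the constant $c$ is meant to absorb.

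The step I expect to be genuinely delicate is this last one: turning ``$\tau$ goes up at $k$ and is forced back down before $\alpha$'' into an honest $Q$-special source. Two points need care. First, there are many nodes of $B_\ell$ at each depth, so depths alone are not enough; it is crucial that an \emph{increase} of $\tau$ along an edge pins down a specific source (through the precise form of \Cref{lem:tau-decreasing}), and that one follows that source rather than its depth. Second, the separator supplied by \Cref{lem:separation} is the closed neighbourhood of an out-port, not the out-port itself, so ``crossing it'' only produces a \emph{neighbour} of an out-port; one has to upgrade this to an actual out-port at the cost of a constant number of extra steps, using the bounded size of gadgets (\Cref{rem:preimage-partition}) and \Cref{cl:tau-constant}. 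Finally, the value of $c_{\ref{cl:tau-bimon}}$ is fixed by summing the three bounded slacks involved: the length $c_{\ref{lem:tau-constant}}$ of a constant-$\tau$ run, the gadget size, and the single short ``bump'' of $\tau$ that the transition region around $\alpha$ may legitimately contain.
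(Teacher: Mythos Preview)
Your overall plan --- pick $\alpha$ from the plateau decomposition and prove $\tau$ is monotone on each side by contradiction --- is the paper's plan, and your preliminary observations ($\tau$ changes by at most~$1$ along any edge, $\tau$ is constant on plateaux) are correct and are used. The gap is in the contradiction itself: you try to follow the path globally, cross a separator from \Cref{lem:separation}, and manufacture a $Q$-special source. This does not work as written, for two reasons. First, you never justify that the subpath $v_{k+1},\dots,v_\alpha$ must leave any particular region --- your appeal to \Cref{rem:intertau} does not force $\tau$ to drop, and in fact the subpath does \emph{not} leave (see below). Second, as you yourself flag, \Cref{lem:separation} only hands you a \emph{neighbour} of an out-port, and there is no bounded-step ``upgrade'' to an actual out-port on $Q$: a neighbour of an out-port being on $Q$ does not make any source $Q$-special.

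The paper's mechanism is local and runs the other way. Since $\tau$ is constant on plateaux, the bad step $\tau(v_k)<\tau(v_{k+1})$ is not inside a plateau, so depth-monotonicity on the left of $\alpha$ gives $\depth(v_k)\ge\depth(v_{k+1})$; your case ``descends one level into a child gadget'' therefore never occurs. The second half of \Cref{lem:tau-decreasing} then yields that $s:=\pi(v_{k+1})$ is a source and $v_{k+1}$ is a connector of its gadget, while $v_k$ sits at a child of $s$, hence at an internal node of $B_\ell(s)$. Now use the hypothesis ``no source is $Q$-special'' \emph{immediately}: it forbids any out-port of $s$ from lying on $Q$. But the only edges leaving the gadget at $s$ go through out-ports (forbidden) or connectors; a connector-exit is a step where depth strictly increases, which on the left of $\alpha$ can only happen at the first index of a plateau, where \Cref{cl:plateau-structure} forces that vertex to be an out-port --- contradiction. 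Hence $v_{k+1},\dots,v_\alpha$ all lie in the single gadget at $s$, so $\alpha-(k+1)$ is at most the gadget size, and taking $c_{\ref{cl:tau-bimon}}$ equal to the gadget size finishes. In short, the contradiction is ``trapped in one gadget'', not ``forced to cross a separator''.
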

\begin{proof}
By \Cref{cor:plateau-disjoints}, there is a finite sequence of intervals $([a_i, b_i])_{1 \le i \le p}$ containing exactly the plateaux of $Q$ and such that $b_i \le a_{i+1}$ for any $i$. \Cref{cl:bimon-plateau} asserts that initial plateaux in the sequence are decreasing while final plateaux are increasing, hence there is an index $m\in \intv{1}{p}$ such that $\depth(v_{a_1}) > \dots > \depth(v_{a_m})$, and $\depth(v_{a_{m+1}}) < \dots < \depth(v_{a_p})$.

Since any local maximum of $\depth$ is in a plateau, the depth of vertices along $Q$ is  decreasing between two decreasing plateaux and increasing between two increasing plateaux.
More precisely, there is an integer $k \in \intv{b_{m}}{a_{m+1}}$ such that for any $i \in \intv{1}{k-1}$, either $\depth(v_i) \ge  \depth(v_{i+1})$ or $i$ and $i+1$ are in some decreasing plateau; and symmetrically for $i \in \intv{k+1}{q}$ either $\depth(v_i) \le \depth(v_{i+1})$ or $i$ and $i+1$ are in some increasing plateau. 

We now prove that $\tau$ is decreasing on $v_1 ,\dots, v_{k-c_{\ref{cl:tau-bimon}}}$ (where $c_{\ref{cl:tau-bimon}}$ is the size of a gadget) and that $\tau$ is increasing on $v_{k+c_{\ref{cl:tau-bimon}}}, \dots ,v_{q}$. Note that proving the increasing part proves the decreasing part by reversing the order of $Q$.

The proof goes by contradiction. Let us assume that there is a (smallest) integer $i \in \intv{k+c_{\ref{cl:tau-bimon}}}{q-1}$ such that $\tau(v_{i}) > \tau(v_{i+1})$, and let $s = \pi(i)$.
Since $\tau$ is constant on plateaux, $i$ and $i+1$ are not both in some plateau, hence $\depth(v_i) \le \depth(v_{i+1})$.

By Lemma~\ref{lem:tau-decreasing} and since $\depth(v_i)\leq \depth(v_{i+1})$, the node $s$ is a source, $v_iv_{i+1}$ is a tree edge so $v_i$ is a connector of the gadget at $s$ and $v_{i+1}$ is an out-port of the gadget at a child of $s$. Note that, since $s$ is not $Q$-special, no out-ports of $s$ are in $Q$.

We now look at the part of the path between $v_k$ and $v_i$.

Assume there exists a largest integer $i' \in \intv{k+1}{i}$  such that $\pi(v_{i'}) \neq \pi(v_i)$. The edge $v_{i'}v_{i'+1}$ is either a tree edge or a rib, but $Q$ does not contain an out-port of $s$, hence $v_{i'+1}$ is a connector of $s$. This implies $\depth(v_{i'}) > \depth(v_{i'+1})$. Hence, there exists a plateau $[a, b]$ containing $i'$ and $i'+1$. This is impossible by \Cref{cl:plateau-structure} since $v_{i'+1}$ is not an out-port.

Hence such an $i'$ does not exist, thus $\pi(v_k) = \dots = \pi(v_i)$, and the gadget at $\pi(v_k)$ contains $c_{\ref{cl:tau-bimon}}+1$ vertices; a~contradiction.\cqed
\end{proof}

We are now ready to conclude the proof of Lemma \ref{lem:Q-special-size}. Recall that the function $\tau$ has values in $\intv{1}{\ell+1}$. By virtue of \Cref{cl:tau-bimon}, there is an integer $k$ such that $\tau$ is non-increasing on
$v_1, \dots ,v_{k-c_{\ref{cl:tau-bimon}}}$ and non-decreasing on $v_{k+c_{\ref{cl:tau-bimon}}}, \dots ,v_q$.
Furthermore, $\tau$ does not keep the same value on more than $c_{\ref{lem:tau-constant}}$ consecutive vertices (Lemma~\ref{lem:tau-constant}). 
We conclude that $Q$ has order at most $2c_{\ref{lem:tau-constant}} (\ell+1) + 2c_{\ref{cl:tau-bimon}}$, which is bounded from above by $c\cdot \ell$ for some constant~$c$, as $\ell\geq 1$. This concludes the proof of Lemma \ref{lem:Q-special-size}.
\end{proof}

We now show that if an induced path $Q$ visits some out-port of the gadget at a source~$s$, as well as a gadget at an internal node of $B_\ell(s)$, then a certain suffix of $Q$ remains inside $V(G_\ell(s))$. 

\begin{lemma}\label{lem:source-tau}
Let $Q=v_1,\dots, v_q$ be an induced path of $G_\ell$ such that $v_1$ is an out-port of the gadget at a source $s$. If $s$ is a $Q$-special source, then:
\begin{itemize}
    \item\label{it:reducing-source} If $Q$ does not contain the second out-port of $s$, then $V(Q) \subseteq V(G_\ell(s))$.
    \item\label{it:bi-special} Otherwise, let $j > 1$ such that $v_j$ is the second out-port of $s$, and consider the two subpaths  $Q_1=v_1,\ldots,v_j$ and $Q_2=v_j,\ldots,v_q$ of $Q$. Then, $s$ is not $Q_1$-special, and $s$ is $Q_2$-special (and in particular $V(Q_2) \subseteq V(G_\ell(s))$, by the first item).
\end{itemize}
\end{lemma}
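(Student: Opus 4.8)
The key tool is Lemma~\ref{lem:separation}, which controls how induced paths can escape $G_\ell(s)$. Write $u_L = v_1$ for the out-port of the gadget at $s$ through which $Q$ starts, and let $u_R$ denote the second out-port of that gadget. Since $s$ is $Q$-special, $Q$ contains a vertex $w$ with $\pi(w)$ an internal node of $B_\ell(s)$, i.e.\ $w \in \Int(s)$.

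\emph{First item (reducing source).} Assume $Q$ avoids $u_R$. I would argue that every vertex of $Q$ lies in $V(G_\ell(s))$. Suppose not, and let $v_k \in \Out(s)$ be a vertex of $Q$; since $w = v_m \in \Int(s)$ for some $m$, the subpath of $Q$ between $v_m$ and $v_k$ is a path from $\Int(s)$ to $\Out(s)$. By the first item of Lemma~\ref{lem:separation}, the set $N_{G_\ell(s)}[u_L]$ separates $\Out(s)\cup\{u_R\}$ from $\Int(s)$, so this subpath must contain a vertex $x$ of $N_{G_\ell(s)}[u_L]$. The key point is that $x \neq u_L$: indeed $u_L = v_1$ is an endpoint of $Q$, and the subpath between $v_m$ and $v_k$ does not contain $v_1$ unless $v_1 \in \{v_m,\dots,v_k\}$, which (since $v_1$ is the first vertex of $Q$) forces $v_1 = v_m$ or the subpath to be all of $Q$ up to $v_k$; in that case we would reach $v_1$ through an interior vertex of the subpath, contradicting that $v_1$ is an endpoint. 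Hence $x \in N_{G_\ell(s)}(u_L)$, i.e.\ $x$ is a neighbor of $u_L = v_1$. But $Q$ is an induced path and $v_1$ is its first vertex, so the only vertex of $Q$ adjacent to $v_1$ is $v_2$; thus $x = v_2$. Now $v_2$ being a neighbor of $u_L$ in $G_\ell(s)$ means, by the gadget structure (cf.\ the description of in-ports/out-ports and Lemma~\ref{lem:separation}), that $v_2 \in \Int(s)$ or $v_2$ lies on the "inner" side. Following the path from $v_2$ onward towards $v_k \in \Out(s)$, it must cross $N_{G_\ell(s)}[u_L]$ again at some \emph{interior} vertex of $Q$, and since that vertex cannot be $u_L = v_1$ and cannot be a further neighbor of $v_1$ in an induced path, we get a contradiction. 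Therefore no such $v_k$ exists and $V(Q)\subseteq V(G_\ell(s))$.

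\emph{Second item (bi-special source).} Now suppose $Q$ does contain $u_R = v_j$ for some $j>1$, and split $Q$ at $v_j$ into $Q_1 = v_1,\dots,v_j$ and $Q_2 = v_j,\dots,v_q$. First I would show $s$ is not $Q_1$-special: $Q_1$ already uses \emph{both} out-ports $u_L=v_1$ and $u_R=v_j$ of the gadget at $s$, and $v_j$ is its last vertex. If $s$ were $Q_1$-special, $Q_1$ would contain some vertex of $\Int(s)$; but then within $Q_1$ we would have a path from $\Int(s)$ to the endpoint $v_1 = u_L$ and also $u_R=v_j$ present, and by the second item of Lemma~\ref{lem:separation} the set $N_{G_\ell(s)}[u_R]\cup\{u_L\}$ separates $\Out(s)$ from $\Int(s)$, while by the first item $N_{G_\ell(s)}[u_L]$ separates $\Int(s)$ from $u_R$; combining these with the fact that $Q_1$ is an induced path whose only neighbor of $v_1$ is $v_2$ and whose only neighbor of $v_j$ is $v_{j-1}$ forces $Q_1$ to be confined in a way incompatible with reaching $\Int(s)$ — more precisely, the $\Int(s)$-vertex would have to be reached through $v_2 \in N(u_L)$ and the path would have to exit $\Int(s)$ again only through $N(u_L)$ or $N(u_R)$, using up interior vertices adjacent to $v_1$ or $v_j$, contradiction. (This is essentially the same crossing argument as in the first item.) So $s$ is not $Q_1$-special, and since $w \in \Int(s)$ lies on $Q$ but not on $Q_1$, it lies on $Q_2$, so $Q_2$ contains both the out-port $v_j = u_R$ of the gadget at $s$ and the $\Int(s)$-vertex $w$: by definition $s$ is $Q_2$-special. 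Finally, $Q_2$ is an induced path starting at the out-port $v_j$ of the gadget at $s$, it is $Q_2$-special, and it does not contain the \emph{other} out-port $u_L$ (which lies on $Q_1\setminus\{v_j\}$, disjoint from $Q_2\setminus\{v_j\}$); hence the first item applies to $Q_2$ and gives $V(Q_2)\subseteq V(G_\ell(s))$.

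\emph{Main obstacle.} The delicate point throughout is handling the endpoint vertices $v_1$ and $v_j$ when applying the separators of Lemma~\ref{lem:separation}: the separators include $u_L$ or $u_R$ themselves, so a crossing subpath is allowed to pass through these vertices, and one must use that $Q$ is \emph{induced} (so each of $v_1$, $v_j$ has a unique neighbor along $Q$) together with the fact that they are endpoints of the relevant subpaths to rule out degenerate crossings. I would organize the argument so that the core "an induced path from $\Int(s)$ to $\Out(s)$ must pass through a non-endpoint neighbor of $u_L$, which is impossible" lemma is proved once and reused for both the first item and the non-$Q_1$-special claim. The rest is bookkeeping about which out-port lies on which subpath.
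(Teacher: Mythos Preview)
Your plan is the same as the paper's: invoke Lemma~\ref{lem:separation} and use that $v_1$ has a unique $Q$-neighbour. Two points need attention.

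First, a minor one: you set $u_L:=v_1$, but Lemma~\ref{lem:separation} is \emph{asymmetric} in the two out-ports. If $v_1$ happens to be the right out-port, the relevant separator is $N_{G_\ell(s)}[u_R]\cup\{u_L\}$, not $N_{G_\ell(s)}[u_L]$. In the first item this is harmless (the missing out-port is absent from $Q$ anyway), but in the second item you must treat both cases, as the paper does with its ``(resp.)'' clause.

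Second, and this is a genuine gap: once you correctly deduce $x=v_2$, your ``the path must cross $N_{G_\ell(s)}[u_L]$ again'' step does not work. The subpath from $v_m$ to $v_k$ already meets the separator at $v_2$; nothing forces a second crossing, and the separator is not asserted to be disjoint from $\Int(s)$. What closes the argument is a structural observation you do not make: every neighbour in $G_\ell(s)$ of an out-port of the gadget at $s$ lies either in that same gadget or at a \emph{leaf} of $B_\ell(s)$ (via a rib), never at an internal node. Hence $v_2\in N_{G_\ell(s)}(v_1)$ gives $v_2\notin\Int(s)$, and of course $v_2\notin\Out(s)$ since $N_{G_\ell(s)}(v_1)\subseteq V(G_\ell(s))$. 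So $v_2$ is neither $v_m$ nor $v_k$, forcing $\min(m,k)<2$, which contradicts $m,k\ge 2$. The same one-line fact finishes the non-$Q_1$-special claim in the second item (where you again fall back on the faulty ``cross again'' picture). The paper's proof is admittedly terse at this same spot, but it does not commit to the incorrect iteration; you should replace that step by the neighbourhood observation above.
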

\begin{proof}
Recall that the definition of $\Int(s)$ and $\Out(s)$ is given shortly before the statement of \Cref{lem:separation}.
Recall also that $s$ has two out-ports: the left one that we denote by $u_L$ and the right one that we refer to as $u_R$.
Let $i\in \intv{1}{q}$ be such that $v_i \in \Int(s)$, which exists since $s$ is $Q$-special.

Assume first that $Q$ does not contain the second out-port of the gadget at the source~$s$. By \Cref{lem:separation}, any subpath of $Q$ from $v_i$ to $\Out(s)$ intersects a neighbor of $v_1$. Since $Q$ is induced, $Q$ cannot have a vertex in $\Out(s)$, hence $V(Q) \subset V(B_\ell(s))$.

Suppose now that there is an index $j\in \intv{1}{q}$ such that $v_j$ is the second out-port at the source $s$. If $1 < i < j$, then either $v_1$ or $v_j$ is the left out-port $u_L$.
Assume $v_1 = u_L$ (resp. assume $v_j = u_L$). Then, by \Cref{lem:separation}, $N(v_1)$ (resp.\ $N(v_j)$) separates $v_i$ from $\Out(s) \cup\{ u_R\}$, so $v_j$ cannot be $u_R$ (resp.\ $v_1$ cannot be $u_R$) otherwise $Q$ is not induced; a contradiction towards the definitions of $v_1$ and $v_j$. Hence, we have $1 < j < i$ and the second item is proved.
\end{proof}

In the setting of \Cref{lem:source-tau}, if $Q$ does not contain the second out-port of $s$, then we say that  $s$ is a $Q$-\emph{reducing source}. In this case, \Cref{lem:source-tau} states that $V(Q) \subseteq V(G_\ell(s))$.

\begin{lemma}\label{lem:reducing-source-structure}
Let $Q = v_1, \dots, v_q$ be an induced path of $G_\ell$, and for any $1\le i\le q$, define the subpath $Q_i=v_i,\ldots,v_q$ of $Q$. For any $a \in \intv{1}{\ell}$, there is at most one index $1\le i\le q$ such that $\pi(v_i)$ is a $Q_i$-reducing source of rank $a$.
\end{lemma}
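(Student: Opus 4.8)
The plan is to argue by contradiction: suppose there are two indices $i < i'$ such that $\pi(v_i)$ is a $Q_i$-reducing source of rank $a$ and $\pi(v_{i'})$ is a $Q_{i'}$-reducing source of rank $a$. Write $s = \pi(v_i)$ and $s' = \pi(v_{i'})$. By \Cref{lem:source-tau} applied to the induced subpath $Q_i$ (whose first vertex $v_i$ is by hypothesis an out-port of the gadget at the $Q_i$-special source $s$, and which does not contain the second out-port of $s$), we get $V(Q_i) \subseteq V(G_\ell(s))$; similarly $V(Q_{i'}) \subseteq V(G_\ell(s'))$. Since $i' > i$, the vertex $v_{i'}$ lies in $V(Q_i) \subseteq V(G_\ell(s))$, and moreover $v_{i'}$ is an out-port of $s'$ while $Q_{i'}$ reaches an internal node of $B_\ell(s')$ (as $s'$ is $Q_{i'}$-special). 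So $G_\ell(s')$ meets the interior of itself inside $G_\ell(s)$.

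The key point I would extract is a containment statement: because $s'$ is a source whose gadget has an out-port appearing in $Q_i$, and $Q_i$ cannot leave $G_\ell(s)$, the source $s'$ must be a node of $B_\ell(s)$; and since both $s$ and $s'$ have rank $a$, the subtree $B_\ell(s')$ has the same depth $h(a)$ as $B_\ell(s)$. Two nodes of the same rank with one lying in the subtree rooted at the other: I would argue, using \Cref{rem:source-sep} or \Cref{lem:separation}, that $s'$ cannot be $s$ itself (otherwise $v_i$ and $v_{i'}$ would both be out-ports of the same gadget, of which there are only two, and the intermediate subpath $v_i, \dots, v_{i'}$ together with the reducing property would force $Q$ to be non-induced — exactly the argument used in \Cref{cor:plateau-disjoints}), nor can $s'$ be a strict descendant of $s$, since the leaves of $B_\ell(s)$ lie at depth $\depth(s) + h(a) - 1$, so any strict descendant of $s$ in $B_\ell(s)$ has depth strictly between $\depth(s)$ and that value, leaving no room for a rank-$a$ subtree $B_\ell(s')$ (its leaves would have to go below the leaves of $B_\ell(s)$, contradicting $s' \in V(B_\ell(s))$, which in turn follows from $V(Q_{i'}) \subseteq V(G_\ell(s))$). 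This rules out all cases and yields the contradiction.

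The main obstacle I anticipate is making the step ``$s' \in V(B_\ell(s))$ and $Q_{i'}$ visits $\Int(s')$, hence $V(Q_{i'})\subseteq V(G_\ell(s))$ forces a contradiction with ranks being equal'' fully rigorous: one has to be careful about whether $v_{i'}$ being an out-port of $s'$ forces $\pi(v_{i'}) = s'$ to be in $B_\ell(s)$ (which needs $v_{i'} \in V(G_\ell(s))$, i.e. $\pi(v_{i'})$ is either $s$, an internal node of $B_\ell(s)$, or a leaf of $B_\ell(s)$ — and if it is a leaf then the depth of $B_\ell(s')$ would overshoot, while if it is $s$ itself we are in the already-excluded first case). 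I would organize this by a short case analysis on the position of $\pi(v_{i'})$ within $B_\ell(s)$ relative to depth, using that ranks determine subtree depths via $h$, and invoking \Cref{rem:nestint}(3) (non-crossing of intervals in $\cN_\ell$, hence of the associated source subtrees) to conclude that two rank-$a$ source subtrees are either disjoint or equal. The remaining bookkeeping — that $Q$ induced rules out $s' = s$ — is a direct transcription of the plateau-disjointness argument.
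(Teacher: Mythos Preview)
Your approach is essentially the same as the paper's: assume two indices $i<i'$, use \Cref{lem:source-tau} to trap $v_{i'}$ inside $V(G_\ell(s))$, hence $s'=\pi(v_{i'})\in V(B_\ell(s))$, and then argue that two rank-$a$ sources with one in the other's subtree must coincide. Your case analysis on the position of $s'$ inside $B_\ell(s)$ (internal node versus leaf) via depths and \Cref{rem:nestint} is correct and in fact supplies the justification the paper leaves implicit when it asserts $\pi(v_i)=\pi(v_j)$.

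The one place where you take a detour is the case $s'=s$. You write that this would ``force $Q$ to be non-induced'' and invoke the argument of \Cref{cor:plateau-disjoints}. That is not the right contradiction. If $s'=s$ then $v_i$ and $v_{i'}$ are the two (distinct) out-ports of the gadget at $s$, so $Q_i$ contains the second out-port of $s$; this directly contradicts the hypothesis that $s$ is $Q_i$-\emph{reducing} (by definition, $Q_i$ does not contain the second out-port). No appeal to inducedness or to plateau disjointness is needed. The paper's proof is exactly this two-line observation, and with it the whole argument fits in a short paragraph.
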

\begin{proof}
Indeed, let $a \in \{1, \dots, \ell\}$. Assume for the sake of contradiction that there exist $i < j$ such that $\pi(v_i)$ and $\pi(v_j)$ are respectively $Q_i$-reducing and $Q_j$-reducing sources of rank~$a$. By definition of a reducing source and by \Cref{lem:source-tau}, we have $V(Q_i) \subseteq V(G_\ell(\pi(v_i)))$. In particular, $v_j \in V(G_\ell(\pi(v_i)))$. But since the rank of $\pi(v_j)$ is the same as the rank of $\pi(v_i)$, and since $\pi(v_j)$ is a source, we have $\pi(v_i) = \pi(v_j)$, and thus $v_j\in Q_i$ is the second out-port of $\pi(v_i)$. This implies that $\pi(v_i)$ is not $Q_i$-reducing; a contradiction.
\end{proof}

We are now ready to prove that all induced paths in $G_\ell$ are short.

\begin{lemma}\label{lem:short}
There is a constant $c$ such that for any induced path $Q = v_1, \dots, v_q$  of $G_\ell$, $q \le c \ell^2$.
\end{lemma}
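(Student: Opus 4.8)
The plan is to combine \Cref{lem:Q-special-size}, \Cref{lem:source-tau} and \Cref{lem:reducing-source-structure}: I would exhibit a set of at most $O(\ell)$ ``landmark'' indices along $Q$ whose removal cuts $Q$ into $O(\ell)$ subpaths, each of which has no special source and is therefore of order $O(\ell)$ by \Cref{lem:Q-special-size}, giving a total of $O(\ell^2)$.

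To produce the landmarks, write $Q_i = v_i,\ldots,v_q$ and $Q_i^{\,\rev} = v_i, v_{i-1},\ldots,v_1$, and set
\[
R^+ = \{\, i : \pi(v_i)\ \text{is a}\ Q_i\text{-reducing source}\,\},\qquad
R^- = \{\, i : \pi(v_i)\ \text{is a}\ Q_i^{\,\rev}\text{-reducing source}\,\}.
\]
Every source has rank in $\intv{1}{\ell}$, so applying \Cref{lem:reducing-source-structure} to $Q$ gives $|R^+|\le \ell$, and applying it to the reversal of $Q$ gives $|R^-|\le\ell$; hence $R := R^+\cup R^-$ has size at most $2\ell$. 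The complement $\intv{1}{q}\setminus R$ is a union of at most $2\ell+1$ maximal runs of consecutive integers, and for any such run $[c,d]$ the subpath $Q^{[c,d]} := v_c,\ldots,v_d$ is an induced path of $G_\ell$.

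The key step is the claim that no source is special with respect to $Q^{[c,d]}$. Suppose not, and let $s$ be such a source: then $Q^{[c,d]}$ contains an out-port $v_a$ of the gadget at $s$ together with a vertex $v_b$ such that $\pi(v_b)$ is an internal node of $B_\ell(s)$ (note $a\neq b$ since $\pi(v_a)=s$ is the root of $B_\ell(s)$, not an internal node). We may assume $a<b$, the case $a>b$ being symmetric after replacing $Q$ by its reversal and $R^+$ by $R^-$. Now $s$ is $Q_a$-special and $v_a$ is an out-port of $s$, so \Cref{lem:source-tau} applies to $Q_a$. If $Q_a$ avoids the second out-port of $s$, then $\pi(v_a)$ is $Q_a$-reducing, so $a\in R^+$, contradicting $a\in[c,d]$ and $[c,d]\cap R = \emptyset$. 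Otherwise $Q_a$ contains the second out-port $v_{a'}$ with $a'>a$, and \Cref{lem:source-tau} asserts that $s$ is not special with respect to $v_a,\ldots,v_{a'}$; since $v_a$ is an out-port of $s$, this forces $b>a'$ (otherwise $v_a$ and $v_b$ would witness that $s$ is special there), so $a<a'<b$ and in particular $a'\in[c,d]$. Applying \Cref{lem:source-tau} to $Q_{a'}$ --- where $s$ is again special (witnessed by $v_{a'}$ and $v_b$, as $b>a'$) and $Q_{a'}$ does not contain the other out-port $v_a$ --- shows $\pi(v_{a'})$ is $Q_{a'}$-reducing, so $a'\in R^+$, again a contradiction. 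This proves the claim.

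By \Cref{lem:Q-special-size}, each $Q^{[c,d]}$ then has order at most $c_{\ref{lem:Q-special-size}}\cdot\ell$. Summing over the at most $2\ell+1$ runs and adding back the at most $2\ell$ vertices indexed by $R$ yields $q \le (2\ell+1)\,c_{\ref{lem:Q-special-size}}\,\ell + 2\ell \le c\,\ell^2$ for a suitable absolute constant $c$, using $\ell\ge 1$. The main obstacle is the claim in the previous paragraph: one has to track which suffix of $Q$ contains the ``second'' out-port of $s$, chase from $v_a$ to $v_{a'}$ in order to land exactly on a reducing source counted by $R^+$ (or $R^-$), and verify that the two orientations $a<b$ and $a>b$ are both covered by the single reversal symmetry.
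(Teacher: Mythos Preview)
Your proof is correct and uses the same ingredients as the paper's: the reducing-source indices from \Cref{lem:reducing-source-structure} serve as landmarks, the subpaths between consecutive landmarks have no special sources, and \Cref{lem:Q-special-size} bounds each such subpath by $O(\ell)$.

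The difference is purely organisational. The paper proceeds asymmetrically: it first lists the forward-reducing indices $i_1<\dots<i_p$, uses the \emph{rank ordering} of the nested sources to argue that no source is special on each interval $(i_j,i_{j+1}]$ (thus bounding $q-i_1$ by $c_{\ref{lem:Q-special-size}}\ell^2$), then runs the symmetric argument on the reversed prefix $v_{i_1},\dots,v_1$, and finally bounds the remaining middle piece separately. You instead collect forward and backward reducing indices into a single set $R=R^+\cup R^-$ and treat all gaps uniformly. Your justification of the key claim is also more direct: instead of comparing ranks, you chase from the first out-port $v_a$ to the second out-port $v_{a'}$ via \Cref{lem:source-tau} and exhibit a reducing index \emph{inside} the gap, yielding an immediate contradiction. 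This buys you a cleaner and fully symmetric write-up, at the cosmetic cost of a landmark set of size $2\ell$ rather than $\ell$.
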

\begin{proof}
Let $Q = v_1, \dots, v_q$ be an induced path of $G_\ell$.
Note that for any $i \in \intv{1}{q}$, $\pi(v_i)$ is a $Q$-special source if and only if it is either a $(v_i, \dots ,v_q)$-special source or a $(v_i, v_{i-1}, \dots ,v_1)$-special source.
By \Cref{lem:reducing-source-structure}, for any $a \in \intv{1}{\ell}$ there is at most one index $i_a$ such that $\pi(v_{i_a})$ is a $(v_{i_a}, \dots ,v_q)$-reducing source of rank $a$. Let $i_1 < \dots < i_p$ with $p \le \ell$ denote the sequence of such indices. Note that $\rank(\pi(v_{i_j})) > \rank(\pi(v_{i_{j+1}}))$: indeed by \Cref{lem:source-tau} we have $\pi(v_{i_{j+1}}) \in B_\ell(\pi(v_{i_j}))$.

It follows that for any $k\in\{i_j + 1, \dots, i_{j+1}\}$,
$\pi(v_k)$ cannot be a $(v_{i_j + 1} ,\dots, v_{i_{j+1}})$-special source:
indeed for any such $k$ we have  $v_k \in G_\ell(\pi(v_{i_j}))$ by \Cref{lem:source-tau}, and thus 
$\rank(\pi(v_{i_j})) \ge \rank(\pi(v_k))$,
hence $\pi(v_k)$ cannot be a $(v_k ,v_{k-1}, \dots, v_{i_j})$-special source,
and if $\pi(v_k)$ was a $(v_k ,\dots, v_{i_{j+1}})$-special source,
we would have \[\rank(v_k) > \rank(v_{i_{j+1}}),\]
a~contradiction to the definition of $v_{i_{j+1}}$. 
Hence by \Cref{lem:Q-special-size}, $i_{j+1} - i_j \le c_{\ref{lem:Q-special-size}} \ell$, and thus $q - i_1 \le  c_{\ref{lem:Q-special-size}} \ell^2$, and for any vertex $v_a$ with $a \le i_1$, $\pi(v_a)$ is not a $(v_a, \dots ,v_{i_1})$-special source. 

A completely symmetric argument on the path $v_{i_1}, v_{i_1 - 1}, \dots, v_1$ allows us to bound the index $j$
such that $\pi(v_{j})$ is a $(v_j, v_{j - 1}, \dots, v_1)$-special source of maximal rank, by $c_{\ref{lem:Q-special-size}} \ell^2$, and ensure that for any $v_a$ with $a > j$, $\pi(v_a)$ is not a $(v_a, v_{a - 1} ,\dots ,v_j)$-special source.
In particular, no source is $(v_j, v_{j+1}, \dots, v_{i_1})$-special, and by \Cref{lem:Q-special-size}, $|v_j, v_{j+1} ,\dots, v_{i_1}| \le c_{\ref{lem:Q-special-size}}\cdot \ell$.
We can now deduce the desired bound:
\[
|Q|
\le |v_1, \dots, v_j| + |v_j, \dots, v_{i_1}| + |v_{i_1}, \dots, v_q|
\le c_{\ref{lem:Q-special-size}} \ell^2 + c_{\ref{lem:Q-special-size}} \ell + c_{\ref{lem:Q-special-size}} \ell^2
\le 3 c_{\ref{lem:Q-special-size}} \ell^2.
\]
\end{proof}

Recall \Cref{lem:loccons}, which states that constellations are precisely the ordered forests of left and right stars whose stars can be ordered as $S_1,\ldots,S_t$, in such a way that that for any $i<j$, the center $c_i$ of $S_i$ is outside $S_j$ (meaning that $c_i$ precedes or succeeds the vertex set of $S_j$ in the ordered forest).

\medskip

We are now ready to prove \Cref{th:loglogbis}, restated below for convenience.
\loglogbis*
\begin{proof}
It suffices to show that for any ordered graph $H$ which is not a constellation and for every $\ell$, the graph $G_\ell$, ordered along some Hamiltonian path $P$, avoids $H$ as a pattern. In particular, it is enough to show that $G_\ell-E(P)$ (ordered along $P$) is a constellation.

Note that $G_\ell$ has a natural Hamiltonian path $P$ starting at the left out-port of the root gadget and ending at the right out-port of the root gadget, which does not use any rib (see Figure \ref{fig:hamgell} for an illustration). 

\begin{figure}[htb]
  \centering
    \includegraphics[scale=1.1]{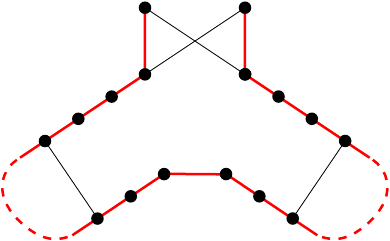}
  \caption{The Hamiltonian path $P$ in $G_\ell$ (in red). The dashed parts consist of two Hamiltonian paths defined inductively in the left and right subtrees of the root gadget.\label{fig:hamgell}}
\end{figure}

We want to show that $G_\ell-E(P)$ (ordered according to $P$) is a constellation, i.e., it is a star forest in which each star is either a right or left star, and property $(\star)$ of Lemma \ref{lem:loccons} is satisfied.

Note that $G_\ell-E(P)$ indeed consists of a union  of disjoint stars: the edges between connectors form a matching and all the other stars consist of the ribs originating from some out-port $v$, together with a single edge of the gadget containing the out-port $v$. Hence, since the ribs go from some gadget $\pi(s)\ni v$ to the vertices of $G_\ell(s)$, and since the out-ports of the gadget at $s$ are the first and last vertex of $G_\ell(s)$ visited by the Hamiltonian path $P$, $v$ is either the smallest or largest vertex of the star. So, each component of $G_\ell-E(P)$ is indeed a left or right star.

Now, consider two stars $S_1$ with center $c_1$ and $S_2$ with center $c_2$, with $c_1 \neq c_2$, and  assume that $\depth(c_1) \le  \depth(c_2)$. Then,  $c_1$ is outside of  $G_\ell(\pi(c_2))$, and thus also outside of $S_2$. This shows that if we order the stars of $G_\ell-E(P)$ by increasing depth of their centers, property $(\star)$ of Lemma \ref{lem:loccons} is satisfied, and thus $G_\ell-E(P)$ is a constellation.
\end{proof}

\section{Discussion}\label{sec:open}

By \Cref{cor:topomin}, every graph without $K_t$ as a topological minor, and which contains an $n$-vertex path, also contains an induced path of order $(\log n)^{\Omega\left (1/t(\log t)^2\right )}$. A natural question is whether the exponent can be improved. The same can also be asked more specifically for graphs without $K_t$-minor. An upper bound of  $O((\log n)^{2/(t-1)})$ was proved in \cite{esperet2017long}, holding for the more restrictive class of graphs of treewidth less than $t$.

\medskip

In \cite{I}, we have proved the following  dichotomies, revealing jumps in the growth rate of~$g_H$.
\begin{corollary}[\cite{I}]\label{cor:dicho}
Let $H$ be an ordered graph.
\begin{enumerate}
    \item $g_H(n) = n^{\Omega(1)}$ if and only if $H$ is a subgraph of a non-crossing matching, and otherwise $g_H(n) = O(\log n)$;
    \item\label{e:obip} $g_H(n) = (\log \log n)^{\Omega(1)}$ if $H$ is bipartite, with one partite set preceding the other in the order;
    \item \label{it:logloglog} $g_H(n) = (\log \log \log n)^{\Omega(1)}$ if and only if $H$ is a subgraph of the ordered half-graph, and otherwise $g_H(n) = O(1)$;
\end{enumerate}
\end{corollary}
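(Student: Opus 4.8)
These three dichotomies are established in \cite{I}; they appear in the Discussion only for context, so strictly speaking there is nothing to prove here. Here is how I would reconstruct them from the ingredients available in this paper together with \cite{I}. The workhorse is the monotonicity principle: if $H_1$ is an ordered subgraph of $H_2$, then every pair $(G,P)$ that avoids $H_1$ also avoids $H_2$, whence $g_{H_1}(n)\ge g_{H_2}(n)$; dually, exhibiting graphs with long Hamiltonian paths, bounded-order induced paths, and no copy of $H$ as a pattern yields upper bounds on $g_H(n)$. Each item then reduces to the behaviour of $g$ on a single canonical pattern (a non-crossing matching, an ordered $K_{s,s}$ with one side before the other, the ordered half-graph), plus a construction for the converse.

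For item (1): if $H$ is an ordered subgraph of a non-crossing matching $M$, then $g_H(n)\ge g_M(n)$, which is $n^{\Theta(1)}$ by the characterization in \cite{I} of the patterns forcing polynomial-size induced paths; in particular $g_H(n)=n^{\Omega(1)}$. Conversely, if $H$ is not an ordered subgraph of a non-crossing matching, then either some vertex of $H$ has degree at least $2$ --- in which case $H$ is not a matching and $g_H(n)=O(\log n)$ by the result of \cite{I} that $g_H=\omega(\log n)$ forces $H$ to be a matching --- or $H$ is a matching containing a crossing pair, in which case $g_H(n)=(\log n)^{\Theta(1)}$ by the matching dichotomy of \cite{I} (whose lower bound is now also a special case of Theorem~\ref{th:peel_omega}, since matchings are constellations). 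In both cases $g_H(n)=O(\log n)$, so it is not $n^{\Omega(1)}$.

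For item (2): a bipartite $H$ with one partite set $A$ entirely before the other set $B$ is an ordered subgraph of the complete bipartite ordered graph on two sides of size $s:=\max(|A|,|B|)$ with one side preceding the other, so by monotonicity $g_H(n)$ is at least the corresponding value of $g$, which is $(\log\log n)^{\Omega(1)}$ --- this is the pattern-avoidance form of the Galvin--Rival--Sands theorem proved in \cite{I} (the same argument that yields the $K_{t,t}$-subgraph-free bound $\Omega((\log\log n/\log\log\log n)^{1/5})$ quoted in the introduction). No converse is asserted. For item (3): if $H$ is an ordered subgraph of the ordered half-graph $\mathcal H$, then $g_H(n)\ge g_{\mathcal H}(n)=(\log\log\log n)^{\Omega(1)}$, the pattern form of the original Galvin--Rival--Sands bound \cite{galvin1982ramsey}. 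For the converse, one uses Observation~\ref{obs:d2a} to dispose of the case in which some vertex of $H$ has neighbours on both sides ($g_H$ is then bounded), and for each remaining $H$ that is still not a subgraph of $\mathcal H$ one identifies the corresponding minimal obstruction and builds a family of graphs of unbounded path length, with bounded induced paths, avoiding $H$ as a pattern; these constructions are carried out in \cite{I} and give $g_H(n)=O(1)$.

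The substantive part, and the main obstacle, is the pair of lower bounds in items (2) and (3): these are genuine Ramsey-type theorems, obtained by iterating a local neighbourhood dichotomy (twice for the $\log\log$ loss of item (2), three times for item (3)), and they are not formal consequences of the lemmas in this excerpt. Once the canonical patterns are identified, the monotonicity reductions above and the constructions needed for the ``otherwise'' clauses are routine. For the present paper, this corollary is merely a pointer: its new contributions (Theorems~\ref{th:peel_omega} and~\ref{th:loglogbis}) sharpen the picture between the polynomial and $\log\log$ regimes by pinning down constellations as the exact threshold for polylogarithmic-size induced paths.
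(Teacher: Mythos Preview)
Your proposal is correct: the paper gives no proof of this corollary at all --- it is simply quoted from \cite{I} as background in the Discussion section --- and you identify this immediately. The reconstruction sketch you supply is reasonable and consistent with what is stated in the preliminaries of the present paper (the matching dichotomy, Observation~\ref{obs:d2a}, and the $K_{t,t}$-free bound), but it goes well beyond what the paper itself does, which is nothing more than a citation.
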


The following direct consequence of Theorems  \ref{th:peel_omega} and \ref{th:loglogsqr} can now be added to the list.

\begin{corollary}\label{cor:dicho2}
Let $H$ be an ordered graph. Then, $g_H(n) = (\log n)^{\Omega(1)}$ if and only if $H$ is a constellation, and otherwise $g_H(n) = O((\log \log n)^2)$.
\end{corollary}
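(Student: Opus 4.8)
The plan is to read the corollary off from the two main theorems of the paper: Theorem~\ref{th:peel_omega} gives the ``if'' direction and Theorem~\ref{th:loglogbis} gives the ``only if'' and ``otherwise'' parts, the only glue being the elementary fact that $(\log\log n)^2 = o((\log n)^c)$ for every fixed $c>0$, so that a bound $g_H(n)=O((\log\log n)^2)$ is incompatible with $g_H(n)=(\log n)^{\Omega(1)}$.

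For the ``if'' direction, I would assume $H$ is a constellation with $t$ stars and let $r$ be the maximum of $2$ and the largest number of leaves among its stars. The point is that Theorem~\ref{th:peel_omega} is stated only for $(t,r)$-constellations, in which every star has exactly $r$ leaves, so I first pad $H$ into such an $H'$: to each star I add enough new leaves, placed immediately after the center of each right star (resp.\ immediately before the center of each left star; a $1$-star, being both a left and a right star, is padded on whichever side is consistent with a witnessing ordering for $(\star)$). Using the characterisation of Lemma~\ref{lem:loccons}, one checks that $H'$ is still a star forest of left and right stars satisfying $(\star)$ with the same ordering of its stars --- the new leaves change neither the span of any star nor the relative order of the old vertices --- so $H'$ is a $(t,r)$-constellation containing $H$ as an ordered subgraph. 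Consequently any graph with a Hamiltonian path that avoids $H$ as a pattern also avoids $H'$, whence $g_H(n)\ge g_{H'}(n)$, and Theorem~\ref{th:peel_omega} applied to $H'$ gives $g_H(n)\ge (\log_r n)^{\mu/(t(\log t)^2)}=(\log n)^{\Omega(1)}$ since $r$ and $t$ depend only on $H$. (Alternatively, when $H$ is a matching one may simply invoke the matching dichotomy of \cite{I} recalled in the introduction.)

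For the ``only if'' and ``otherwise'' parts, I would assume $H$ is not a constellation. Theorem~\ref{th:loglogbis} then gives $g_H(n)=O((\log\log n)^2)$, which is the ``otherwise'' clause, and since $(\log n)^c/(\log\log n)^2\to\infty$ for every $c>0$ this precludes $g_H(n)=(\log n)^{\Omega(1)}$; combining the two directions proves the equivalence. I expect no real obstacle here: the argument is essentially bookkeeping on top of Theorems~\ref{th:peel_omega} and~\ref{th:loglogbis}, the only step needing a small verification being that every constellation embeds as an ordered subgraph into a $(t,r)$-constellation, for which Lemma~\ref{lem:loccons} is exactly the right tool.
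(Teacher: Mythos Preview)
Your proof is correct and is exactly the approach the paper has in mind: the paper presents the corollary as an immediate consequence of Theorems~\ref{th:peel_omega} and~\ref{th:loglogbis} without further argument, and the only detail to supply is the padding of an arbitrary constellation into a $(t,r)$-constellation, which you carry out correctly via Lemma~\ref{lem:loccons}. (Two minor remarks: the paper's sentence actually cites Theorem~\ref{th:loglogsqr} rather than~\ref{th:loglogbis}, which appears to be a typo; and the degenerate case $t=1$, where the exponent in Theorem~\ref{th:peel_omega} is ill-defined, is handled by Lemma~\ref{lem:nespom} as you implicitly acknowledge.)
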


A natural question is whether the triple logarithm is necessary in \Cref{it:logloglog} of \Cref{cor:dicho}. It might very well be the case that this bound can be replaced by $(\log \log n)^{\Omega(1)}$. This would give a complete picture of the possible complexities of the function $g_H$.

\bibliographystyle{alpha}
\bibliography{references}

\begin{thebibliography}{NOdM12}

\bibitem[AV00]{arocha2000long}
Jorge Arocha and Pilar Valencia.
\newblock Long induced paths in 3-connected planar graphs.
\newblock {\em Discussiones Mathematicae Graph Theory}, 20(1):105--107, 2000.

\bibitem[DER24]{I}
Julien Duron, Louis Esperet, and Jean-Florent Raymond.
\newblock Long induced paths in sparse graphs and graphs with forbidden
  patterns.
\newblock {\em arXiv preprint, arXiv:2411.08685}, 2024.

\bibitem[DR24]{defrain2024sparse}
Oscar Defrain and Jean-Florent Raymond.
\newblock Sparse graphs without long induced paths.
\newblock {\em Journal of Combinatorial Theory, Series B}, 166:30--49, 2024.

\bibitem[ELM17]{esperet2017long}
Louis Esperet, Laetitia Lemoine, and Fr{\'e}d{\'e}ric Maffray.
\newblock Long induced paths in graphs.
\newblock {\em European Journal of Combinatorics}, 62:1--14, 2017.

\bibitem[GLM16]{GLM16}
Emilio~Di Giacomo, Giuseppe Liotta, and Tamara Mchedlidze.
\newblock Lower and upper bounds for long induced paths in 3-connected planar
  graphs.
\newblock {\em Theor. Comput. Sci.}, 636:47--55, 2016.

\bibitem[GRS82]{galvin1982ramsey}
Fred Galvin, Ivan Rival, and Bill Sands.
\newblock A {R}amsey-type theorem for traceable graphs.
\newblock {\em Journal of Combinatorial Theory, Series B}, 33(1):7--16, 1982.

\bibitem[HMST24]{HMST}
Zach Hunter, Aleksa Milojević, Benny Sudakov, and István Tomon.
\newblock Long induced paths in ${K}_{s, s}$-free graphs, 2024.

\bibitem[HR23]{hilaire2023long}
Claire Hilaire and Jean-Florent Raymond.
\newblock Long induced paths in minor-closed graph classes and beyond.
\newblock {\em Electronic Journal of Combinatorics}, 30(1):P1.18, 2023.

\bibitem[Mor38]{Mor38}
Robert~\'Edouard Moritz.
\newblock On the extended form of {Cauchy}'s condensation test for the
  convergence of infinite series.
\newblock {\em Bulletin of the American Mathematical Society}, 44:441--442,
  1938.

\bibitem[NOdM12]{nevsetvril2012sparsity}
Jaroslav Ne{\v{s}}et{\v{r}}il and Patrice Ossona~de Mendez.
\newblock {\em Sparsity}.
\newblock Springer, 2012.

\end{thebibliography}

\appendix

\section{Stars: proof of the crucial inequalities}
\label{sec:stars-proofs}

In this section, we prove \Cref{lem:mono} and \Cref{lem:bounds}, which are  crucial steps in the proof of \Cref{th:peel} about induced paths in graphs avoiding a constellation.
Recall that the functions $\varphi$, $\gamma$, $\eta$, $f$, $g$, and $h$ are given in Definitions~\ref{def:phigamma} and~\ref{def:functions}.

\medskip

We start with \Cref{lem:mono}, which we restate below.

\lemmono*

\begin{proof}
\noindent \textbf{Monotonicity of $f$.}
Since $p/2$ appears on both sides of the inequality, we can equivalently prove the following.
\begin{align}
\label{eq:f-dec}\left( \log_{r+1} n \right)^{\varphi(t)} - 4 \pow \frac{1}{\varphi(t-1) - \eta(t)} \le
\left( \log_{r+1} n \right)^{\varphi(t-1)} - 4 \pow \frac{1}{\varphi(t-2) - \eta(t-1)}.
\end{align}
Since $\varphi(t-1) > \varphi(t)$, the function
$n \mapsto \left( \log_{r+1} n \right)^{\varphi(t)} - \left( \log_{r+1} n \right)^{\varphi(t-1)}$ is decreasing.
Hence it is sufficient to prove the inequality for the lowest possible value of $n$, which is $\log_{r+1} n = 4 \pow \frac{1}{\varphi(t) \cdot(\varphi(t-1) - \eta(t))}$, and \Cref{eq:f-dec} becomes

\[
4 \pow \frac{1}{\varphi(t-1) - \eta(t)} - 4 \pow \frac{1}{\varphi(t-1) - \eta(t)} \le
4 \pow \frac{\varphi(t-1)}{\varphi(t) (\varphi(t-1) - \eta(t))} - 4 \pow \frac{1}{\varphi(t-2) - \eta(t-1)}.
\]

Hence, \Cref{eq:f-dec} boils down to
\[
\frac{1}{\varphi(t-2) - \eta(t-1)} \le
\frac{\varphi(t-1)}{\varphi(t) (\varphi(t-1) - \eta(t))}.
\]
Since $\varphi(t-1) \ge \varphi(t)$, and since $\varphi(t-2) - \eta(t-1) > \varphi(t-1) - \eta(t)$ (by \Cref{def:phigamma}), the following inequalities holds.
\[
\frac{\varphi(t-1)}{\varphi(t) (\varphi(t-1) - \eta(t))}
\ge \frac{1}{\varphi(t-1) - \eta(t)}
\ge \frac{1}{\varphi(t-2) - \eta(t-1)}
\]
And thus \Cref{eq:f-dec} holds.

\medskip

\noindent \textbf{Monotonicity of  $h$.}
We follow the exact same reasoning as for the function $f$ above.
In particular, we want to prove
\begin{align}
\label{eq:h-dec}\left( \log_{r+1} n \right)^{\eta(t)} - 4 \pow \frac{1}{\varphi(t-2) - \eta(t-1)} \le
\left( \log_{r+1} n \right)^{\eta(t-1)} - 4 \pow \frac{1}{\varphi(t-3) - \eta(t-2)}.
\end{align}

Recall that according to \Cref{def:phigamma}, $\varphi(t-2) - \eta(t-1) > \varphi(t-1) - \eta(t)$ and that $\eta(t) > \varphi(t)$. Hence, our assumption on $n$ implies $\log_{r+1} n \ge 4 \pow \frac{1}{\eta(t) \cdot(\varphi(t-2) - \eta(t))}$. Thus, it is sufficient to prove the following inequality to prove \Cref{eq:h-dec}.

\begin{align*}
4 \pow \frac{1}{\varphi(t-2) - \eta(t)} - 4 \pow \frac{1}{\varphi(t-2) - \eta(t-1)}\hspace{6cm}\\
\hspace{5cm}\le
4 \pow \frac{\eta(t-1)}{\eta(t) (\varphi(t-2) - \eta(t-1))} - 4 \pow \frac{1}{\varphi(t-3) - \eta(t-2)}.
\end{align*}
Since $\eta(t-1) \ge \eta(t)$ (by \Cref{def:phigamma}), the left-hand side is negative and thus in order to prove \Cref{eq:h-dec} it is sufficient to show
\[
\frac{1}{\varphi(t-3) - \eta(t-2)} \le
\frac{\eta(t-1)}{\eta(t) (\varphi(t-2) - \eta(t-1))}.
\]
Since $\eta(t-1) \ge \eta(t)$, and since $\varphi(t-3) - \eta(t-2) > \varphi(t-2) - \eta(t-1)$, the following inequalities holds.
\[
\frac{\eta(t-1)}{\eta(t) (\varphi(t-2) - \eta(t-1))}
\ge \frac{1}{\varphi(t-2) - \eta(t-1)}
\ge \frac{1}{\varphi(t-3) - \eta(t-2)},
\]
and thus \Cref{eq:h-dec} holds.

\medskip

\noindent \textbf{Monotonicity of  $g$.}
We will prove the equivalent inequality
\begin{align}
\label{eq:g-dec} \frac{g(n, t-1, p)}{g(n, t, p)} \ge 1.
\end{align}
Let $\ell = \log_{r+1} n$. \Cref{eq:g-dec} is true if and only if $\log_{6(r+1)}\left(\frac{g(n, t-1, p)}{g(n, t, p)}\right) \ge 0$, which after simplification gives the following.
\begin{align}
\label{eq:g-dec2} \ell^{\gamma(t)} \cdot (3 \ell^{\varphi(t)} - p) \ge \ell^{\gamma(t-1)} \cdot (3 \ell^{\varphi(t-1)} - p).
\end{align}
Recall that we assume 
$p \leq 2 \cdot \ell^{\varphi(t)}$.
Since $\gamma(t) > \gamma(t-1)$, the map $p \mapsto p \cdot \left(\ell^{\gamma(t)} - \ell^{\gamma(t-1)} \right)$ is increasing. Hence, it is sufficient to prove the inequality when $p=2\ell^{\varphi(t)}$, i.e., to show
\[
\ell^{\gamma(t) + \varphi(t)} \ge \ell^{\gamma(t-1)} (3\ell^{\varphi(t-1)} - 2\ell^{\varphi(t)}).
\]
From \Cref{def:phigamma} we have $\varphi(t) > \varphi(t-1)$ so the above inequality is implied by the following, where we replaced $2\ell^{\varphi(t)}$ with $2\ell^{\varphi(t-1)}$:
\[
\ell^{\gamma(t) + \varphi(t)} \ge \ell^{\gamma(t-1) + \varphi(t-1)}.
\]
This holds since $\gamma(t) + \varphi(t) \ge \gamma(t-1) + \varphi(t-1)$, as can be deduced from \Cref{def:phigamma}.
\end{proof}

Before proving \Cref{lem:bounds}, we will need two preliminary lemmas.

\begin{lemma}\label{lem:log inequality}
For any $c_0 \in (0,1)$, $c_1>0$, $\ell \ge \max\{1,c_1^{1/(1-c_0)}\}$, and $x \leq 1 - c_0 - \log_\ell (2c_1)$ we have $(\ell - c_1 \cdot \ell^{c_0})^x \geq \ell^x - 1/2$.
\end{lemma}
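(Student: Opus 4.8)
The plan is to factor $\ell^x$ out of the left-hand side and reduce everything to two clean consequences of the hypotheses. Set $u := c_1\ell^{c_0-1}$, so that $\ell - c_1\ell^{c_0} = \ell(1-u)$ and hence $(\ell - c_1\ell^{c_0})^x = \ell^x(1-u)^x$. The hypothesis $\ell \ge c_1^{1/(1-c_0)}$ is precisely $c_1 \le \ell^{1-c_0}$, i.e.\ $u \le 1$, which in particular guarantees $\ell(1-u) \ge 0$ so that the expression is well defined. Rewriting the hypothesis $x \le 1-c_0-\log_\ell(2c_1)$ by exponentiating base $\ell$ (legitimate since $\ell>1$), it is equivalent to $2c_1\,\ell^{x} \le \ell^{1-c_0}$, that is $c_1\,\ell^{x+c_0-1} \le 1/2$, i.e.\ \[\ell^x u \le 1/2.\] These two facts, $u\in[0,1]$ and $\ell^x u \le 1/2$, are all the argument will use.

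Next I would split on the sign of $x$. If $x \le 0$, then $t \mapsto t^x$ is non-increasing on $(0,\infty)$ and $0 < \ell(1-u) \le \ell$, so $(\ell - c_1\ell^{c_0})^x \ge \ell^x \ge \ell^x - 1/2$; the degenerate case $u=1$ occurs only when $\ell = c_1^{1/(1-c_0)}$, which by the second hypothesis forces $x<0$, and is covered by the convention $0^x = +\infty$. Now suppose $0 < x$. Since $\ell>1$, the second hypothesis forces $x<1$ in the regime relevant here: $\log_\ell(2c_1) \ge 0 > -c_0$ as soon as $c_1 \ge 1/2$, and then $x \le 1-c_0-\log_\ell(2c_1) < 1$ — and in every application of the lemma in this paper one has $c_1 \ge 4$. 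Given $0 < x \le 1$ and $0 \le 1-u \le 1$ we have the elementary bound $(1-u)^x \ge 1-u$ (equivalently $t^x \ge t$ for $t\in[0,1]$, $x\in[0,1]$), whence \[\ell^x - (\ell - c_1\ell^{c_0})^x = \ell^x\bigl(1-(1-u)^x\bigr) \le \ell^x\bigl(1-(1-u)\bigr) = \ell^x u \le 1/2,\] which rearranges to the claimed inequality.

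There is no genuine analytic obstacle: the entire content is the translation of the two hypotheses into $u\le 1$ and $\ell^x u \le 1/2$, together with $(1-u)^x \ge 1-u$. The only points that need care are keeping the base $\ell(1-u)$ nonnegative (which is exactly what the first hypothesis delivers) and recording that the exponent $x$ stays in $[0,1]$ in the regime where the lemma is used, so that the sub-additivity-type estimate $(1-u)^x \ge 1-u$ can be invoked rather than the Bernoulli bound $(1-u)^x \ge 1-xu$; the latter would cost a factor of $x$ and turn the bound $\ell^x(1-(1-u)^x)\le 1/2$ into $\le x/2$, so the constant $1/2$ in the statement really relies on $x\le 1$.
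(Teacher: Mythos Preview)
Your proof is correct and follows essentially the same approach as the paper's: factor out $\ell^x$, set $u=c_1\ell^{c_0-1}$, translate the hypotheses into $u\in[0,1]$ and $\ell^x u\le 1/2$, and conclude via $(1-u)^x\ge 1-u$. You are in fact slightly more careful than the paper, which invokes ``$x\le 1$'' without justification; your observation that this is only guaranteed when $c_1\ge 1/2$ (and that $c_1\ge 4$ in every application) correctly pinpoints an implicit assumption in the stated lemma.
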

\begin{proof}

The statement is equivalent (as $\ell$ is positive) to the following inequality: \begin{align}\label{eq:cuisine}
(1 - c_1 \cdot \ell^{c_0 - 1})^x \geq 1 - \ell^{-x}/2.
\end{align}

By assumption, $c_1^{1/(1-c_0)} \le \ell$ and thus (since $1 - c_0$ is positive) $c_1\le \ell^{1-c_0}$. It follows that $c_1\ell^{c_0-1}\le 1$ and so $1 -  c_1 \ell^{c_0  - 1} \in [0, 1]$. Hence,

\begin{align*}
    \left (1- c_1 \ell^{c_0-1}\right )^x
    &\geq 1-c_1\ell^{c_0-1} &\text{as}\ x \leq 1 \ \text{and} \ 1- c_1 \ell^{c_0-1} \in [0, 1]\\
    &\geq 1-\frac{\ell^{-(1-c_0-\log_\ell (2c_1))}}{2}&\text{as}\ \ell\ge c_1^{1/(1-c_0)}\\
    &\geq 1-\frac{\ell^{-x}}{2} &\text{as}\ x \leq 1 - c_0 - \log_\ell (2c_1) \ \text{and}\ \ell \ge  1,
\end{align*}
which is \cref{eq:cuisine}.
\end{proof}

\begin{lemma}\label{lem:lowbdstretch}
Let $r\ge 1$, $t\ge 1$, $p\ge 0$, $n\ge 1$ be integers and let $\ell = \log_{r+1} n$. If $\ell\geq 2^{1/\varphi(t)}$ then we have:
\[
s(n, t, p) \geqslant
\frac{n} {(6(r+1)) \pow \left(2 \ell^{\gamma(t-1)} \left( 3\ell^{\varphi(t-1)} - p\right) + 1 \right)}.
\]
\end{lemma}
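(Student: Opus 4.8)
The plan is to unfold the definitions of $s$ and $g$, reduce the statement to a comparison of two exponents, and finish with elementary arithmetic. Write $\ell' := \log_{r+1}(n/3) = \ell - \log_{r+1} 3 < \ell$, set $E := 2\ell^{\gamma(t-1)}\big(3\ell^{\varphi(t-1)} - p\big)$ so that the right-hand side of the lemma is exactly $n/\big((6(r+1))\pow(E+1)\big)$, and set $\mathcal{E} := 2(\ell')^{\gamma(t-1)}\big(3(\ell')^{\varphi(t-1)} - p\big)$, so that by \Cref{def:functions} one has $s(n,t,p) = \tfrac{1}{2r+1}\big(\tfrac{n/3}{(6(r+1))\pow\mathcal{E}} - 1\big)$. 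Thus the goal becomes $\tfrac{1}{2r+1}\big(\tfrac{n/3}{(6(r+1))\pow\mathcal{E}} - 1\big) \ge \tfrac{n}{(6(r+1))\pow(E+1)}$.

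The first step is to compare the two exponents, namely to show $\mathcal{E} \le E$. Since $\ell' < \ell$ and $x \mapsto x^c$ is increasing for $c>0$, each of $(\ell')^{\gamma(t-1)}$ and $(\ell')^{\varphi(t-1)}$ is at most the corresponding power of $\ell$; when $3(\ell')^{\varphi(t-1)} - p \ge 0$ both factors of $\mathcal{E}$ are then nonnegative and dominated termwise by those of $E$, giving $\mathcal{E} \le E$ at once, and when $3(\ell')^{\varphi(t-1)} - p < 0$ one has $\mathcal{E} < 0$ and checks the inequality directly by a short case distinction on the sign of $3\ell^{\varphi(t-1)} - p$. I expect this bookkeeping — controlling by how much $\log_{r+1}(n/3)$ being only a bounded amount below $\ell$ distorts the exponent — to be the only genuinely delicate point: one cannot simply invoke monotonicity of $x \mapsto 2x^{\gamma(t-1)}(3x^{\varphi(t-1)}-p)$, which is not globally increasing, and the room one has comes from $\ell \ge 2^{1/\varphi(t)}$ being enormous together with $\varphi(t-1) < 1/8$ (which follows from $\cst\,\varphi(t-1) \le 1 - \gamma(t-1)$ in \Cref{def:phigamma}). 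Granting $\mathcal{E} \le E$, and using $6(r+1) > 1$, we obtain $g(n/3, t-1, p) \ge \tfrac{n/3}{(6(r+1))\pow E}$, hence $s(n,t,p) \ge \tfrac{1}{2r+1}\big(\tfrac{n}{3\,(6(r+1))\pow E} - 1\big)$.

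It then remains to verify $\tfrac{1}{2r+1}\big(\tfrac{n}{3\,(6(r+1))\pow E} - 1\big) \ge \tfrac{n}{(6(r+1))\pow(E+1)}$. Since $3(2r+1) = 6r+3 \le 6r+6 = 6(r+1)$ and $\tfrac{1}{3(2r+1)} - \tfrac{1}{6(r+1)} = \tfrac{1}{6(r+1)(2r+1)}$, this rearranges to $n \ge (6(r+1))\pow(E+1)$, i.e.\ $\ell \ge (E+1)\log_{r+1}(6(r+1))$. As $\log_{r+1}(6(r+1)) = 1 + \log_{r+1} 6 < 4$ and, since $p \ge 0$, $E + 1 \le 6\,\ell^{\gamma(t-1)+\varphi(t-1)} + 1 \le 7\,\ell^{\gamma(t-1)+\varphi(t-1)}$, it suffices to have $\ell \ge 28\,\ell^{\gamma(t-1)+\varphi(t-1)}$, equivalently $\ell^{\,1 - \gamma(t-1) - \varphi(t-1)} \ge 28$. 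By \Cref{def:phigamma} one has $\gamma(t-1) + \varphi(t-1) \le \gamma(t-1) + \cst\,\varphi(t-1) \le \gamma(t)$ and $1 - \gamma(t) \ge \cst\,\varphi(t)$, hence $1 - \gamma(t-1) - \varphi(t-1) \ge \cst\,\varphi(t) = 8\varphi(t)$; therefore $\ell^{\,1 - \gamma(t-1) - \varphi(t-1)} \ge (\ell^{\varphi(t)})^{8} \ge 2^{8} > 28$, using $\ell \ge 2^{1/\varphi(t)}$. This closes the argument, the exponent comparison $\mathcal{E} \le E$ being the step that needs care and everything else being a routine calculation.
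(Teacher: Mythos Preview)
Your approach is essentially the paper's: unfold the definitions, pass from $\log_{r+1}(n/3)$ to $\ell$ in the exponent (your step $\mathcal{E}\le E$), and then verify $n\ge (6(r+1))^{E+1}$; the paper phrases the last part as $\log_{6(r+1)} n-1\ge E$ and uses $1-\gamma(t-1)\ge 8\varphi(t-1)$ where you use $1-\gamma(t-1)-\varphi(t-1)\ge 8\varphi(t)$, which comes to the same thing. Your final arithmetic is clean and correct.

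The one genuine gap is exactly the step you flag but do not close, the case $p>3\ell^{\varphi(t-1)}$. Your ``short case distinction'' cannot succeed there: writing $\gamma=\gamma(t-1)$, $\varphi=\varphi(t-1)$, one has $E-\mathcal{E}=6\big(\ell^{\gamma+\varphi}-(\ell')^{\gamma+\varphi}\big)-2p\big(\ell^{\gamma}-(\ell')^{\gamma}\big)$, whose dominant term for large $p$ is $-2p\big(\ell^{\gamma}-(\ell')^{\gamma}\big)<0$, so $\mathcal{E}>E$; in fact the lemma as stated is false for sufficiently large $p$ (the ratio of the two sides tends to $0$). The paper's proof has precisely the same lacuna: it replaces $\log_{r+1}(n/3)$ by $\ell$ in the exponent without comment. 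The clean repair is to add the hypothesis $p<2\ell^{\varphi(t)}$, which holds in every use of the lemma (it is assumption \eqref{cond:smallp} of \Cref{lem:bounds}). Under it, since $\varphi(t)<\varphi(t-1)$ and $\ell'/\ell\ge 1-2/\ell>0.99$ (using $\ell\ge 2^{1/\varphi(t)}\ge 2^{8}$ from $\varphi(t)<1/8$), one gets $p<2\ell^{\varphi(t)}\le 3(\ell')^{\varphi(t-1)}$, which puts you in your first case and gives $\mathcal{E}\le E$ by direct termwise comparison.
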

\begin{proof}
By \hyperref[def:functions]{definition}, we have:
\begin{align*}
s(n,t,p) &= \frac{g(n/3, t-1, p) - 1}{2r + 1} \\
    &= \frac{1}{2r+1} \cdot \left( \frac{n/3}{(6(r+1)) \pow \left(2(\log_{r+1} (n/3))^{\gamma(t-1)} \cdot (3(\log_{r+1} (n/3))^{\varphi(t-1)} - p) \right)} - 1 \right)\\
    &\geqslant \frac{1}{3 (2r+1)} \cdot \left( \frac{n}{(6(r+1)) \pow \left(2\ell^{\gamma(t-1)} \cdot (3\ell^{\varphi(t-1)} - p) \right)} - 3 \right)\\
    &=  \frac{\frac{n}{3(2r+1)} - \frac{1}{2r+1}(6(r+1)) \pow \left(2\ell^{\gamma(t-1)} \cdot (3\ell^{\varphi(t-1)} - p) \right)}{(6(r+1)) \pow \left(2\ell^{\gamma(t-1)} \cdot (3\ell^{\varphi(t-1)} - p) \right)}. 
\end{align*}
Hence it would be sufficient to prove:
\begin{align}
\label{eq:sub}
    \frac{n}{3(2r+1)} - \frac{1}{2r+1}(6(r+1)) \pow \left(2\ell^{\gamma(t-1)} \cdot (3\ell^{\varphi(t-1)} - p) \right)
    \geqslant \frac{n}{6(r+1)}.
\end{align}
Note that
$\frac{n}{3(2r+1)} - \frac{n}{6(r+1)}
= \frac{n}{(2r +1) \cdot 6(r+1)}
= \frac{1}{2r+1}(6(r+1)) \pow \left(\log_{6(r+1)} n - 1 \right)$.
So \Cref{eq:sub} can be rewritten as:
\begin{align*}
\log_{6(r+1)} n - 1 \geqslant 2\ell^{\gamma(t-1)} \cdot (3\ell^{\varphi(t-1)} - p).
\end{align*}
We now prove the above inequality.
\begin{align*}
\log_{6(r+1)} n - 1 
&= \frac{\log(r+1)}{\log(r+1) + \log 6} \log_{r+1} n- 1\\
&\geqslant \frac{1}{1 + 3} \ell - 1 &\hspace{-3cm}\text{as} \ \ell = \log_{r+1} n, \  r + 1 \geqslant 2, \ \text{and} \ \log 6 \le 3\\
&\ge \frac18 \ell &\text{as} \ \ell \ge 8,\ \text{by the assumption on}\ n\\
&\ge \frac18 \ell^{7 \cdot \varphi(t-1)} \ell^{\gamma(t-1) + \varphi(t-1)} &\\
&&\hspace{-8cm}\text{as by \Cref*{def:phigamma}.\eqref{eq:unmoins}}, \ \ell^{1 - \gamma(t-1) - \varphi(t-1)} \ge \ell^{7 \cdot \varphi(t-1)}\\
&\ge \frac{128}{8} \ell^{\gamma(t-1) + \varphi(t-1)} &\text{by assumption on $\ell$}\\
&\geqslant 2 \ell^{\gamma(t-1)} \cdot (3\ell^{\varphi(t-1)} - p). &\text{as} \ p \ge 0.
\end{align*}
This concludes the proof.
\end{proof}

We are now ready to prove \Cref{lem:bounds}, restated hereafter for convenience.

\lembounds*
\begin{proof}

For the sake of readability in the upcoming equations, we define $\ell = \log_{r+1} n$.
Note that assumption \eqref{cond:bign} implies $\ell^{\varphi(t)} \geqslant 2$.

\medskip

\noindent {\bf Proof of \eqref{eq:recursionf}.}
We want to prove \[\big(\log_{r+1} s(n, t, p)\big)^{\varphi(t)} - \tfrac{p+1}{2} - 4^{\frac{1}{ \varphi(t-1) - \eta(t)}} \geqslant (\log_{r+1} n)^{\varphi(t)} - \tfrac{p}{2} - 4^{\frac{1}{ \varphi(t-1) - \eta(t)}} - 1,\] or equivalently:
\[
(\log_{r+1} s(n, t, p))^{\varphi(t)} \geqslant \ell^{\varphi(t)} - 1/2.
\]
We have the following:
\begin{align*}
&\big(\log_{r+1} s(n, t, p)\big)^{\varphi(t)}\\
& \geqslant \left( \ell - \log_{r+1} \left( (6(r+1)) \pow \left(2 \ell^{\gamma(t-1)}
\left( 3\ell^{\varphi(t-1)} - p\right) + 1 \right) \right) \right)^{\varphi(t)} &\text{using \Cref{lem:lowbdstretch}}\\
& \geqslant \left( \ell - 6 \log_{r+1}(6(r+1))\cdot \ell^{\gamma(t-1)} \cdot \ell^{\varphi(t-1)} - \log_{r+1}(6(r+1)) \right)^{\varphi(t)} &\text{because}\ p\ge 0\\
& \geqslant
\left( \ell - 7 \log_{r+1}(6(r+1)) \cdot \ell^{\gamma(t-1) + \varphi(t-1)}\right)^{\varphi(t)} &\hspace{-1cm}\text{as } \ell^{\gamma(t-1) + \varphi(t-1)} \geqslant \ell^{\varphi(t)} \geqslant 2.
\end{align*}
We will apply \Cref{lem:log inequality} with $c_0 := \gamma(t-1) + \varphi(t-1)$, $c_1 := 7 \log_{r+1}(6(r+1))$ and $x := \varphi(t)$ to conclude the proof.
Hence we need to satisfy the requirements of the lemma:
\begin{itemize}
    \item $c_0 \in (0, 1)$ since $1 > \gamma(t) \geqslant \gamma(t-1) + \varphi(t-1) > 0$ by \Cref{def:phigamma};
    \item $\max\big(1, c_1^{1/(1 - c_0)}\big) \leqslant \ell$. For this note that $1 - c_0 = 1 - \gamma(t-1) - \varphi(t-1)\ge 7 \varphi(t-1)$ (where the last inequality follows from \Cref{def:phigamma}).
    Note also that $c_1 = 7 \log_{r+1}(6(r+1)) \leqslant 7+7\log 6\leqslant 2^5$. Hence, $c_1^{1/(1 - c_0)}\le c_1^{1/7} \leqslant 2^{\frac{5}{7}}\leqslant \ell$, as desired.
    \item $\varphi(t) \leqslant 1 - c_0 - \log_\ell(2c_1)$.  Observe that
      \[\log_\ell(2c_1) = \frac{\log(2c_1)}{\log \ell} \leqslant
      \frac{\log(2c_1)}{1/\varphi(t)} = \log(2c_1)\varphi(t)\le 6\varphi(t)\leqslant 6\varphi(t-1).\]
    Furthermore, as observed above $1 - c_0 \geqslant 7 \varphi(t-1)$, thus $    1 - c_0 - \log(2c_1) - \varphi(t) \geqslant 0$, as desired.
\end{itemize}

Hence by an application of \Cref{lem:log inequality} we have
\[
\left( \ell - 7 \log_{r+1}(6(r+1)) \cdot \ell^{\gamma(t-1) +
  \varphi(t-1)}\right)^{\varphi(t)}
=  \left( \ell - c_1 \cdot
\ell^{c_0}\right)^{\varphi(t)}  \geqslant \ell^{\varphi(t)} - 1/2.
\]

Thus, $f\big( s(n, t, p), t, p+1\big) \geqslant f(n, t, p) - 1$, as desired.

\medskip

\noindent {\bf Proof of \eqref{eq:recursionh}.}
This calculation is very similar to the previous one, replacing $\varphi(t)$ by $\eta(t)$.
We want to prove 
\[\big(\log_{r+1} s(n, t, p)\big)^{\eta(t)} + \frac{p+1}{2} - 4^{\frac{1}{ \varphi(t) - \eta(t+1)}} \geqslant (\log_{r+1} n)^{\eta(t)} + \frac{p}{2} - 4^{\frac{1}{ \varphi(t) - \eta(t+1)}},
\] or equivalently:
\[
\big(\log_{r+1} s(n, t, p)\big)^{\eta(t)} \geqslant \ell^{\eta(t)} - 1/2.
\]
Following the exact same steps, we end up applying \Cref{lem:log inequality} with $x := \eta(t)$ instead of $\varphi(t)$.
As $c_0$ and $c_1$ are unchanged, we only need to verify that $\eta(t) \leqslant 1 - c_0 - \log_\ell(2c_1)$.
As in the proof of \eqref{eq:recursionf}, we have
$\log_\ell(2c_1) \leqslant 6 \varphi(t-1)$
and $1 - c_0 \geqslant 7 \varphi(t-1)$.
Furthermore, we know that $\eta(t) \leqslant \varphi(t-1)$ (\Cref{def:phigamma}), thus
\[
1 - c_0 - \log(2c_1) - \eta(t) \geqslant 0.
\]
Hence by an application of \Cref{lem:log inequality} we have
\[
\left( \ell - c_1 \cdot \ell^{c_0}\right)^{\eta(t)}  \geqslant
\ell^{\eta(t)} - 1/2.
\]

Thus, $h\big( s(n, t, p), t, p+1\big) \geqslant h(n, t, p)$.

\medskip

\noindent {\bf Proof of \eqref{eq:recursiong}.}
This is the main constraining inequality.
Let $x = \frac{n}{g\left(s(n, t, p), t, p+1\right)}$.
In order to show \eqref{eq:recursiong} we will prove the following equivalent inequality:
\begin{equation}
\label{eq:logx}
\log_{6(r+1)} x \leqslant \log_{6(r+1)} \frac{n}{g(n, t, p)}.
\end{equation}
Recall that (by \hyperref[def:functions]{definition}) 
$
s(n,t,p) \leq g(n/3, t-1,p) \leq n.
$
Therefore
\begin{equation*}
g\big(s(n, t, p), t, p+1\big)
\geqslant \frac{s(n, t, p)}{(6(r+1)) \pow \left(2 \ell^{\gamma(t)} \cdot (3\ell^{\varphi(t)} - p - 1) \right)}.
\end{equation*}
Using \Cref{lem:lowbdstretch} we get
\begin{equation*}
g\big(s(n, t, p), t, p+1\big) \geqslant \frac{n}{(6(r+1)) \pow \big(2  \ell^{\gamma(t-1)} \cdot (3\ell^{\varphi(t-1)} - p) + 2\ell^{\gamma(t)} \cdot (3\ell^{\varphi(t)} - p - 1)  +1 \big)}.
\end{equation*}
So
\begin{equation}
\label{eq:xleq}
x \leqslant (6(r+1)) \pow \left ( 2  \ell^{\gamma(t-1)} \cdot (3\ell^{\varphi(t-1)} - p) + 2\ell^{\gamma(t)} \cdot (3\ell^{\varphi(t)} - p - 1)  +1 \right ).
\end{equation}
Notice that from the \hyperref[def:functions]{definition} of $g$ we have
\begin{equation}
\label{eq:nsurg}
\log_{6(r+1)} \frac{n}{g(n, t, p)} = 2\ell^{\gamma(t)} \cdot (3\ell^{\varphi(t)}-p).
\end{equation}
From \eqref{eq:xleq} and \eqref{eq:nsurg} we obtain
\begin{align*}
\log_{6(r+1)} x - \log_{6(r+1)} \frac{n}{g(n, t, p)}
&\leqslant 1 + 2  \ell^{\gamma(t-1)} \cdot (3\ell^{\varphi(t-1)} - p) + 2\ell^{\gamma(t)} \cdot (3\ell^{\varphi(t)} - p - 1) \\
& \quad - 2(\ell^{\gamma(t)} \cdot (3\ell^{\varphi(t)}-p))\\
&\leqslant 1 + 2 \left ( \ell^{\gamma(t-1)} \cdot (3\ell^{\varphi(t-1)} - p) - \ell^{\gamma(t)} \right )\\
&\leqslant 1 + 2 \left ( 3\ell^{\gamma(t-1) +\varphi(t-1)} -
\ell^{\gamma(t)} \right ) & \hspace{-2cm}\text{as}\ p\geq 0.
\end{align*}
Hence, for \eqref{eq:logx} to be true (i.e., for the left-hand side above to be negative) it is sufficient that $3\ell^{\gamma(t-1) + \varphi(t-1)} - \ell^{\gamma(t)} \leq -1/2$ or, equivalently, that
\begin{equation}\label{eq:demi}
\ell^{\gamma(t-1) + \varphi(t-1)} \left (3 - \ell^{\gamma(t) - \gamma(t-1) - \varphi(t-1)} \right ) \leq -1/2.
\end{equation}
Observe that we always have $\ell^{\gamma(t-1) + \varphi(t-1)}\geq 1$ and that by our assumption~\eqref{cond:bign} on $n$,
\begin{align*}
\ell^{\gamma(t) - \gamma(t-1) - \varphi(t-1)} &\geqslant 2 \pow
    {\frac{\gamma(t) - \gamma(t-1) - \varphi(t-1)}{\varphi(t)}}\\
    &\geqslant 2 \pow \frac{7\varphi(t-1)}{\varphi(t)}& \text{by \Cref{def:phigamma}}\\
    &\geqslant 3.5 & \text{as}\ \varphi(t)< \varphi(t-1)\ \text{by}\ \text{\Cref{def:phigamma}}
\end{align*}

So \eqref{eq:demi} holds. As a consequence \eqref{eq:logx} holds, as desired. This concludes the proof of \eqref{eq:recursiong}.

\medskip

\noindent {\bf Proof of \eqref{eq:middlef}.}
We want to prove
\[
(\log_{r+1} (n/3))^{\varphi(t-1)} - \tfrac{p}{2} - 4^{\frac{1}{ \varphi(t-2) - \eta(t-1)}} \geqslant (\log_{r+1} n)^{\eta(t)} + \tfrac {p}{2} - 4^{\frac{1}{ \varphi(t-2) - \eta(t-1)}}.
\]
Since $p \leqslant 2 \ell^{\varphi(t)}$ and $\log 3 \leqslant 2$, it is sufficient to prove:
\[
\ell^{\varphi(t-1)} \geqslant \ell^{\eta(t)} + 2\ell^{\varphi(t)} + 2.
\]
And since $\varphi(t) \leqslant \eta(t)$ and $\ell^{\varphi(t)} \geqslant 2$, it is sufficient to prove: 
\[
\ell^{\varphi(t-1)} \geqslant 4\ell^{\eta(t)}.
\]
By \Cref{cond:bign}, $\ell \ge 4^{\frac{1}{\varphi(t) \cdot(\varphi(t - 1) - \eta(t))}} \geqslant 4^{\frac{1}{ \varphi(t - 1) - \eta(t)}}$, hence the inequality holds.

\medskip

\noindent {\bf Proof of \eqref{eq:middleg}.}
By \Cref{lem:lowbdstretch}, it is enough to show that $$\frac{n} {(6(r+1)) \pow \left(2 \ell^{\gamma(t-1)} \left( 3\ell^{\varphi(t-1)} - p\right) + 1 \right)}\ge g(n,t,p)=\frac{n}{(6(r+1)) \pow \left(2\ell^{\gamma(t)} \cdot (3\ell^{\varphi(t)} - p)\right)},$$
which is equivalent to $$2 \ell^{\gamma(t-1)} \left( 3\ell^{\varphi(t-1)} - p\right) + 1\le 2\ell^{\gamma(t)} \cdot (3\ell^{\varphi(t)} - p).$$
Observe that the function $p\mapsto 2\ell^{\gamma(t)} \cdot (3\ell^{\varphi(t)}-p)-\left(2 \ell^{\gamma(t-1)} \left( 3\ell^{\varphi(t-1)} - p\right) + 1\right)$ is increasing (since the derivative is equal to $2\ell^{\gamma(t)}-2\ell^{\gamma(t-1)}>0$), and thus it is enough to prove the inequality above when $p=0$, that is:
\begin{eqnarray}\label{eq:pfof(8)}
2 \ell^{\gamma(t-1)}\cdot 3\ell^{\varphi(t-1)} + 1\le 2\ell^{\gamma(t)} \cdot 3\ell^{\varphi(t)}=6 \ell^{\varphi(t)+\gamma(t)}.
\end{eqnarray}
We now remark that since $\ell^{\varphi(t-1)}\ge \ell^{\varphi(t)}\ge 2$ we have
\[
2 \ell^{\gamma(t-1)} \left( 3\ell^{\varphi(t-1)} \right) + 1\le 12\ell^{\varphi(t-1)+\gamma(t-1)},\]
and therefore, the inequality in \eqref{eq:pfof(8)} is implied by \[\ell^{\varphi(t)+\gamma(t)-\varphi(t-1)-\gamma(t-1)}\ge 12/6=2.\]

Recall that by \hyperref[def:phigamma]{definition} we have $\gamma(t)-\gamma(t-1) \geq 8 \varphi(t-1)$, and thus
\[
\varphi(t)+\gamma(t)-\varphi(t-1)-\gamma(t-1)\ge \varphi(t)+7\varphi(t-1)\ge \varphi(t).
\]
It follows that \[\ell^{\varphi(t)+\gamma(t)-\varphi(t-1)-\gamma(t-1)}\ge \ell^{\varphi(t)}\ge 2,\] as desired.
\end{proof}

\end{document}